\documentclass[11pt,letterpaper]{article}
\usepackage[top=1in, bottom=1in, left=1in, right=1in]{geometry}  
\usepackage{amsmath, amsthm}  
\usepackage{mathtools}   
\usepackage{bm}
\usepackage{amsfonts}

\usepackage{graphicx} 
\usepackage{enumerate}
\usepackage[authoryear]{natbib}  
\usepackage{url} 

\usepackage{enumitem}   
\usepackage[table,xcdraw,dvipsnames]{xcolor}   
\usepackage{pdflscape}
\usepackage{afterpage}
\usepackage{capt-of}
\usepackage{adjustbox}

\usepackage[many]{tcolorbox}
\usepackage{multirow}   
\usepackage{tabularx} 

\usepackage{float}   
\usepackage{nameref}   

\usepackage[font=normalsize]{caption}  
\usepackage{subcaption}  
\usepackage[titletoc,title]{appendix}  
\usepackage{longtable} 

\setcounter{secnumdepth}{4} 

\usepackage[final]{hyperref} 
\hypersetup{
	colorlinks=true,       
	linkcolor=blue,        
	citecolor=blue,        
	filecolor=magenta,     
	urlcolor=blue         
}

\newcommand{\utwi}[1]{\mbox{\boldmath $ #1$}}

\newcommand{\ba}{{\utwi{a}}}

\newcommand{\bb}{{\utwi{b}}}

\newcommand{\bff}{{\utwi{f}}}
\newcommand{\bg}{{\utwi{g}}}

\newcommand{\bq}{{\utwi{q}}}

\newcommand{\bs}{{\utwi{s}}}

\newcommand{\bu}{{\utwi{u}}}
\newcommand{\bv}{{\utwi{v}}}

\newcommand{\bx}{{\utwi{x}}}
\newcommand{\by}{{\utwi{y}}}
\newcommand{\bz}{{\utwi{z}}}
\newcommand{\bA}{{\utwi{A}}}
\newcommand{\bB}{{\utwi{B}}}
\newcommand{\bC}{{\utwi{C}}}

\newcommand{\bE}{{\utwi{E}}}

\newcommand{\bH}{{\utwi{H}}}
\newcommand{\bI}{{\utwi{I}}}

\newcommand{\bM}{{\utwi{M}}}
\newcommand{\bN}{{\utwi{N}}}

\newcommand{\bP}{{\utwi{P}}}
\newcommand{\bQ}{{\utwi{Q}}}
\newcommand{\bR}{{\utwi{R}}}

\newcommand{\bU}{{\utwi{U}}}
\newcommand{\bV}{{\utwi{V}}}
\newcommand{\bW}{{\utwi{W}}}
\newcommand{\bX}{{\utwi{X}}}
\newcommand{\bY}{{\utwi{Y}}}
\newcommand{\bZ}{{\utwi{Z}}}

\newcommand{\bPsi}{{\utwi{\mathnormal\Psi}}}
\newcommand{\bPhi}{{\utwi{\mathnormal\Phi}}}
\newcommand{\bXi}{{\utwi{\mathnormal\Xi}}}

\newcommand{\bxi}{{\utwi{\mathnormal\xi}}}


\newcommand{\bepsilon}{{\utwi{\epsilon}}}

\newcommand{\bOmega}{{\utwi{\Omega}}}
\renewcommand{\bPhi}{{\utwi{\Phi}}}
\renewcommand{\bPsi}{{\utwi{\Psi}}}
\newcommand{\bSigma}{{\utwi{\Sigma}}}

\newcommand{\bzero}{{\utwi{0}}}
\newcommand{\calA}{{\mathcal A}}

\newcommand{\calC}{{\mathcal C}}
\newcommand{\calD}{{\mathcal D}}

\newcommand{\calM}{{\mathcal M}}
\newcommand{\calN}{{\mathcal N}}
\newcommand{\calR}{{\mathcal R}}
\newcommand{\calS}{{\mathcal S}}

\newcommand{\calU}{{\mathcal U}}



\def\spacingset#1{\renewcommand{\baselinestretch}%
{#1}\small\normalsize} \spacingset{1}


\newtheorem{theorem}{Theorem}
\newtheorem{lemma}{Lemma} 

\newtheorem{proposition}{Proposition}
\newtheorem{remarkx}{Remark}

\newtheorem{condx}{Condition}
\newenvironment{condition}{\begin{condx} \em}{\end{condx}}


\DeclarePairedDelimiterX{\norm}[1]{\lVert}{\rVert}{#1}   

\newcommand{\vct}[1]{\mathbf{#1}}
\newcommand{\mtx}[1]{\mathbf{#1}}

\newcommand{\I}[1]{{\rm I \{ #1 \}}} 

\newcommand{\E}[1]{{\rm E} \left \{ #1 \right \}}    
\newcommand{\Var}[1]{{\rm Var} \{ #1 \}}
\newcommand{\Cov}[1]{{\rm Cov} \{ #1 \}}
\newcommand{\rank}[1]{{\rm rank} \left( #1 \right)}
\newcommand{\vect}[1]{{\rm vec} \left( #1 \right)}

\newcounter{question}

\renewcommand{\hat}{\widehat}
\renewcommand{\tilde}{\widetilde}
\newcommand{\mybib}{LCKrefs}
\newcommand{\mybibsty}{agsm}
\renewcommand{\baselinestretch}{1.2} 
\graphicspath{ {images/} }


\usepackage{graphicx}
\usepackage[table,xcdraw]{xcolor}
\usepackage{arydshln}  
\setlength\dashlinedash{0.5pt}
\setlength\dashlinegap{1.5pt}
\setlength\arrayrulewidth{0.7pt}


\begin{document}

\title{Multivariate Spatial-temporal Prediction on Latent Low-dimensional Functional Structure with Non-stationarity}
\author{Yi Chen \hspace{2ex} Qiwei Yao \hspace{2ex} Rong Chen}
\date{\today}
\maketitle

\begin{abstract}
Multivariate spatio-temporal data arise more and more frequently in a wide range of applications; however, there are relatively few general statistical methods that can readily use that incorporate spatial, temporal and variable dependencies simultaneously. In this paper, we propose a new approach to represent non-parametrically the linear dependence structure of a multivariate spatio-temporal process in terms of latent common factors. The matrix structure of observations from the multivariate spatio-temporal process is well reserved through the matrix factor model configuration. The spatial loading functions are estimated non-parametrically by sieve approximation and the variable loading matrix is estimated via an eigen-analysis of a symmetric non-negative definite matrix. Though factor decomposition along the space mode is similar to the low-rank approximation methods in spatial statistics, the fundamental difference is that the low-dimensional structure is completely unknown in our setting. Additionally, our method accommodates non-stationarity over space. The estimated loading functions facilitate spatial prediction. For temporal forecasting, we preserve the matrix structure of observations at each time point by utilizing the matrix autoregressive model of order one MAR(1). Asymptotic properties of the proposed methods are established. Performance of the proposed method is investigated on both synthetic and real datasets.

\vspace{1em}

\noindent \textbf{Keywords:} Multivariate spatio-temporal data; Matrix factor model; Matrix autoregressive model; L-$2$ convergence; Eigen-analysis.
\end{abstract}

\section{Introduction}


The increasing availability of multivariate data referenced over geographic regions and time in various applications has created unique opportunities and challenges for those practitioners seeking to capitalize on their full utility. For example, United States Environmental Protection Agency publishes daily from more than 20,000 monitoring stations a collection of environmental and meteorological measurements such as temperature, pressure, wind speed and direction and various pollutants. Such data naturally constitute a tensor (multi-dimensional array) with three modes (dimensions) representing space, time and variates, respectively. Simultaneously modeling the dependencies between different variates, regions, and times is of great potential to reduce dimensions, produce more accurate estimation and prediction and further provide a deeper understanding of the real world phenomenon. At the same time, methodological issues arise because these data exhibit complex multivariate spatio-temporal covariances that may involve non-stationarity and potential dependencies between spatial locations, time points and different processes. Traditionally, researchers mainly restrict their analysis to only two dimensions while fixing the third: time series analysis applied to a slice of such data at one location focus on temporal modeling and prediction (\cite{box2015time, brockwell2013time, tsay2013multivariate, fanyao2005nonlinear}); spatial statistical models for a slice of such data at one time point address spatial dependence and prediction over unobserved locations (\cite{cressie2015statisticsS}); and univariate spatio-temporal statistics concentrate on only one variable observed over space and time (\cite{cressie2015statisticsST}). 


Since physical processes rarely occur in isolation but rather influence and interact with one another, multivariate spatio-temporal models are increasingly in demand because the dependencies between multiple variables, locations and times can provide valuable information for understanding real world phenomenons. Various multivariate spatio-temporal conditional autoregressive models have been proposed by \cite{carlin2003hierarchical, congdon2004multivariate,pettitt2002conditional,zhu2005generalized,daniels2006conditionally,tzala2008bayesian}, among others. However, these methodologies cannot efficiently model high-dimensional data sets. Additionally, these approaches impose separability and various independence assumptions, which are not appropriate for many settings, as these models fails to capture important interactions and dependencies between different variables, regions, and times (\cite{stein2005space}). \cite{bradley2015multivariate} introduced a multivariate spatio-temporal mixed effects model to analyze high-dimensional multivariate data sets that vary over different geographic regions and time points. They adopt a reduced rank spatial structure (\cite{wikle2010low-rank}) and model temporal behavior via vector autoregressive components. However, their method only applies to low-dimensional multivariate observations because they model each variable separately. In addition, they assume the random effect term is common across all processes which is unrealistic especially in the case with a large number of variables. 

In this paper, we propose a new class of multivariate spatio-temporal models that model spatial, temporal and variate dependence simultaneously. The proposed model builds upon the matrix factor models proposed in \cite{wang2017factor}, while further incorporating the functional structure of the spatial process and dynamics of the latent matrix factor. The spatial dependence is model by the spatial loading functions, the variable dependence is modeled by the variable loading matrix, while the temporal dependence is modeled by the latent factors of first-order autoregressive matrix time series. 

Some spatial-factor-analysis models that capture spatial dependence through factor processes have been developed in the literature. \cite{lopes2008spatial} considers univariate observations but uses factor analysis to reduce (identify) clusters/groups of locations/regions whose temporal behavior is primarily described by a potentially small set of common dynamic latent factors. Also working with the univariate case, \cite{cressie2008fixed} successfully reduces the computational cost of kriging by using a flexible family of non-stationary covariance functions constructed from low rank basis functions. See also \cite{wikle2010low-rank}. For multivariate spatial data, \cite{cook1994some} introduced the concept of a spatially shifted factor and a single-factor shifted-lag model and \cite{majure1997dynamic} discussed graphical methods for identifying shifts. Following the ideas of multiple-lag dynamic factor models that generalize static factor models in the time series setting, \cite{christensen2001generalized, christensen2002latent, christensen2003modeling} extended the shifted-lag model to a generalized shifted-factor model by adding multiple shifted-lags and developed a systematic statistical estimation, inference, and prediction procedure. The assumption that spatial processes are second-order stationary is required for the moment-based estimation procedure and the theoretical development. Our modeling of the spatial dependence though latent factor processes is different from the aforementioned methods in that we impose no assumptions about the stationarity over space, nor the distribution of data, nor the form of spatial covariance functions. The idea is similar to that of \cite{huang2016krigings}, however we aim at estimating the spatial loading functions instead of the loading matrix and kriging at unsampled location is based on the loading function. In addition, future forecasting in our model reserves the matrix formation of the observation and temporal dependence through the matrix auto-regression of order one.  

The remainder of the article is outlined as follows. Section 2 introduces the model settings. Section 3 discusses estimation procedures for loading matrix and loading functions. Section 4 discuss the procedures for kriging and forecasting over space and time, respectively. Section 5 presents the asymptotic properties of the estimators. Section 6 illustrates the proposed model and estimation scheme on a synthetic dataset; And finally Section 7 applies the proposed method to a real dataset. Technique proofs are relegated to the Appendix.  

\section{The Model} \label{sec:model}

Consider a $p$-dimension multivariate spatio-temporal process $\by_t(\bs)=(y_{t,1}(\bs), \ldots, y_{t,p}(\bs))'$
\begin{equation} \label{eqn:stvp}
\by_t(\bs) = \bC'(s) \bz_t(\bs) + \bxi_t(\bs) + \bepsilon_t(\bs), \quad t = 0, \pm 1, \pm 2, \cdots, \; \bs \in \calS \subset \calR^2, 
\end{equation}
where $\bz_t(\bs)$ is an $m \times 1$ observable covariate vector, $\bC(s)$ is a $m \times p$ unknown parameter matrix, the additive error vector $\bepsilon_t(\bs)$ is unobservable and constitutes the nugget effect over space in the sense that
\begin{equation}
\E{\bepsilon_t(\bs)} = \vct{0}, \quad \Var{\bepsilon_t(\bs)} = \bSigma_{\bepsilon}(\bs), \quad \Cov{\bepsilon_{t_{1}}(\bu), \bepsilon_{t_{2}}(\bv)} = \mtx{0} \; \forall \; (t_{1}, \bu) \ne (t_{2}, \bv), 
\end{equation} 
$\bxi_t(\bs)$ is a $p$-dimension latent spatio-temporal vector process satisfying the condtions
\begin{equation}
\E{\bxi_t(\bs)} = \vct{0}, \quad \Cov{\bxi_{t_{1}}(\bu), \bxi_{t_{2}}(\bv)} = \bSigma_{| t_{1} - t_{2} |}(\bu, \bv).
\end{equation}

Under the above condtions, $\by_t(\bs) - \bC'(s) \bz_t(\bs)$ is seond order stationary in time $t$,
\begin{eqnarray*}
&\E{\by_t(\bs) - \bC'(s) \bz_t(\bs)} = \vct{0}, \\
&\Cov{\by_{t_1}(\bu) - \bC'(\bu) \bz_{t_1}(\bu), \by_{t_2}(\bv) - \bC'(\bv) \bz_{t_2}(\bv)} = \bSigma_{| t_{1} - t_{2} |}(\bu, \bv) + \bSigma_{\bepsilon}(\bu) \cdot \I{(t_{1}, \bu) = (t_{2}, \bv)}.
\end{eqnarray*}
Finally, we assume that $\bSigma_t(\bu, \bv)$ is continuous in $\bu$ and $\bv$. Note that model (\ref{eqn:stvp}) does not impose any stationary conditions over space, though it requires that $\by_t(\bs)$ is second order stationary in time $t$.  

We assume that the latent spatial-temporal vector process are driven by a lower-dimention latent spatial-temporal factor process, that is 
\begin{equation} \label{eqn:xi_fac}
\bxi_t(\bs) = \bB \bff_t(\bs),
\end{equation}
where $\bff_t(\bs)$ is the $r$-dimensional latent factor process ($r \ll p$) and $\bB$ is the $p \times r$ loading matrix.

Further, we assume that the latent $r \times 1$ factor process $\bff_t(\bs)$ admits a finite functional structure, 
\begin{equation} \label{eqn:fac_finstr}
\bff_t(\bs) = \sum_{j=1}^{d} a_j(\bs) \bx_{tj} ,
\end{equation}
where $a_1(\cdot), \cdots, a_d(\cdot)$ are deterministic and linear independent functions (i.e. none of them can be written as a linear combination of the others) in the Hilbert space $L_2(\calS)$, and $\bx_{tj} = \left( \bx_{tj,1}, \ldots, \bx_{tj,r} \right)$ is a $r \times 1$ random vector. Combining (\ref{eqn:xi_fac}) and (\ref{eqn:fac_finstr}), we have 
\begin{equation} \label{eqn:xi_fac_finstr}
\bxi_t(\bs) = \bB \sum_{j=1}^{d} a_j(\bs) \bx_{tj} = \bB \bX'_t \ba(\bs),
\end{equation}
where $\bX_t=\left( \bx_{t1}, \cdots, \bx_{td} \right)'$ and $\ba(\bs) = \left( a_1(\bs), \cdots, a_d(\bs) \right)'$. 

Stacking $\bxi_t(\bs)$ from $n$ locations $\bs_1, \ldots, \bs_n$ together as rows, we have a $n \times p$ matrix of $p$ signals from $n$ locations $\bXi_t = \left( \bxi_t(\bs_1), \cdots, \bxi_t(\bs_n) \right)'$. It follows from (\ref{eqn:xi_fac_finstr}) that
\begin{equation} \label{eqn:xi_fac_finstr_mat}
\bXi_t = \bA \bX_t \bB',
\end{equation} 
where  $\bA = [A_{ij}] = [a_j(\bs_i)]$, $i=1,\ldots,n$ and $j=1,\ldots,d$.

Obviously $a_1(\cdot), \cdots, a_d(\cdot)$ are not uniquely defined by (\ref{eqn:fac_finstr}) and $\bB$ is not uniquely defined by (\ref{eqn:xi_fac}). We assume that $a_1(\cdot), \cdots, a_d(\cdot)$ are orthonormal in the sense that $\langle a_j \, , a_k \rangle = \I{j=k}$ and $\bB'\bB = \bI_r$. Thus, the kernel reproducing Hilbert space (KRHS) spanned by $a_1(\cdot), \cdots, a_d(\cdot)$ and the vector space spanned by columns of $\bB$ (i.e. $\calM(\bB)$) are uniquely defined. We estimate the KRHS and $\calM(\bB)$ in this artical.

\section{Estimation} \label{sec:est}

Let $\{ \left( \by_t(\bs_i), \bz_t(\bs_i) \right), \quad i=1, \ldots, n, \quad t=1, \ldots, T  \}$ be the available observations over space and time, where $\by_t(\bs_i)$ is a vector of $p$ variables and $\bz_t(\bs_i)$ is a vector of $m$ covariates observed at location $\bs_i$ at time $t$. In this article, we restrict attention to the isotopic case where all variables have been measured at the same sample locations $\bs_i$, $i=1, \ldots, n$. 

To simplify the notation, we first consider a special case where $\bC(\bs) \equiv \bzero$ in (\ref{eqn:stvp}). Now the observations are from the process
\begin{equation} \label{eqn:stvp_c0}
\by_t(\bs) = \bxi_t(\bs) + \bepsilon_t(\bs) = \bB \bX'_t \ba(\bs) + \bepsilon_t(\bs).  
\end{equation}
Stacking $\by_t(\bs_i)$, $i = 1, \ldots, n$ together as rows, we have 
\begin{equation} \label{eqn:stvp_c0_mat}
\bY_t =  \bXi_t + \bE_t  = \bA \bX_t \bB' + \bE_t,
\end{equation} 
where $\bY_t = \left( \by_t(\bs_1), \cdots, \by_t(\bs_n) \right)$ and $\bE_t = \left( \bepsilon_t(\bs_1), \cdots, \bepsilon_t(\bs_n) \right)'$.
 
\subsection{Estimation for the Partitioned Spatial Loading Matrices $\bA_1$ and $\bA_2$}  \label{subsec:est:A1A2}

To exclude nugget effect in our estimation, we divide $n$ locations $\bs_1, \ldots, \bs_n$ into two sets $\calS_1$ and $\calS_2$ with $n_1$ and $n_2$ elements respectively. Let $\bY_{lt}$ be a matrix consisting of $\by_t(\bs)$, $\bs \in \calS_l$, $l=1,2$ as rows. Then $\bY_{1t}$ and $\bY_{2t}$ are two matrices of dimention $n_1 \times p$ and $n_2 \times p$ respectively. It follows from (\ref{eqn:stvp_c0}) that\\
\begin{equation} \label{eqn:stvp_mat_div}
\bY_{1t} =  \bXi_{1t} + \bE_{1t}  = \bA_1 \bX_t \bB' + \bE_{1t}, \qquad \bY_{2t} =  \bXi_{2t} + \bE_{2t}  = \bA_2 \bX_t \bB' + \bE_{2t},
\end{equation}
where $\bA_l$ is a $n_l \times d$ matrix, its rows are $\left( a_1(\bs), \ldots, a_d(\bs) \right)$ at diffent locations $\bs \in \calS_l$ and $\bE_{t,l}$ consists of $\bepsilon_t(\bs)$ as rows with $\bs \in \calS_l$, $l=1,2$.

For model identification, we assume $\bA'_1\bA_1 = \bI_d$ and $\bA'_2\bA_2 = \bI_d$, which however implies that $\bX_t$ in the second equation in (\ref{eqn:stvp_mat_div}) will be different from that in the first eqaution. Thus, we may rewrite (\ref{eqn:stvp_mat_div}) as
\begin{equation} \label{eqn:stvp_mat_div_1}
\bY_{1t} =  \bXi_{1t} + \bE_{1t}  = \bA_1 \bX_t \bB' + \bE_{1t}, \qquad \bY_{2t} =  \bXi_{2t} + \bE_{2t}  = \bA_2 \bX^*_t \bB' + \bE_{2t},
\end{equation}
where $\bX^*_t = \bQ \bX^*_t$ and $\bQ$ is an invertible $d \times d$ matrix. Under this assumption, $\calM(\bA_1)$ and $\calM(\bA_2)$, which are the column spaces of $\bA_1$ and $\bA_2$, are uniquely defined. 

Let $Y_{lt, \cdot j}$ be the $j$-th column of $\bY_{lt}$, $E_{lt, \cdot j}$ be the $j$-th column of $\bE_{lt}$ and $B_{j \cdot}$ be the $j$-th row of $\bB$, $l=1,2$ and $j = 1, \ldots, p$. Define spatial-cross-covariance matrix between the $i$-th and $j$-th variables as
\begin{eqnarray} \label{eqn:Omega_ij}
&\bOmega_{A,ij} & = \Cov{Y_{1t, \cdot i}, Y_{2t, \cdot j}} \nonumber \\
& & = \Cov{ \bA_1 \bX_t B_{i \cdot} + E_{1t,i}, \bA_2 \bX^*_t B_{j \cdot} + E_{2t,j }} \nonumber \\
& & = \bA_1 \Cov{ \bX_t B_{i \cdot}, \bX^*_t B_{j \cdot} } \bA_2 
\end{eqnarray} 

When $n \ll d$, it is reasonable to assume that $\rank{\bOmega_{A,ij}} = \rank{ \Cov{ \bX_t B_{i \cdot}, \bX^*_t B_{j \cdot} } } = d$. 

Define 
\begin{eqnarray} 
& \bM_{A_1} & = \sum_{i=1}^{p} \sum_{j=1}^{p} \bOmega_{A,ij} \bOmega'_{A,ij} \nonumber \\
& & = \bA_1 \left \{ \sum_{i=1}^{p} \sum_{j=1}^{p} \Cov{ \bX_t B_{i \cdot}, \bX^*_t B_{j \cdot} } \Cov{ \bX^*_t B_{j \cdot}, \bX_t B_{i \cdot} } \right \}  \bA'_1,  \label{eqn:M1} \\
& \bM_{A_2} & = \sum_{i=1}^{p} \sum_{j=1}^{p} \bOmega'_{A,ij} \bOmega_{A,ij} \nonumber \\
& & = \bA_2 \left \{ \sum_{i=1}^{p} \sum_{j=1}^{p} \Cov{ \bX^*_t B_{j \cdot}, \bX_t B_{i \cdot} } \Cov{ \bX_t B_{i \cdot}, \bX^*_t B_{j \cdot} } \right \}  \bA'_2, \label{eqn:M2}
\end{eqnarray}

$\bM_{A_1}$ and $\bM_{A_2}$ share the same $d$ positive eigenvalues and $\bM_{A_l} \bq = \bzero$ for any vector $\bq$ perpendicular to $\calM(\bA_l)$, $l=1,2$. Therefore, the columns of $\calM(\bA_l)$, $l=1,2$, can be estimated as the $d$ orthonormal eigenvectors of matrix $\bM_{A_l}$ correspond to $d$ positive eigenvalues and the columns are arranged such that the corresponding eigenvalues are in the descending order. 

Now we define the sample version of these quantities and introduce the estimation procedure. Suppose we have centered our observations $\bY_{1t}$ and $\bY_{2t}$, let $\hat{\bOmega}_{A,ij}$ be the sample cross-space covariance of $i$-th and $j$-th variables and $\hat{\bM}_{A_l}$ be the sample version of $\bM_{A_l}$, $l=1,2$, that is
\begin{equation} \label{eqn:M_hat}
\hat{\bOmega}_{A,ij} = \frac{1}{T} \sum_{t=1}^{T} Y_{1t, \cdot i} Y'_{2t, \cdot j}, \quad \hat{\bM}_{A_1} = \sum_{i=1}^{p} \sum_{j=1}^{p} \hat{\bOmega}_{A,ij} \hat{\bOmega}'_{A,ij}, \quad \hat{\bM}_{A_2} = \sum_{i=1}^{p} \sum_{j=1}^{p} \hat{\bOmega}'_{A,ij} \hat{\bOmega}_{A,ij}.
\end{equation}
A natural estimator for $\bA_l$ is defined as $\hat{\bA}_l = \{ \hat{\ba}_{l1}, \cdots, \hat{\ba}_{ld} \}$, $l=1,2$, where $\hat{\ba}_{lj}$ is the eigenvector of $\hat{\bM}_{A_l}$ corresponding to its $j$-th largest eigenvalue. However such an estimator ignores the fact that $\bxi_t(\bs)$ is continuous over the set $\calS$. 

\subsection{Estimation for the Variable Loading Matrix $\bB$} \label{subsec:est:B}

To estimate the $p \times r$ variable loading matrix $\bB$, we follow closely the method proposed by \cite{wang2017factor} and work with discrete observations of (\ref{eqn:stvp_c0}) at $n$ sampling sites. Let the vector observed at site $\bs_i$ at time $t$ be $\by_t(\bs_i)$. The temporal-cross-covariance between observations from site $\bs_i$ and $\bs_j$ for lag $h \ge 1$ is
\begin{equation}  \label{eqn:omega_B}
\bOmega_{B,ij}(h) = \Cov{\by_t(\bs_i), \; \by_{t+h}(\bs_j)} = \bB \, \Cov{\bX'_t \ba(\bs_i), \; \ba'(\bs_j) \bX_t} \, \bB'.
\end{equation}
The last equation results form the assumption that $\bX_t$ is uncorrelated with $\bE_t$ at all leads and lags and $\bE_t$ is white noise. For a pre-determined maximum lag $h_0$, define
\begin{equation}  \label{eqn:M_B}
\bM_B = \sum_{h=1}^{h_0} \sum_{i=1}^{n} \sum_{j=1}^{n} \bOmega_{B,ij}(h) \bOmega'_{B,ij}(h).
\end{equation}
By (\ref{eqn:omega_B}) and (\ref{eqn:M_B}), it follows that
\begin{equation}  \label{eqn:M_B_sandwich}
\bM_B = \bB \, \left( \sum_{h=1}^{h_0} \sum_{i=1}^{n} \sum_{j=1}^{n}  \Cov{\bX'_t \ba(\bs_i), \; \ba'(\bs_j) \bX_t} \Cov{\bX'_t \ba(\bs_j), \; \ba'(\bs_i) \bX_t} \right) \, \bB'.
\end{equation}

$\bM_B$ shares the same $r$ positive eigenvalues and $\bM_B \bq = \bzero$ for any vector $\bq$ perpendicular to $\calM(\bB)$. Therefore, the columns of $\calM(\bB)$ can be estimated as the $r$ orthonormal eigenvectors of matrix $\bM_B$ correspond to $r$ positive eigenvalues and the columns are arranged such that the corresponding eigenvalues are in the descending order. 

Define the sample version of $\bOmega_{B,ij}(h)$ and $\bM_B$ for centered observation $\bY_{t}$ as 
\begin{equation} \label{eqn:M_B_hat}
\hat{\bOmega}_{B,ij} = \frac{1}{T-h} \sum_{t=1}^{T-h} Y_{1t, \cdot i} Y'_{2\,t+h, \cdot j}, \quad \hat{\bM}_B = \sum_{h=1}^{h_0} \sum_{i=1}^{n} \sum_{j=1}^{n} \hat{\bOmega}_{B,ij} \hat{\bOmega}'_{B,ij}.
\end{equation}

A natural estimator for $\bB$ can be obtained as $\hat{\bB} = \{ \hat{\bb}_1, \cdots, \hat{\bb}_r \}$, where $\hat{\bb}_i$ is the eigenvector of $\hat{\bM}_B$ corresponding to its $i$-th largest eigenvalue. 

\subsection{Estimation for the Latent Factor Matrix $\bX_t$ and Signal Matrix $\bXi_t$} \label{subsec:est:latentfac&signal}

By (\ref{eqn:stvp_mat_div}), the estimators of two representations of the latent matrix factor $\bX_t$ are defined as
\begin{equation} \label{eqn:fac_mat_est_diff}
\hat{\bX}_t = \hat{\bA}'_1 \bY_{1t} \hat{\bB}, \qquad \hat{\bX}^*_t = \hat{\bA}'_2 \bY_{2t} \hat{\bB}. 
\end{equation}
The latent signal process are estimated by 
\begin{equation} \label{eqn:signal_mat_est_sep}
\hat{\bXi}_t = \begin{bmatrix}\hat{\bXi}_{1 t} \\ \hat{\bXi}_{2 t} \end{bmatrix},
\end{equation}
where
\begin{equation*} 
\hat{\bXi}_{1 t} = \hat{\bA}_1 \hat{\bX}_t \hat{\bB}' = \hat{\bA}_1 \hat{\bA}'_1 \bY_{1t} \hat{\bB} \hat{\bB}', \qquad  \hat{\bXi}_{2 t} = \hat{\bA}_2 \hat{\bX}^*_t \hat{\bB}' = \hat{\bA}_2 \hat{\bA}'_2 \bY_{2t} \hat{\bB} \hat{\bB}'.
\end{equation*}

\subsection{Estimation of the Spatial Loading Matrix $\bA$ and Loading Function $\bA(\bs)$}

Note that now we only have estimated spatial loading matrices $\hat{\bA}_1$ and $\hat{\bA}_2$ on two partitioned set of sampling locations under the constraint that $\bA_1'\bA_1=\bA_2'\bA_2=\bI_d$. Estimate loading functions from $\hat{\bA}_1$ and $\hat{\bA}_2$ separately will result in inefficient use of sampling locations. Also, the constraint that $\bA_1'\bA_1=\bA_2'\bA_2=\bI_d$ complicates the estimation of the loading functions $\ba_j(\bs)$. In addition, (\ref{eqn:fac_mat_est_diff}) gives estimators for two different representations of the latent matrix factor $\bX_t$. To get estimators of spatial loading matrix $\bA$ for all sampling locations and $\bX_t$, we use the estimated $\hat{\bXi}_t$ to re-estimate $\hat{\bA}$ and $\hat{\bX}_t$.  

The population signals process is $\bxi_t(\bs) = \bB \sum_{j=1}^{d} a_j(\bs) \bx_{tj} = \bB \bX'_t \ba(\bs)$. The $n \times p$ matrix $\bXi_t = \bA \bX_t \bB$ is the signal matrix at discretized sampling locations at each time $t$. To reduce dimension, we consider the $n \times r$ variable-factor matrix $\bPsi_t = \bXi_t \bB' = \bA \bX_t$. Let $\bX = \begin{pmatrix} \bX_1 & \cdots & \bX_T \end{pmatrix}$ and $\bPsi = \begin{pmatrix} \bPsi_1 & \cdots & \bPsi_T \end{pmatrix} = \bA \bX$, then 
\[
\frac{1}{nprT} \bPsi' \bPsi = \frac{1}{nprT} \bX' \bA' \bA \bX.
\]

Let the rows of $\frac{1}{\sqrt{rT}}\bW$ be the eigenvectors of $\frac{1}{nprT}\bPsi' \bPsi$ corresponding to its $d$ non-zero egienvalues. The column space of $\bX'$ can be estimated as that of $\bW'$. And $\bA^* = \frac{1}{rT} \bPsi \bW'$ is the loading function values at discretized sampling site corresponding to $\bW$.

However, true $\bXi_t$'s or $\bPsi_t$'s are not observable and only the estimated values $\hat{\bXi}_t$ and $\hat{\Psi} = \hat{\bXi}_t \hat{\bB} $ are available. Thus, we estimate $\frac{1}{\sqrt{rT}} \hat{\bW}$ whose columns are the eigenvectors of $\frac{1}{nprT} \hat{\bPsi}' \hat{\bPsi}$ corresponding to its $d$ non-zero egienvalues and $\hat{\bA} = \frac{1}{rT} \hat{\bPsi} \hat{\bW}'$. The reason that $\hat{\bPsi}$ is choosen over $\hat{\bXi}$ is that $\hat{\bPsi}$ has the same esimaton error bound but is of lower dimension. 

Once $\hat{\bA}$ is estimated, we estimate loading functions $a_j(\bs)$ from the estimated $n$ observations in column $\hat{A}_{\cdot j}$ by the sieve approximation. Any set of bivariate basis functions can be chosen. In our procedure, we consider the tensor product linear sieve space $\Theta_n$, which is constructed as a tensor product space of some commonly used univariate linear approximating spaces, such as B-spline, orthogonal wavelets and polynomial series. Then for each $j \le d$,
\[
a_j(\bs) = \sum_{i=1}^{J_n} \beta_{i,j} u_i(\bs) + r_j(\bs).
\]
Here $\beta_{i,j}$'s are the sieve coefficients of $i$ basis function $u_i(\bs)$ corresponding to the $j$-th factor loading function;  $r_j(\bs)$ is the sieve approximation error; $J_n$ represents the number of sieve terms which grows slowly as $n$ goes to infinity. We estimate $\hat{\beta}_{i,j}$'s and the loading functions are approximated by $\hat{a}_j(\bs) = \sum_{i=1}^{J_n} \hat{\beta}_{i,j} u_i(\bs)$.

\section{Prediction} \label{sec:prediction}

\subsection{Spatial Prediction}

A major focus of spatio-temporal data analysis is the prediction of variable of interest over new locations. For some new location $\bs_0 \in \calS$ and $\bs_0 \ne \bs_i$ for $i=1,\ldots,n$, we aim to predict the unobserved value $\by_t(\bs_0)$, $t=1,\ldots,T$, based on observations $\bY_t$. By (\ref{eqn:stvp_c0}), we have $\by_t(\bs_0) = \bxi_t(\bs_0) + \bepsilon_t(\bs_0) = \bB \bX'_t \ba(\bs_0) + \bepsilon_t(\bs_0)$. As recommended by \cite{cressie2015statisticsST}, we predict $\bxi_t(\bs_0)=\bB \bX'_t \ba(\bs_0)$ instead of $\by_t(\bs_0)$ directly. Thus, a natural estimator is 
\begin{equation} \label{eqn:pred_xi_s0}
	\hat{\bxi}_t(\bs_0)=\hat{\bB} \hat{\bX}'_t \hat{\ba}(\bs_0),
\end{equation}
where $\hat{\bB}$, $\hat{\bX}$ and $\hat{\ba}(\bs)$ are estimated following procedures in Section \ref{sec:est}.

\subsection{Temporal Prediction}

Temporal prediction focuses on predict the future values $\by_{t+h}(\bs_1), \ldots, \by_{t+h}(\bs_n)$ for some $h \ge 1$. By (\ref{eqn:stvp_c0}), we have $\by_{t+h}(\bs) = \bxi_{t+h}(\bs) + \bepsilon_{t+h}(\bs) = \bB \bX'_{t+h} \ba(\bs) + \bepsilon_{t+h}(\bs)$. Since $\bepsilon_{t+h}(\bs)$ is unpredictable white noise, the ideal predictor for $\by_{t+h}(\bs)$ is that for $\bxi_{t+h}(\bs)$. Thus, we focus on predict $\bxi_{t+h}(\bs) = \bB \bX'_{t+h} \ba(\bs)$. The temporal dynamics of the $\bxi_{t+h}(\bs)$ present in a lower dimensional matrix factor $\bX'_{t+h}$, thus a more effective approach is to predict $\bX'_{t+h}$ based on $\bX'_{t-l}, \ldots, \bX'_t$ where $l$ is a prescribed integer. The rows and columns of $\bX_t$ represents the spatial factors and the variable factor, respectively. To preserve the matrix structure intrinsic to $\bX_t$, we model $\{\bX_t\}_{1:T}$ as the matrix autoregressive model of order one. Mathematically,
\begin{equation}  \label{eqn:mar1}
\bX_{t} = \bPhi_R \, \bX_{t-1} \, \bPhi_C + \bU_t,
\end{equation}
where $\Phi_R$ and $\Phi_C$ are row and column coefficient matrices, respectively. The covariance structure of the matrix white noise $\bU_t$ is not restricted. Thus, $vec{\bU_t} \sim \calN(\bzero, \bSigma_U)$ where $\bSigma_U$ is an arbitrary covariance matrix. Matrix $\Phi_R$ captures the auto-correlations between the spatial latent factors and $\Phi_C$ captures the auto-correlations between the variable latent factors. 

Following the generalized iterative method proposed in \cite{yang2017autoregressive}, we have estimators $\hat{\bPhi}_R$ and $\hat{\bPhi}_C$. The prediction for $\by_{t+h}(\bs)$ is best approximate by \begin{equation}
\hat{\bxi}_{t+h}(\bs)=\hat{\bB} \, \hat{\bX}'_{t+h} \, \hat{\ba}(\bs)=\hat{\bB} \, \hat{\bPhi}^h_R \, \hat{\bX}_{t} \, \hat{\bPhi}^h_C \, \hat{\ba}(\bs),
\end{equation}
where $\hat{\bB}$, $\hat{\bX}$ and $\hat{\ba}(\bs)$ are estimated following procedures in Section \ref{sec:est} and $\hat{\bPhi}^h_R$ and $\hat{\ba}(\bs)$ is estimated from MAR(1) model. 

\section{Asymptotic properties}  \label{sec:theory}

In this section, we investigate the rates of convergence for the estimators under the setting that $n$, $p$ and $T$ all go to infinity while $d$ and $r$ are fixed and the factor structure does not change over time. In what follows, let $\norm{\bA}_2 = \sqrt{\lambda_{max}(\bA'\bA)}$ and $\norm{\bA}_F = \sqrt{tr(\bA'\bA)}$ denote the spectral and Frobenius norms of the matrix $\bA$, respectively. $\norm{\bA}_{min}$ denotes the positive square root of the minimal eigenvalue of $\bA'\bA$ or $\bA\bA'$, whichever is a smaller matrix. When $\bA$ is a square matrix, we denote by $tr(\bA)$, $\lambda_{max}(\bA)$ and $\lambda_{max}(\bA)$ the trace, maximum and minimum eigenvalues of the matrix $\bA$, respectively. For two sequences $a_N$ and $b_N$, we write $a_N \asymp b_N$ if $a_N = O(b_N)$ and $b_N = O(a_N)$. The following regularity conditions are imposed before we derive the asymptotics of the estimators.  

\begin{condition}  \label{cond:vecXt_alpha_mixing}  
\textbf{Alpha-mixing.} $\{ \vect{\bX_t}, t=0,\pm 1, \pm 2, \cdots \}$ is strictly stationary and $\alpha$-mixing. Specifically, for some $\gamma > 2$, the mixing coefficients satisfy the condition that $\sum_{h=1}^{\infty} \alpha(h)^{1-2/\gamma} < \infty$, where $ \alpha(h)=\underset{\tau}{\sup} \underset{A \in \mathcal{F}_{-\infty}^{\tau}, B \in \mathcal{F}_{\tau+h}^{\infty}}{\sup} \left| P(A \cap B) - P(A)P(B) \right| $ and $\mathcal{F}_{\tau}^s$ is the $\sigma$-field generated by $\{vec(\bX_t): {\tau} \le t \le s \}$. 
\end{condition}

\begin{condition}  \label{cond:Xt_cov_fullrank_bounded}
Let $X_{t, ij}$ be the $ij$-th entry of $\bX_t$. Then, $E(\left| X_{t, ij} \right|^{2 \gamma}) \le C$ for any $i = 1, \ldots, d$, $j = 1, \ldots, r$ and $t = 1, \ldots, T$, where $C$ is a positive constant and $\gamma$ is given in Condition \ref{cond:vecXt_alpha_mixing}. In addition, there exists an integer $h$ satisfying $1 \le h \le h_0$ such that $\bSigma_f(h)$ is of rank $k = \max(d, r)$ and $\norm{\bSigma_f(h)}_2 \asymp O(1) \asymp \sigma_k(\bSigma_f(h))$. For $i = 1, \ldots, d$ and $j=1, \ldots, r$, $\frac{1}{T-h} \sum_{t=1}^{T-h} Cov(X_{t, i \cdot}, X_{t+h, i \cdot}) \ne \mathbf{0}$ and $\frac{1}{T-h} \sum_{t=1}^{T-h} Cov(X_{t, \cdot j}, X_{t+h, \cdot j}) \ne \mathbf{0}$.
\end{condition}

\begin{condition}  \label{cond:A_factor_strength}
\textbf{Spacial factor strength.} For any partition $\{ \calS_1, \calS_2 \}$ of locations $\calS = \{ \bs_1, \ldots, \bs_n\}$, there exists a constant $\delta \in [0, 1]$ such that  $\norm{\bA_1}^2_{min} \asymp n_1^{1-\delta} \asymp \norm{\bA_1}^2_2$ and $\norm{\bA_2}^2_{min} \asymp n_2^{1-\delta} \asymp \norm{\bA_2}^2_2$, where $n_1$ and $n_2$ are number of locations in sets $\calS_1$ and $\calS_2$, respectively, and $n_1 + n_2 = n$.
\end{condition}

\begin{condition}  \label{cond:B_factor_strength}
\textbf{Variable factor strength.} There exists a constant $\gamma \in [0,1]$ such that $\norm{\bB}^2_{min} \asymp p^{1-\gamma} \asymp \norm{\bB}^2_2$ as $p$ goes to infinity and $r$ is fixed.
\end{condition}

\begin{condition}
\textbf{Loading functions belongs to H{\"o}lder class.} For $j=1, \ldots, d$, the loading functions $\ba_j(\bs)$, $\bs \in \calS \in \mathbb{R}^2$ belongs to a H{\"o}lder class $\calA^{\kappa}_c(\calS)$ ($\kappa$-smooth) defined by  
\[
\calA^{\kappa}_c(\calS) = \left \{a \in \calC^m(\calS): \underset{[\eta]\le m}{\sup} \; \underset{\bs \in \calS}{\sup} \left| D^{\eta}\, a(\bs) \right| \le c, \text{ and }
\underset{[\eta]=m}{\sup} \; \underset{\bu, \bv \in \calS}{\sup} \frac{\left| D^{\eta}\, a(\bu) - D^{\eta}\, a(\bv) \right|}{\norm{\bu - \bv}^{\alpha}_2}  \le c   \right \}, 
\]
for some positive number $c$. Here, $\calC^m(\calS)$ is the space of all $m$-times continuously differentiable real-value functions on $\calS$. The differential operator $D^{\eta}$ is defined as $D^{\eta} = \frac{\partial^{[\eta]}}{\partial s_1^{\eta_1} \partial s_2^{\eta_2}}$ and $[\eta] = \eta_1 + \eta_2$ for nonnegative integers $\eta_1$ and $\eta_2$. 

\end{condition}

Theorem \ref{thm:A_sep_err_bnd} presents the error bound for estimated loading matrix $\bA_1$ and $\bA_2$. 

\begin{theorem} \label{thm:A_sep_err_bnd}
Under Condition \ref{cond:vecXt_alpha_mixing}-\ref{cond:B_factor_strength} and $n^{\delta} p^{\gamma} T^{-1/2} = o(1)$, we have 
\begin{equation}
\calD \left( \calM(\hat{\bA}_i), \calM(\bA_i) \right) = O_p( ( n_1 n_2^{\delta-1} p^{\gamma} + n_1^{\delta-1} n_2 p^{\gamma} + n_1^{\delta} n_2^{\delta} p^{2\gamma} ) T^{-1} )^{1/2}.
\end{equation}
If $n_1 \asymp n_2 \asymp n$, we have
\begin{equation}
\calD \left( \calM(\hat{\bA}_i), \calM(\bA_i) \right) = O_p(  n^{\delta} p^{\gamma} T^{-1/2} ).
\end{equation}
\end{theorem}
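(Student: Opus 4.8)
The plan is to cast this as an eigenspace perturbation problem and control it with the Davis--Kahan $\sin\theta$ bound. Since $\calM(\bA_l)$ is recovered as the span of the leading $d$ eigenvectors of $\bM_{A_l}$ and $\hat\bA_l$ as that of $\hat\bM_{A_l}$, and since $\bM_{A_l}$ has exactly $d$ nonzero eigenvalues (all separated from the zero eigenvalue of its null space), the subspace distance obeys
\[
\calD\left(\calM(\hat\bA_l),\calM(\bA_l)\right)\le \frac{C\,\norm{\hat\bM_{A_l}-\bM_{A_l}}_2}{\lambda_d(\bM_{A_l})}
\]
on the event that the perturbation is below half the gap $\lambda_d(\bM_{A_l})$. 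Thus the proof reduces to (i) a lower bound on the $d$-th eigenvalue $\lambda_d(\bM_{A_l})$ and (ii) an upper bound on $\norm{\hat\bM_{A_l}-\bM_{A_l}}_2$; the assumption $n^\delta p^\gamma T^{-1/2}=o(1)$ is precisely what guarantees the perturbation is eventually dominated by the gap, so this inequality is legitimate and the quadratic remainder below is negligible.

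For (i), I would use the sandwich form $\bM_{A_1}=\bA_1\bV_1\bA_1'$ with $\bV_1=\sum_{ij}\bG_{ij}\bA_2'\bA_2\bG_{ij}'$ and $\bG_{ij}=\Cov{\bX_t B_{i\cdot},\bX^*_t B_{j\cdot}}$ read off from (\ref{eqn:M1}). Its nonzero eigenvalues equal those of $\bV_1^{1/2}\bA_1'\bA_1\bV_1^{1/2}$, so by Condition~\ref{cond:A_factor_strength} ($\norm{\bA_l}^2_{min}\asymp n_l^{1-\delta}\asymp\norm{\bA_l}^2_2$), Condition~\ref{cond:B_factor_strength} ($\norm{\bB}^2_{min}\asymp p^{1-\gamma}\asymp\norm{\bB}^2_2$), and the rank and boundedness in Condition~\ref{cond:Xt_cov_fullrank_bounded}, I would show
\[
\lambda_d(\bM_{A_1})\asymp\lambda_1(\bM_{A_1})\asymp n_1^{1-\delta}n_2^{1-\delta}p^{2(1-\gamma)},
\]
where the double sum over $i,j\le p$ contributes $\norm{\bB}_F^4\asymp p^{2(1-\gamma)}$ and the two loading matrices contribute $n_1^{1-\delta}$ and $n_2^{1-\delta}$; the same order holds for $\bM_{A_2}$ since the two share nonzero eigenvalues.

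For (ii), the main work, I would write $\bDel_{ij}=\hat\bOmega_{A,ij}-\bOmega_{A,ij}$ and expand $\hat\bM_{A_1}-\bM_{A_1}=\sum_{ij}(\bDel_{ij}\bOmega_{A,ij}'+\bOmega_{A,ij}\bDel_{ij}'+\bDel_{ij}\bDel_{ij}')$, treating the cross term as dominant and $\sum_{ij}\bDel_{ij}\bDel_{ij}'$ as higher order. Using $Y_{1t,\cdot i}=\bA_1\bX_t B_{i\cdot}+E_{1t,i}$ and $Y_{2t,\cdot j}=\bA_2\bX^*_t B_{j\cdot}+E_{2t,j}$, I would split $\bDel_{ij}$ into a signal-sampling part $\bA_1\big(\tfrac1T\sum_t\bX_t B_{i\cdot}B_{j\cdot}'(\bX^*_t)'-\bG_{ij}\big)\bA_2'$, two signal--noise cross parts $\tfrac1T\sum_t\bA_1\bX_t B_{i\cdot}E_{2t,j}'$ and $\tfrac1T\sum_t E_{1t,i}B_{j\cdot}'(\bX^*_t)'\bA_2'$, and a noise--noise part $\tfrac1T\sum_t E_{1t,i}E_{2t,j}'$. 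The partition of $\calS$ into disjoint $\calS_1,\calS_2$ is exactly what renders $\bOmega_{A,ij}$ nugget-free, so every piece is mean zero; Conditions~\ref{cond:vecXt_alpha_mixing}--\ref{cond:Xt_cov_fullrank_bounded} ($\alpha$-mixing with summable coefficients and uniform $2\gamma$-moments) then deliver the $T^{-1/2}$ rate for each averaged-over-$t$ fluctuation through a Davydov-type covariance inequality. Propagating $\norm{\bA_l}_2\asymp n_l^{(1-\delta)/2}$ and $\norm{\bB}_2\asymp p^{(1-\gamma)/2}$ through the two matrix products and carrying out the double sum---crucially using the full, non-diagonal cross-variable noise covariance $\bSigma_{\bepsilon}$ rather than a triangle-inequality bound---yields three variance contributions: the two signal--noise cross terms give the $n_1 n_2^{\delta-1}p^\gamma$ and $n_1^{\delta-1}n_2 p^\gamma$ pieces, and the noise--noise term gives $n_1^\delta n_2^\delta p^{2\gamma}$. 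Combining these with (i) through the $\sin\theta$ bound gives
\[
\calD\left(\calM(\hat\bA_l),\calM(\bA_l)\right)=O_p\big((n_1 n_2^{\delta-1}p^\gamma+n_1^{\delta-1}n_2 p^\gamma+n_1^\delta n_2^\delta p^{2\gamma})T^{-1}\big)^{1/2},
\]
and setting $n_1\asymp n_2\asymp n$ collapses all three to $n^{2\delta}p^{2\gamma}/T$ at leading order, producing the simplified rate $O_p(n^\delta p^\gamma T^{-1/2})$.

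The hardest part will be step (ii): obtaining the sharp $T^{-1/2}$ rate uniformly over the $p^2$ summands while tracking how the spatial and variable factor strengths accumulate through two matrix multiplications and a double sum. In particular, one must resist the crude bound $\norm{\sum_{ij}\cdot}_2\le\sum_{ij}\norm{\cdot}_2$, which discards the cancellation encoded in the noise covariance and inflates the $p$-exponent; the distinction between the single power of the variable-factor strength surviving in the signal--noise cross terms ($p^\gamma$) and the two powers in the noise--noise term ($p^{2\gamma}$), together with verifying that the quadratic remainder $\sum_{ij}\bDel_{ij}\bDel_{ij}'$ is genuinely dominated under $n^\delta p^\gamma T^{-1/2}=o(1)$, is where the argument must be most careful.
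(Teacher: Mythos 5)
Your proposal follows essentially the same route as the paper: the paper likewise decomposes $\hat{\bOmega}_{A,ij}-\bOmega_{A,ij}$ into signal--signal, signal--noise, noise--signal and noise--noise pieces (its Lemma \ref{lemma:4_conv_rates}), bounds $\|\hat{\bM}_{A_1}-\bM_{A_1}\|_2$ through the cross-term expansion with Cauchy--Schwarz over the double sum (Lemmas \ref{lemma:conv_rate_covYt}--\ref{lemma:conv_rate_Mhat}), establishes $\lambda_d(\bM_{A_1})\asymp (n_1n_2)^{1-\delta}p^{2-2\gamma}$ (Lemma \ref{conv_rate_sigval_M_1}), and concludes via the same $\sin\theta$-type perturbation bound $\|\hat{\bA}_1-\bA_1\|_2\le 8\|\hat{\bM}_1-\bM_1\|_2/\lambda_{\min}(\bM_1)$. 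Your attribution of the three rate components to the two signal--noise cross terms and the noise--noise term matches what the paper's bounds yield after dividing by the eigenvalue gap, so the proposal is correct and not materially different.
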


Theorem \ref{thm:signal_err_bnd} presents the error bound for estimated signal $\hat{\bXi}_{it}$ and $\hat{\bXi}_t$. 

\begin{theorem} \label{thm:signal_err_bnd}
This proposition considers the error bound of signal estimator as in (\ref{eqn:signal_est}) for each partition. Under $n^{\delta} p^{\gamma} T^{-1}=o_p(1)$, if $n_1 \asymp n2 \asymp n$, then
\begin{equation}
n^{-1/2} p^{-1/2} \norm{\hat{\bXi}_{it} - \bXi_{it}}_2  = O_p(n^{\delta/2} p^{\gamma/2} T^{-1/2} + n^{-1/2} p^{-1/2}),
\end{equation}
for $i = 1, 2$, and 
\begin{equation}
n^{-1} p^{-1} \norm{\hat{\bXi}_{t} - \bXi_{t}}^2_2  = O_p(n^{\delta} p^{\gamma} T^{-1} + n^{-1/2+\delta/2} p^{-1/2+\gamma/2} T^{-1/2} + n^{-1} p^{-1})
\end{equation}
\end{theorem}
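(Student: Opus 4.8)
The plan is to rewrite the signal estimator through orthogonal projections and track how the subspace estimation errors propagate. Since $\hat\bA_i$ and $\hat\bB$ have orthonormal columns, $\hat\bA_i\hat\bA_i'$ and $\hat\bB\hat\bB'$ are the orthogonal projections onto $\calM(\hat\bA_i)$ and $\calM(\hat\bB)$, so $\hat\bXi_{it}=\hat\bA_i\hat\bA_i'\,\bY_{it}\,\hat\bB\hat\bB'$ as in (\ref{eqn:signal_mat_est_sep})--(\ref{eqn:fac_mat_est_diff}). Writing the population projections onto $\calM(\bA_i)$ and $\calM(\bB)$ as $\bP_{A_i}$ and $\bP_{B}$, substituting $\bY_{it}=\bXi_{it}+\bE_{it}$, and using $\bP_{A_i}\bXi_{it}=\bXi_{it}=\bXi_{it}\bP_{B}$ (the signal lies in $\calM(\bA_i)$ on the left and in $\calM(\bB)$ on the right), I would put $\hat\bA_i\hat\bA_i'=\bP_{A_i}+\bDel_{A_i}$ and $\hat\bB\hat\bB'=\bP_{B}+\bDel_{B}$ with $\bDel_{A_i}=\hat\bA_i\hat\bA_i'-\bP_{A_i}$, $\bDel_{B}=\hat\bB\hat\bB'-\bP_{B}$. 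Expanding yields
\[
\hat\bXi_{it}-\bXi_{it}=\underbrace{\bDel_{A_i}\bXi_{it}+\bXi_{it}\bDel_{B}+\bDel_{A_i}\bXi_{it}\bDel_{B}}_{\text{signal perturbation}}+\underbrace{(\bP_{A_i}+\bDel_{A_i})\,\bE_{it}\,(\bP_{B}+\bDel_{B})}_{\text{projected noise}}.
\]

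Next I would assemble the building-block bounds. For the signal magnitude, Conditions \ref{cond:A_factor_strength}--\ref{cond:B_factor_strength} with the bounded moments of Condition \ref{cond:Xt_cov_fullrank_bounded} give $\|\bXi_{it}\|_2\le\|\bA_i\|_2\|\bX_t\|_2\|\bB\|_2=O_p(n^{(1-\delta)/2}p^{(1-\gamma)/2})$ when $n_1\asymp n_2\asymp n$. For the projection errors, the operator-norm gap between two equidimensional subspaces is comparable to the subspace distance, so Theorem \ref{thm:A_sep_err_bnd} gives $\|\bDel_{A_i}\|_2\asymp\calD(\calM(\hat\bA_i),\calM(\bA_i))=O_p(n^{\delta}p^{\gamma}T^{-1/2})$, and the analogous loading-space result for $\hat\bB$ gives $\|\bDel_{B}\|_2=O_p(n^{\delta}p^{\gamma}T^{-1/2})$. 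For the oracle noise piece, $\bP_{A_i}\bE_{it}\bP_{B}=\bU_{A_i}(\bU_{A_i}'\bE_{it}\bU_{B})\bU_{B}'$ for orthonormal bases $\bU_{A_i},\bU_{B}$ of the two population spaces; because the nugget process has independent rows with bounded variances, each of the $d\times r$ entries of $\bU_{A_i}'\bE_{it}\bU_{B}$ has $O(1)$ variance, so this term is $O_p(1)$.

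Combining, the dominant signal-perturbation contribution is $\|\bDel_{A_i}\bXi_{it}\|_2\vee\|\bXi_{it}\bDel_{B}\|_2=O_p(n^{1/2+\delta/2}p^{1/2+\gamma/2}T^{-1/2})$, while the double-perturbation piece $\bDel_{A_i}\bXi_{it}\bDel_{B}$ is smaller by the factor $n^{\delta}p^{\gamma}T^{-1/2}=o(1)$, and the oracle noise contributes $O_p(1)$; dividing by $\sqrt{np}$ yields the first assertion $n^{-1/2}p^{-1/2}\|\hat\bXi_{it}-\bXi_{it}\|_2=O_p(n^{\delta/2}p^{\gamma/2}T^{-1/2}+n^{-1/2}p^{-1/2})$ for $i=1,2$. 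For the joint signal, the block-stacked form of $\hat\bXi_t$ in (\ref{eqn:signal_mat_est_sep}) gives $\|\hat\bXi_t-\bXi_t\|_2^2\le\|\hat\bXi_{1t}-\bXi_{1t}\|_2^2+\|\hat\bXi_{2t}-\bXi_{2t}\|_2^2$; squaring the per-partition rate and dividing by $np$ produces precisely the three terms $n^{\delta}p^{\gamma}T^{-1}+n^{-1/2+\delta/2}p^{-1/2+\gamma/2}T^{-1/2}+n^{-1}p^{-1}$ of the second assertion.

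The main obstacle is the noise--estimation-error interaction terms $\bP_{A_i}\bE_{it}\bDel_{B}$, $\bDel_{A_i}\bE_{it}\bP_{B}$ and $\bDel_{A_i}\bE_{it}\bDel_{B}$. Here the naive bound $\|\bDel_{A_i}\|_2\,\|\bE_{it}\bP_{B}\|_2$ is too lossy, since $\|\bE_{it}\bP_{B}\|_2$ is only $O_p(\sqrt{n})$ and, more seriously, $\hat\bA_i$ is not independent of $\bE_{it}$, so one cannot treat the projected noise as if projected onto a fixed space. The remedy is to split $\hat\bA_i$ (and $\hat\bB$) into its population-space component plus an $O_p(\|\bDel_{A_i}\|_2)$ perturbation, so that each interaction term decomposes into an oracle part that is $O_p(1)$ (fixed unit directions contracted against independent-row noise) and a residual part carrying the small factor $\|\bDel_{A_i}\|_2$ or $\|\bDel_{B}\|_2$. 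Verifying that all of these residuals stay within the claimed rate, using the consistency rates of Theorem \ref{thm:A_sep_err_bnd} together with the rate restriction $n^{\delta}p^{\gamma}T^{-1}=o(1)$, is the delicate step of the argument.
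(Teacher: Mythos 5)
Your proposal is correct and follows essentially the same route as the paper: the paper likewise decomposes $\hat\bXi_{it}-\bXi_{it}$ into a term carrying $\hat\bB\hat\bB'-\bB\bB'$, a term carrying $\hat\bA_i\hat\bA_i'-\bA_i\bA_i'$, and the doubly-projected noise $\hat\bA_i\hat\bA_i'\bE_{it}\hat\bB\hat\bB'$, bounds the first two by $O_p(n^{(1-\delta)/2}p^{(1-\gamma)/2})$ times the loading-space rates and the last by $O_p(1)$, and stacks the two partitions via $\lambda_{\max}(\bM_1'\bM_1+\bM_2'\bM_2)\le\lambda_{\max}(\bM_1'\bM_1)+\lambda_{\max}(\bM_2'\bM_2)$. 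Your extra care with the dependence between $\hat\bA_i$ and $\bE_{it}$ in the noise term is a refinement the paper dispatches with a one-line $O_p(1)$ bound, but the argument is otherwise the same.
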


Let $\Delta_{npT} = n^{\delta} p^{\gamma} T^{-1} + n^{-1/2+\delta/2} p^{-1/2+\gamma/2} T^{-1/2} + n^{-1} p^{-1}$. Theorem \ref{thm:reest_X_err_bnd} presents the error bound for re-estimated latent factor $\frac{1}{rT}\bW_t$ whose columns are assume to be the eigenvectors of $\frac{1}{rT} \bPsi' \bPsi$. And Proposition \ref{thm:reest_A_err_bnd} presents the error bound for re-estimated whole loading matrix $\bA$ corresponding to estimated $\bW$.

\begin{theorem} \label{thm:reest_X_err_bnd}
\[
\frac{1}{rT} \norm{\hat{\bW}' - \bW'}^2_F = O_p \left(\Delta_{npT} + n^{\delta} p^{\gamma}\Delta_{npT}^2\right)
\]
\end{theorem}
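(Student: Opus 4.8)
The plan is to recognize the statement as a perturbation bound for the leading $d$-dimensional right singular subspace of $\bPsi=\bA\bX$ and to control it through a Davis--Kahan (sin-$\Theta$) argument. Writing $\bK=\frac{1}{nprT}\bPsi'\bPsi$ and $\hat\bK=\frac{1}{nprT}\hat\bPsi'\hat\bPsi$, the rows of $\frac{1}{\sqrt{rT}}\bW$ and of $\frac{1}{\sqrt{rT}}\hat\bW$ are the leading $d$ eigenvectors of $\bK$ and $\hat\bK$. Setting $\bV=\frac{1}{\sqrt{rT}}\bW'$ and $\hat\bV=\frac{1}{\sqrt{rT}}\hat\bW'$ (both with orthonormal columns), the target equals $\frac{1}{rT}\norm{\hat\bW'-\bW'}_F^2=\norm{\hat\bV-\bV}_F^2$, a squared subspace distance that I will bound up to the orthogonal alignment implicit in matching the two eigenbases; since $d$ is fixed this is of the same order as $\calD(\calM(\hat\bW),\calM(\bW))^2$ and as $\norm{\hat\bV\hat\bV'-\bV\bV'}_F^2$. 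Davis--Kahan then gives $\norm{\hat\bV-\bV}_F^2\le C\,\norm{\hat\bK-\bK}_2^2/\mathrm{gap}^2$ for a constant $C$, so the whole proof reduces to (i) a lower bound on the gap and (ii) an upper bound on $\norm{\hat\bK-\bK}_2$.

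For the gap, note that $\bPsi=\bA\bX$ has rank $d$, so $\lambda_{d+1}(\bK)=0$ and the relevant gap is $\lambda_d(\bK)$. Using Condition \ref{cond:Xt_cov_fullrank_bounded} (the entries of $\bX_t$ are $O(1)$, so $\bX\bX'=\sum_t\bX_t\bX_t'$ has all $d$ eigenvalues $\asymp rT$) together with the spatial factor strength Condition \ref{cond:A_factor_strength} ($\bA'\bA\asymp n^{1-\delta}$), I obtain $\sigma_d(\bPsi)^2=\lambda_d(\bA\bX\bX'\bA')\asymp n^{1-\delta}rT$ and hence $\lambda_d(\bK)\asymp n^{-\delta}p^{-1}$. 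The same computation gives $\sigma_1(\bPsi)\asymp\sigma_d(\bPsi)$, i.e. a bounded condition number, which is used below.

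For the numerator I decompose $\hat\bK-\bK=\frac{1}{nprT}\big[(\hat\bPsi-\bPsi)'\bPsi+\bPsi'(\hat\bPsi-\bPsi)+(\hat\bPsi-\bPsi)'(\hat\bPsi-\bPsi)\big]$ and treat the linear (cross) and quadratic pieces separately. Since $\hat\bPsi=\hat\bXi\hat\bB$ and $\bPsi=\bXi\bB$ with $\hat\bB'\hat\bB=\bI_r$, I write $\hat\bPsi-\bPsi=(\hat\bXi-\bXi)\hat\bB+\bXi(\hat\bB-\bB)$ and feed in the signal rate of Theorem \ref{thm:signal_err_bnd} together with the convergence rate for $\hat\bB$ (the variable-loading analogue of Theorem \ref{thm:A_sep_err_bnd}). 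The aim is to show $\norm{(\hat\bPsi-\bPsi)'\bPsi}_2=O_p(\sigma_d(\bPsi)^2\,\Delta_{npT}^{1/2})$ and $\norm{\hat\bPsi-\bPsi}_2^2=O_p(n^{\delta/2}p^{\gamma/2}\,\sigma_d(\bPsi)^2\,\Delta_{npT})$. Dividing by $nprT\,\lambda_d(\bK)=\sigma_d(\bPsi)^2$ turns these into $\norm{\hat\bK-\bK}_2/\lambda_d(\bK)=O_p(\Delta_{npT}^{1/2}+n^{\delta/2}p^{\gamma/2}\Delta_{npT})$; squaring and absorbing the cross term $n^{\delta/2}p^{\gamma/2}\Delta_{npT}^{3/2}$ by AM--GM yields exactly $\Delta_{npT}+n^\delta p^\gamma\Delta_{npT}^2$.

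The main obstacle is the tight control of the cross term $\norm{(\hat\bPsi-\bPsi)'\bPsi}_2$: the crude split $\norm{(\hat\bPsi-\bPsi)'\bPsi}_2\le\norm{\hat\bPsi-\bPsi}_2\,\norm{\bPsi}_2$ overshoots the required order by a factor $n^{\delta/4}p^{\gamma/4}$ and must be avoided. Instead I will split $\hat\bPsi-\bPsi$ into a systematic part carrying a common loading-estimation-error factor across all time points and an idiosyncratic part that is a projection of the nugget noise, and bound the stacked $n\times rT$ error in operator (not Frobenius) norm, exploiting the temporal averaging and concentration of the noise projection and the low-rank common structure of the systematic part. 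Establishing that this spectral bound holds at the stated order, and checking that the factor strengths enter with the powers $n^{\delta/2}p^{\gamma/2}$ claimed above, is the crux; the remaining steps (the gap computation and the final Davis--Kahan bookkeeping) are routine given Conditions \ref{cond:Xt_cov_fullrank_bounded}--\ref{cond:B_factor_strength} and Theorems \ref{thm:A_sep_err_bnd}--\ref{thm:signal_err_bnd}.
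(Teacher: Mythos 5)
Your route is genuinely different from the paper's. The paper does not invoke Davis--Kahan on $\hat\bK$ versus $\bK$; it starts from the eigen-identity $\hat{\bW}' = \frac{1}{nprT}\hat{\bPsi}'\hat{\bPsi}\hat{\bW}'\bV_{npT}^{-1}$, substitutes $\hat{\bPsi}=\bA\bW+\bU$, and writes $\hat{\bW}'-\bW'\bH=(\bN_1+\bN_2+\bN_3)\bV_{npT}^{-1}$ for an explicit alignment matrix $\bH=\frac{1}{nprT}\bA'\bA\bW\hat{\bW}'\bV_{npT}^{-1}$, bounding each $\bN_i$ in Frobenius norm via $\norm{\bU}_F^2\le T\max_t\norm{\hat{\bPsi}_t-\bPsi_t}_F^2=O_p(npT\Delta_{npT})$ and $\norm{\bV_{npT}^{-1}}_2=O_p(n^\delta p^\gamma)$, then separately proving $\norm{\bH-\bI_d}_F$ is of the target order. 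Because the three terms $\bN_1,\bN_2,\bN_3$ each carry a contraction of $\bU$ against $\bA$ or $\bW$, the paper never needs the refined spectral-norm control of $(\hat{\bPsi}-\bPsi)'\bPsi$ that you identify as your crux.

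There are two genuine gaps in your plan. First, the reduction of $\frac{1}{rT}\norm{\hat{\bW}'-\bW'}_F^2$ to a subspace distance is not valid as stated: Davis--Kahan controls $\norm{\hat{\bV}\hat{\bV}'-\bV\bV'}_F$, which is invariant under right-multiplication of $\hat{\bV}$ by any orthogonal matrix, whereas the theorem asserts closeness of the two specific orthonormal bases. The assertion that these are ``of the same order since $d$ is fixed'' is false in general ($\hat{\bV}=-\bV$ gives zero subspace distance but $\norm{\hat{\bV}-\bV}_F=2\sqrt{d}$); converting one to the other requires either separation among the top $d$ eigenvalues of $\bK$ themselves plus a sign convention (neither is assumed), or an explicit argument that the alignment matrix tends to $\bI_d$ --- which is precisely the content of the paper's lemma on $\norm{\bH-\bI_d}_F$ and is where the real work sits. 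Second, the step you yourself flag as the crux --- the spectral-norm bound $\norm{(\hat{\bPsi}-\bPsi)'\bPsi}_2=O_p(\sigma_d(\bPsi)^2\Delta_{npT}^{1/2})$ via a systematic/idiosyncratic split of the error with temporal concentration --- is only sketched, not carried out, and it is not routine: Theorem \ref{thm:signal_err_bnd} gives only a per-$t$ bound on $\norm{\hat{\bXi}_t-\bXi_t}_2$, and extracting the extra $\sqrt{T}$ of cancellation you need from the stacked $n\times rT$ error matrix requires arguments (uniformity over $t$, independence structure of the projected nugget) that are not available off the shelf from the stated conditions. A smaller bookkeeping issue: your eigengap $\lambda_d(\bK)\asymp n^{-\delta}p^{-1}$ uses $\bA'\bA\asymp n^{1-\delta}$ from Condition \ref{cond:A_factor_strength}, whereas the scaling consistent with $\bPsi_t=\bXi_t\bB'$ and the paper's normalization is $\lambda_d\bigl(\frac{1}{np}\bA\bA'\bigr)\asymp n^{-\delta}p^{-\gamma}$; this mismatch would propagate through your Davis--Kahan denominator.
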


Proposition \ref{thm:reest_A_err_bnd} presents the error bond for estimated spatial loading matrix $\hat{\bA}$.

\begin{proposition} \label{thm:reest_A_err_bnd}
\[
\frac{1}{np} \norm[\big]{\hat{\bA} - \bA}^2_F = O_p \left( \Delta_{npT} \right). 
\]
\end{proposition}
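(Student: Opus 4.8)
The plan is to treat the statement as a corollary of the signal bound (Theorem \ref{thm:signal_err_bnd}) and the re-estimated factor bound (Theorem \ref{thm:reest_X_err_bnd}), exploiting the explicit closed forms $\hat{\bA} = \frac{1}{rT}\hat{\bPsi}\hat{\bW}'$ and its population analogue $\bA = \frac{1}{rT}\bPsi\bW'$ (the rotation of the loading matrix aligned with $\bW$, so that population and sample objects live in a common coordinate system). The first thing I would record is that, since the rows of $\frac{1}{\sqrt{rT}}\bW$ and of $\frac{1}{\sqrt{rT}}\hat{\bW}$ are orthonormal, $\norm{\bW}_2 = \norm{\hat{\bW}}_2 = \sqrt{rT}$, and that by Conditions \ref{cond:Xt_cov_fullrank_bounded}--\ref{cond:A_factor_strength} the population variable-factor matrix satisfies $\norm{\bPsi}_2^2 = \lambda_{max}\!\left(\bA\big(\sum_t \bX_t\bX_t'\big)\bA'\right) \asymp rT\,\norm{\bA}_2^2 \asymp rT\,n^{1-\delta}$.

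The core is the additive split
\begin{equation*}
\hat{\bA} - \bA = \frac{1}{rT}(\hat{\bPsi} - \bPsi)\hat{\bW}' + \frac{1}{rT}\bPsi(\hat{\bW}' - \bW'),
\end{equation*}
after which I would bound the two pieces separately in Frobenius norm. For the first piece, $\norm{\frac{1}{rT}(\hat{\bPsi}-\bPsi)\hat{\bW}'}_F \le \frac{1}{rT}\norm{\hat{\bPsi}-\bPsi}_F\,\norm{\hat{\bW}}_2 = \frac{1}{\sqrt{rT}}\norm{\hat{\bPsi}-\bPsi}_F$, so that $\frac{1}{np}\norm{\cdot}_F^2 = \frac{1}{r}\cdot\frac{1}{Tnp}\norm{\hat{\bPsi}-\bPsi}_F^2$. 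As already flagged in Section \ref{sec:est}, $\hat{\bPsi}$ inherits the signal error rate, i.e. $\frac{1}{Tnp}\norm{\hat{\bPsi}-\bPsi}_F^2 = O_p(\Delta_{npT})$; I would obtain this by converting the per-$t$ spectral bound of Theorem \ref{thm:signal_err_bnd} to Frobenius at the cost of the constant factor $\mathrm{rank}(\hat{\bXi}_t - \bXi_t)\le 2r$, contracting with $\hat{\bB}$ (whose columns are orthonormal), and summing over $t$. This piece is therefore $O_p(\Delta_{npT})$.

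For the second piece I would use $\norm{\frac{1}{rT}\bPsi(\hat{\bW}'-\bW')}_F \le \frac{1}{rT}\norm{\bPsi}_2\norm{\hat{\bW}'-\bW'}_F$ together with Theorem \ref{thm:reest_X_err_bnd}, giving
\begin{equation*}
\frac{1}{np}\norm[\Big]{\tfrac{1}{rT}\bPsi(\hat{\bW}'-\bW')}_F^2 \le \frac{\norm{\bPsi}_2^2}{np\,(rT)^2}\,\norm{\hat{\bW}'-\bW'}_F^2 = O_p\!\left(\tfrac{n^{-\delta}}{p}\Delta_{npT} + p^{\gamma-1}\Delta_{npT}^2\right).
\end{equation*}
Here $n^{-\delta}p^{-1}\Delta_{npT}\le \Delta_{npT}$, while the residual $p^{\gamma-1}\Delta_{npT}^2$ is $o(\Delta_{npT})$ because the rate assumption $n^{\delta}p^{\gamma}T^{-1/2}=o(1)$ forces $p^{\gamma-1}\Delta_{npT}=o(1)$ (using $\gamma\le 1$). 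Combining the pieces through $\norm{\hat{\bA}-\bA}_F^2 \le 2\norm{\mathrm{(I)}}_F^2 + 2\norm{\mathrm{(II)}}_F^2$ yields the claimed $\frac{1}{np}\norm{\hat{\bA}-\bA}_F^2 = O_p(\Delta_{npT})$.

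I expect the main obstacle to be the step asserting that $\hat{\bPsi}$ enjoys the same $\Delta_{npT}$ rate as the signal estimator: unlike $\hat{\bXi}_t$, the contraction $\hat{\bPsi}_t = \hat{\bXi}_t\hat{\bB}$ reintroduces the estimation error of $\hat{\bB}$ through the term $\bXi_t(\hat{\bB}-\bB)$, so I must verify that this propagated error, scaled by $\norm{\bXi_t}_2^2 \asymp n^{1-\delta}p^{1-\gamma}$ and summed over $t$, remains dominated by $\Delta_{npT}$ under Condition \ref{cond:B_factor_strength} and the rate condition. A secondary, purely bookkeeping, difficulty is tracking the rotation matrices implicit in the identification of $\bW$ and of the target $\bA = \frac{1}{rT}\bPsi\bW'$, so that $\hat{\bA}$ and $\bA$ are genuinely compared on the same subspace.
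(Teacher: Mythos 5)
Your proof is correct and follows essentially the same route as the paper's: the identical decomposition $\hat{\bA}-\bA = \frac{1}{rT}(\hat{\bPsi}-\bPsi)\hat{\bW}' + \frac{1}{rT}\bPsi(\hat{\bW}'-\bW')$, with the first term controlled by the $\Delta_{npT}$ rate of $\hat{\bPsi}-\bPsi$ (which the paper establishes as a separate lemma, resolving the $\bXi_t(\hat{\bB}-\bB)$ concern you flag) and the second by Theorem \ref{thm:reest_X_err_bnd} combined with a spectral bound on $\bPsi$. The only discrepancy is your normalization $\norm{\bPsi}_2^2 \asymp rT\,n^{1-\delta}$ versus the paper's $\frac{1}{rT}\norm{\bPsi}_2^2 = O_p(n^{1-\delta}p^{1-\gamma})$ (the appendix takes $\bPsi_t = \bA\bX_t\bR_B'$ with $\norm{\bR_B}_2^2 \asymp p^{1-\gamma}$ rather than the orthonormal-$\bB$ convention you use), which does not affect the conclusion.
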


Theorem \ref{thm:space_krig_mse_err_bnd} presents the space kriging error bound based on sieve approximated function $\hat{\bA}(\bs)$.

\begin{theorem} \label{thm:space_krig_mse_err_bnd}
\begin{equation}
\frac{1}{pT} \norm{\hat{\bxi}(\bs_0) - \bxi(\bs_0)}^2_2 = O_p(J_n^{-2\kappa} n^{-\delta} p^{-\gamma} + \Delta_{npT} + 1/T)
\end{equation}
\end{theorem}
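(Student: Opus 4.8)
The plan is to bound the mean-squared prediction error by splitting the pointwise error at the out-of-sample site $\bs_0$ into a part coming from estimating the spatial loading \emph{function} there and a part coming from estimating the variable loading matrix and the latent factor. Reading $\hat\bxi(\bs_0)$ and $\bxi(\bs_0)$ as the $p\times T$ arrays with columns $\hat\bxi_t(\bs_0)=\hat\bB\hat\bX'_t\hat\ba(\bs_0)$ and $\bxi_t(\bs_0)=\bB\bX'_t\ba(\bs_0)$, I would write
\[
\hat\bxi_t(\bs_0) - \bxi_t(\bs_0) = \hat\bB\hat\bX'_t\bigl(\hat\ba(\bs_0) - \ba(\bs_0)\bigr) + \bigl(\hat\bB\hat\bX'_t - \bB\bX'_t\bigr)\ba(\bs_0) =: I_t + II_t,
\]
so that $\tfrac{1}{pT}\norm{\hat\bxi(\bs_0)-\bxi(\bs_0)}_2^2 \le \tfrac{2}{pT}\sum_{t=1}^{T}\bigl(\norm{I_t}_2^2 + \norm{II_t}_2^2\bigr)$ and the two groups can be treated separately. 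Throughout I would use the factor-strength Conditions \ref{cond:A_factor_strength}--\ref{cond:B_factor_strength}, which give $\norm{\bB}_2^2\asymp\norm{\hat\bB}_2^2\asymp p^{1-\gamma}$ and $\norm{\bX_t}_2=O_p(1)$; read at the level of the loading function, the weak-factor regime forces the loading-function values to be of order $n^{-\delta/2}$, so that $\norm{\ba(\bs_0)}_2^2\asymp n^{-\delta}$.

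For the factor-plus-loading term $II_t$, I would split $\hat\bB\hat\bX'_t - \bB\bX'_t = (\hat\bB-\bB)\hat\bX'_t + \bB(\hat\bX_t-\bX_t)'$ and bound $\norm{II_t}_2\le\bigl(\norm{\hat\bB-\bB}_2\norm{\hat\bX_t}_2+\norm{\bB}_2\norm{\hat\bX_t-\bX_t}_2\bigr)\norm{\ba(\bs_0)}_2$. Averaging the squared bound over $t$, the factor-estimation piece $\tfrac1T\sum_t\norm{\hat\bX_t-\bX_t}_2^2$ is controlled by Theorem \ref{thm:reest_X_err_bnd}, where the side condition $n^{\delta}p^{\gamma}T^{-1/2}=o(1)$ removes the $n^{\delta}p^{\gamma}\Delta_{npT}^2$ contribution and leaves $O_p(\Delta_{npT})$; the variable-loading piece uses the convergence rate for $\hat\bB$, obtained from an eigenvector (sin-theta) bound on $\bM_B$ exactly as in Theorem \ref{thm:A_sep_err_bnd}. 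Multiplying by $\norm{\bB}_2^2\asymp p^{1-\gamma}$ and $\norm{\ba(\bs_0)}_2^2\asymp n^{-\delta}$ and normalizing by $p$ collapses both pieces to $O_p(\Delta_{npT})$. The residual $1/T$ is the leading stochastic fluctuation entering $\hat\bX_t-\bX_t$ (equivalently the fluctuation of $\tfrac1T\sum_t\bX_t\bX'_t$ about its limit); I would retain it for transparency, noting it is in fact dominated since $\Delta_{npT}\ge n^{\delta}p^{\gamma}T^{-1}\ge 1/T$.

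For the loading-function term $I_t$ I would use $\norm{I_t}_2\le\norm{\hat\bB\hat\bX'_t}_2\,\norm{\hat\ba(\bs_0)-\ba(\bs_0)}$ with $\norm{\hat\bB\hat\bX'_t}_2^2=O_p(p^{1-\gamma})$, so the crux is $\norm{\hat\ba(\bs_0)-\ba(\bs_0)}^2$. Writing the sieve estimator componentwise as the least-squares smoother $\hat a_j(\bs_0)=u(\bs_0)'(\bU'\bU)^{-1}\bU'\hat A_{\cdot j}$, with $\bU=[u_i(\bs_k)]$, I would decompose its error into (a) the sieve bias $u(\bs_0)'\beta^0_{\cdot j}-a_j(\bs_0)=-r_j(\bs_0)$, which by the H\"older-class smoothness and the $n^{-\delta/2}$ scaling of $a_j$ is $O(n^{-\delta/2}J_n^{-\kappa})$; (b) the projected residual $u(\bs_0)'(\bU'\bU)^{-1}\bU'R_{\cdot j}$ of the same order; and (c) the propagated estimation error $u(\bs_0)'(\bU'\bU)^{-1}\bU'(\hat A_{\cdot j}-A_{\cdot j})$. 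Squaring and summing over $j$, the bias parts (a)--(b) give $O_p(n^{-\delta}J_n^{-2\kappa})$, while (c) is controlled by Proposition \ref{thm:reest_A_err_bnd}, $\tfrac{1}{np}\norm{\hat\bA-\bA}_F^2=O_p(\Delta_{npT})$, and folds into $\Delta_{npT}$. Since the loading error is the same for every $t$, the sum $\tfrac{1}{pT}\sum_t\norm{I_t}_2^2$ cancels the factor $T$ and multiplies by $p^{1-\gamma}/p=p^{-\gamma}$, so the bias part becomes exactly $O_p(n^{-\delta}p^{-\gamma}J_n^{-2\kappa})$, matching the first term of the stated rate.

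The delicate step is part (c) together with the out-of-sample evaluation: Proposition \ref{thm:reest_A_err_bnd} bounds $\hat\bA-\bA$ only in the averaged $\tfrac1{np}\norm{\cdot}_F^2$ sense, and I must control how this error propagates through the linear smoother to the single new point $\bs_0$. This requires design-regularity assumptions guaranteeing $\lambda_{\min}\!\bigl(\tfrac1n\bU'\bU\bigr)$ bounded away from zero and $\norm{u(\bs_0)}_2^2=O(J_n)$, so that the pointwise variance at $\bs_0$ does not blow up relative to the integrated error, and I must verify that the cross terms between sieve bias and estimation error are of lower order. Assembling the two groups then yields $\tfrac{1}{pT}\norm{\hat\bxi(\bs_0)-\bxi(\bs_0)}_2^2=O_p\bigl(J_n^{-2\kappa}n^{-\delta}p^{-\gamma}+\Delta_{npT}+1/T\bigr)$.
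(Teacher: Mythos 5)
Your decomposition into a loading-function-error term and a factor-plus-variable-loading term (with the latter split over $\hat\bB-\bB$ and $\hat\bX_t-\bX_t$) is, up to notation, exactly the paper's three-term split of $\hat{\bxi}(\bs_0)-\bxi(\bs_0)$ into contributions from $\hat{\bg}(\bs_0)-\ba(\bs_0)$, $\hat{\bW}'-\bW'$, and $\hat{\bQ}_B-\bQ_B$, and you invoke the same ingredients (the sieve approximation lemma, Theorem \ref{thm:reest_X_err_bnd}/Proposition \ref{thm:reest_A_err_bnd}, and the eigenvector rate for the variable loading), so the proposal is correct and follows essentially the paper's route. The only small discrepancy is bookkeeping: in the paper the $1/T$ term arises from the $(\hat{\bQ}_B-\bQ_B)\bW'\ba(\bs_0)$ piece rather than from the factor fluctuation, but as you note it is dominated by $\Delta_{npT}$ either way.
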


\section{Simulation} \label{sec:simu}
In this section we study the numerical performance of the proposed method on synthetic datasets. We let $\bs_1, \cdots, \bs_n$ be drawn randomly from the uniform distribution on $[-1,1]^2$ and the observed data $\by_t(\bs)$ be generated according to model (\ref{eqn:stvp_c0}),
\begin{equation} 
\by_t(\bs) = \bxi_t(\bs) + \bepsilon_t(\bs) = \bB \bX'_t \ba(\bs) + \bepsilon_t(\bs). \nonumber  
\end{equation}
The dimensions of $\bX_t$ are chosen to be $d=3$, $r=2$, and are fixed in all simulations. The latent factor $\bX_t$ is generated from the Gaussian matrix time series (\ref{eqn:mar1})
\begin{equation} 
\bX_{t} = \bPhi_R \, \bX_{t-1} \, \bPhi_C + \bU_t, \nonumber
\end{equation}
where $\bPhi_R = diag(0.7, \, 0.8, \, 0.9)$, $\bPhi_C = diag(0.8, \, 0.6)$ and the entries of $\bU_t$ are white noise Gaussian process with mean $\bzero$ and covariance structure such that $\bSigma_U = \Cov{vec(\bU_t)}$:
\begin{itemize}
\item Model I: $\bSigma_U=\bI_{dr}$. (now)
\item Model II: Kronecker product covariance structure $\bSigma_U=\bSigma_C \otimes \bSigma_R$, where $\bSigma_R$ and $\bSigma_C$ are of sizes $d \times d$ and $r \times r$, respectively. Both $\bSigma_R$ and $\bSigma_C$ have values 1 on the diagonal entries and 0.2 on the off-diagonal entries.
\item Model III: Arbitrary covariance matrix $\bSigma_U$.
\end{itemize}

The entries of $\bB$ is independently sampled from the uniform distribution $\calU(-1,1) \cdot p^{\gamma/2}$. The nugget process $\bepsilon_t(\bs)$ are independent and normal with mean $\bzero$ and the covariance $(1+s_1^2+s_2^2)/2\sqrt{3}\cdot \bI_{p}$. The basis functions $a_j(\bs)$'s are designed to be 
\[
a_1(\bs) = (s_1 - s_2)/2, \quad a_2(\bs)=\cos \left( \pi \sqrt{2(s_1^2+s_2^2)} \right), \quad a_3(\bs)=1.5s_1s_2. 
\]

With the above generating model setting, the signal-noise-ratio of $p$-dimensional variable, which is defined as
\[ 
SNR \equiv \frac{ \int_{\bs \in [-1,1]^2} Trace \left[ Cov  \left(\bxi_{t}(\bs) \right) \right] d\bs  }{ \int_{\bs \in [-1,1]^2} Trace \left[ Cov \left(\bepsilon_{t}(\bs)\right) \right] d\bs } \approx 2.58.
\]

We run $200$ simulations for each combination of $n = 50, 100, 200, 400$, $p = 10, 20, 40$, and $T= 60, 120, 240$. With each simulation, we calculate $\hat{d}$, $\hat{r}$, $\hat{\bA}_1$, $\hat{\bA_2}$, $\hat{\bB}$ and $\hat{\bXi}_t$, reestimate $\hat{\bA}$ and $\widetilde{\bXi}_t$, then use $\hat{\bA}$ to get approximated $\hat{a}_j(\bs)$ following the estimation procedure described in Section \ref{sec:est}.

Table \ref{table:freq_dr_est} presents the relative frequencies of estimated rank pairs over 200 simulations. The columns corresponding to the true rank pair $(3,2)$ is highlighted.  

The performance of correctly estimating the loading spaces are measured by the space distance between the estimated and true loading matrices $\hat{\bA}$ and $\bA$, which is defined as 
\[
\calD(\calM(\hat{\bA}), \calM(\bA)) = \left( 1 - \frac{1}{\max(d,\hat{d})} tr\left( \hat{\bA} (\hat{\bA}'\hat{\bA})^{-1} \hat{\bA}' \cdot \bA (\bA'\bA)^{-1} \bA' \right) \right)^{\frac{1}{2}}.
\]
It can be shown that $\calD(\calM(\hat{\bA}), \calM(\bA))$ takes its value in $[0,1]$, it equals to $0$ if and only if $\calM(\hat{\bA}) = \calM(\bA)$, and equals to $1$ if and only if $\calM(\hat{\bA}) \perp \calM(\bA)$.

\begin{table}[htbp!]
\centering
\caption{Relative frequency of estimated rank pair $(\hat{d}, \hat{r})$ over 200 simulations. The columns correspond to the true value pair $(3,2)$ are highlighted. Blank cell represents zero value.}
\label{table:freq_dr_est}
\resizebox{0.9\textwidth}{!}{%
\begin{tabular}{ccc|ccccc|cccccc}
\hline
\multicolumn{3}{c|}{$(\hat{d}, \, \hat{r})$} & \multicolumn{5}{c|}{$\gamma = 0$} & \multicolumn{6}{c}{$\gamma = 0.5$} \\ \hline
T & p & n & \cellcolor[HTML]{EFEFEF}(3,2) & (3,1) & (2,2) & (1,2) & (1,1) & \cellcolor[HTML]{EFEFEF}(3,2) & (3,1) & (2,2) & (2,1) & (1,2) & (1,1) \\ \hline
60 & 10 & 50 & \cellcolor[HTML]{EFEFEF}0.74 & 0.04 & 0.04 & 0.18 & 0.02 & \cellcolor[HTML]{EFEFEF}0.11 & 0.01 & 0.13 & 0.01 & 0.61 & 0.14 \\
120 & 10 & 50 & \cellcolor[HTML]{EFEFEF}0.93 & 0.07 &  &  & 0.01 & \cellcolor[HTML]{EFEFEF}0.37 & 0.05 & 0.06 & 0.02 & 0.42 & 0.09 \\
240 & 10 & 50 & \cellcolor[HTML]{EFEFEF}0.95 & 0.06 &  &  &  & \cellcolor[HTML]{EFEFEF}0.82 & 0.10 & 0.01 &  & 0.07 & 0.02 \\ \hdashline
60 & 20 & 50 & \cellcolor[HTML]{EFEFEF}0.86 &  & 0.02 & 0.13 &  & \cellcolor[HTML]{EFEFEF}0.02 &  & 0.10 &  & 0.88 & 0.01 \\
120 & 20 & 50 & \cellcolor[HTML]{EFEFEF}1.00 &  &  &  &  & \cellcolor[HTML]{EFEFEF}0.08 &  & 0.04 &  & 0.88 &  \\
240 & 20 & 50 & \cellcolor[HTML]{EFEFEF}1.00 &  &  &  &  & \cellcolor[HTML]{EFEFEF}0.49 &  & 0.01 &  & 0.50 &  \\ \hdashline
60 & 40 & 50 & \cellcolor[HTML]{EFEFEF}0.96 &  & 0.01 & 0.04 &  & \cellcolor[HTML]{EFEFEF}0.03 &  & 0.09 &  & 0.89 &  \\
120 & 40 & 50 & \cellcolor[HTML]{EFEFEF}1.00 &  &  &  &  & \cellcolor[HTML]{EFEFEF}0.02 &  & 0.07 &  & 0.91 &  \\
240 & 40 & 50 & \cellcolor[HTML]{EFEFEF}1.00 &  &  &  &  & \cellcolor[HTML]{EFEFEF}0.32 &  & 0.01 &  & 0.68 &  \\ \hline
60 & 10 & 100 & \cellcolor[HTML]{EFEFEF}0.94 & 0.04 & 0.02 &  &  & \cellcolor[HTML]{EFEFEF}0.64 & 0.11 & 0.20 & 0.02 & 0.03 & 0.01 \\
120 & 10 & 100 & \cellcolor[HTML]{EFEFEF}0.96 & 0.05 &  &  &  & \cellcolor[HTML]{EFEFEF}0.93 & 0.07 &  & 0.01 &  &  \\
240 & 10 & 100 & \cellcolor[HTML]{EFEFEF}0.97 & 0.03 &  &  &  & \cellcolor[HTML]{EFEFEF}0.94 & 0.06 &  &  &  &  \\ \hdashline
60 & 20 & 100 & \cellcolor[HTML]{EFEFEF}1.00 &  &  &  &  & \cellcolor[HTML]{EFEFEF}0.73 &  & 0.22 &  & 0.06 &  \\
120 & 20 & 100 & \cellcolor[HTML]{EFEFEF}1.00 &  &  &  &  & \cellcolor[HTML]{EFEFEF}0.97 &  & 0.04 &  &  &  \\
240 & 20 & 100 & \cellcolor[HTML]{EFEFEF}1.00 &  &  &  &  & \cellcolor[HTML]{EFEFEF}1.00 &  &  &  &  &  \\ \hdashline
60 & 40 & 100 & \cellcolor[HTML]{EFEFEF}1.00 &  &  &  &  & \cellcolor[HTML]{EFEFEF}0.72 &  & 0.24 &  & 0.05 &  \\
120 & 40 & 100 & \cellcolor[HTML]{EFEFEF}1.00 &  &  &  &  & \cellcolor[HTML]{EFEFEF}0.96 &  & 0.04 &  &  &  \\
240 & 40 & 100 & \cellcolor[HTML]{EFEFEF}1.00 &  &  &  &  & \cellcolor[HTML]{EFEFEF}1.00 &  &  &  &  &  \\ \hline  
60 & 10 & 200 & \cellcolor[HTML]{EFEFEF}0.98 & 0.03 &  &  &  & \cellcolor[HTML]{EFEFEF}0.84 & 0.11 & 0.03 &  & 0.03 & 0.01 \\
120 & 10 & 200 & \cellcolor[HTML]{EFEFEF}0.97 & 0.04 &  &  &  & \cellcolor[HTML]{EFEFEF}0.94 & 0.07 &  &  &  &  \\
240 & 10 & 200 & \cellcolor[HTML]{EFEFEF}0.97 & 0.03 &  &  &  & \cellcolor[HTML]{EFEFEF}0.95 & 0.05 &  &  &  &  \\  \hdashline
60 & 20 & 200 & \cellcolor[HTML]{EFEFEF}1.00 &  &  &  &  & \cellcolor[HTML]{EFEFEF}0.94 &  & 0.02 &  & 0.04 &  \\
120 & 20 & 200 & \cellcolor[HTML]{EFEFEF}1.00 &  &  &  &  & \cellcolor[HTML]{EFEFEF}1.00 &  &  &  &  &  \\
240 & 20 & 200 & \cellcolor[HTML]{EFEFEF}1.00 &  &  &  &  & \cellcolor[HTML]{EFEFEF}1.00 &  &  &  &  &  \\  \hdashline
60 & 40 & 200 & \cellcolor[HTML]{EFEFEF}1.00 &  &  &  &  & \cellcolor[HTML]{EFEFEF}0.97 &  & 0.01 &  & 0.03 &  \\
120 & 40 & 200 & \cellcolor[HTML]{EFEFEF}1.00 &  &  &  &  & \cellcolor[HTML]{EFEFEF}1.00 &  &  &  &  &  \\
240 & 40 & 200 & \cellcolor[HTML]{EFEFEF}1.00 &  &  &  &  & \cellcolor[HTML]{EFEFEF}1.00 &  &  &  &  &  \\ \hline
60 & 10 & 400 & \cellcolor[HTML]{EFEFEF}0.98 & 0.02 &  &  &  & \cellcolor[HTML]{EFEFEF}0.90 & 0.09 &  &  & 0.02 & 0.01 \\
120 & 10 & 400 & \cellcolor[HTML]{EFEFEF}0.97 & 0.03 &  &  &  & \cellcolor[HTML]{EFEFEF}0.93 & 0.08 &  &  &  &  \\
240 & 10 & 400 & \cellcolor[HTML]{EFEFEF}0.97 & 0.03 &  &  &  & \cellcolor[HTML]{EFEFEF}0.96 & 0.04 &  &  &  &  \\  \hdashline
60 & 20 & 400 & \cellcolor[HTML]{EFEFEF}1.00 &  &  &  &  & \cellcolor[HTML]{EFEFEF}1.00 &  &  &  & 0.01 &  \\
120 & 20 & 400 & \cellcolor[HTML]{EFEFEF}1.00 &  &  &  &  & \cellcolor[HTML]{EFEFEF}1.00 &  &  &  &  &  \\
240 & 20 & 400 & \cellcolor[HTML]{EFEFEF}1.00 &  &  &  &  & \cellcolor[HTML]{EFEFEF}1.00 &  &  &  &  &  \\  \hdashline
60 & 40 & 400 & \cellcolor[HTML]{EFEFEF}1.00 &  &  &  &  & \cellcolor[HTML]{EFEFEF}1.00 &  &  &  & 0.01 &  \\
120 & 40 & 400 & \cellcolor[HTML]{EFEFEF}1.00 &  &  &  &  & \cellcolor[HTML]{EFEFEF}1.00 &  &  &  &  &  \\
240 & 40 & 400 & \cellcolor[HTML]{EFEFEF}1.00 &  &  &  &  & \cellcolor[HTML]{EFEFEF}1.00 &  &  &  &  &  \\ \hline
\end{tabular}%
}
\end{table}

Figure \ref{fig:spdistA.0} presents the box plot of the average space distance
\[
\frac{1}{2}\left( \calD(\calM(\hat{\bA}_1), \calM(\bA_1)) + \calD(\calM(\hat{\bA}_2), \calM(\bA_2))  \right)
\]
and compare it with the box plot of space distance between re-estimated $\hat{\bA}$ and the truth $\bA$. 

Figure \ref{fig:spdistB.0} presents the box plot of the space distance between $\hat{\bB}$ and the truth $\bB$.

\begin{figure}[ht!]
	\centering
	\includegraphics[width=\linewidth,height=\textheight,keepaspectratio=true]{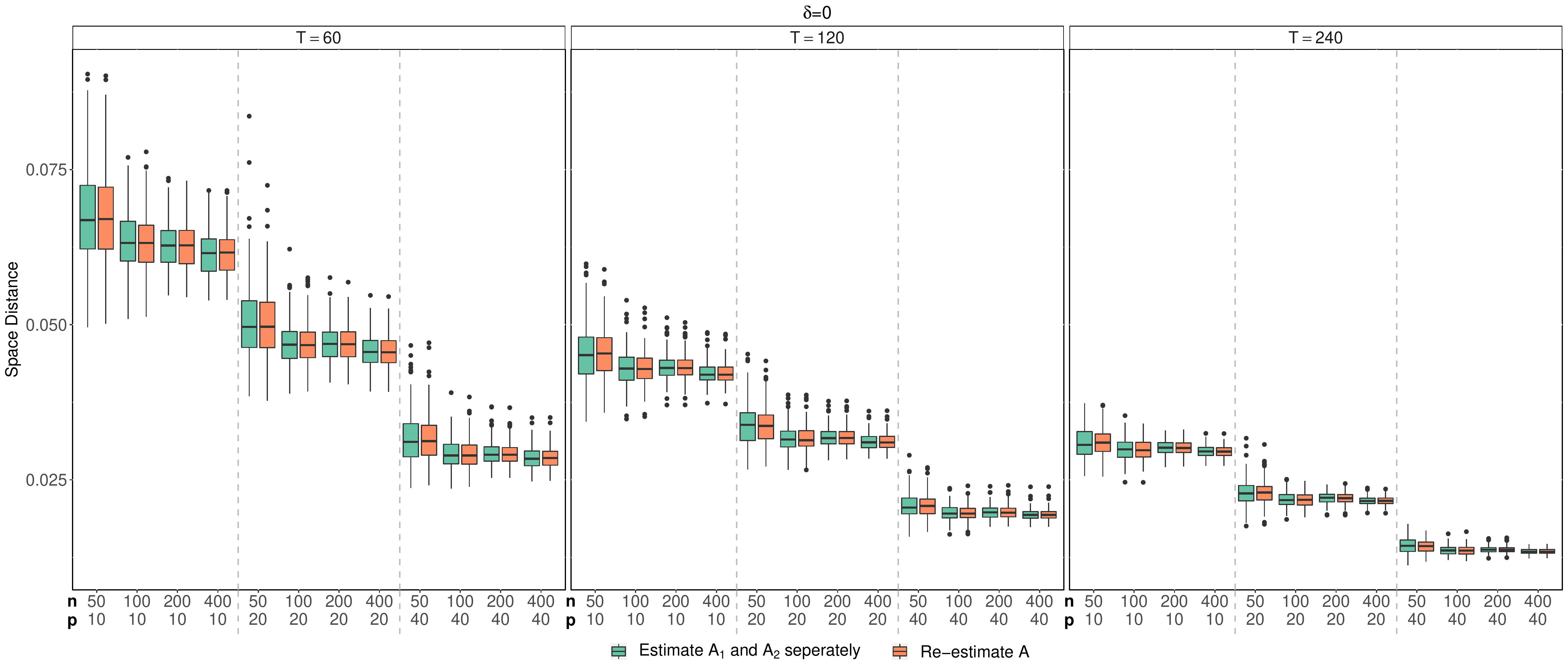}
	\caption{Box-plots of the estimation accuracy measured by $\mathcal{D}(\widehat{\bA}, \bA)$ for the case of orthogonal constraints. Gray boxes represent the average of $\mathcal{D}(\widehat{\bA}_1, \bA_1)$ and $\mathcal{D}(\widehat{\bA}_2, \bA_2)$. The results are based on $200$ iterations. See Table \ref{table:spdist_msd_table} in Appendix \ref{appendix:tableplots} for mean and standard deviations of the spatial distance. }
	\label{fig:spdistA.0}
\end{figure} 

\begin{figure}[ht!]
	\centering
	\includegraphics[width=\linewidth,height=\textheight,keepaspectratio=true]{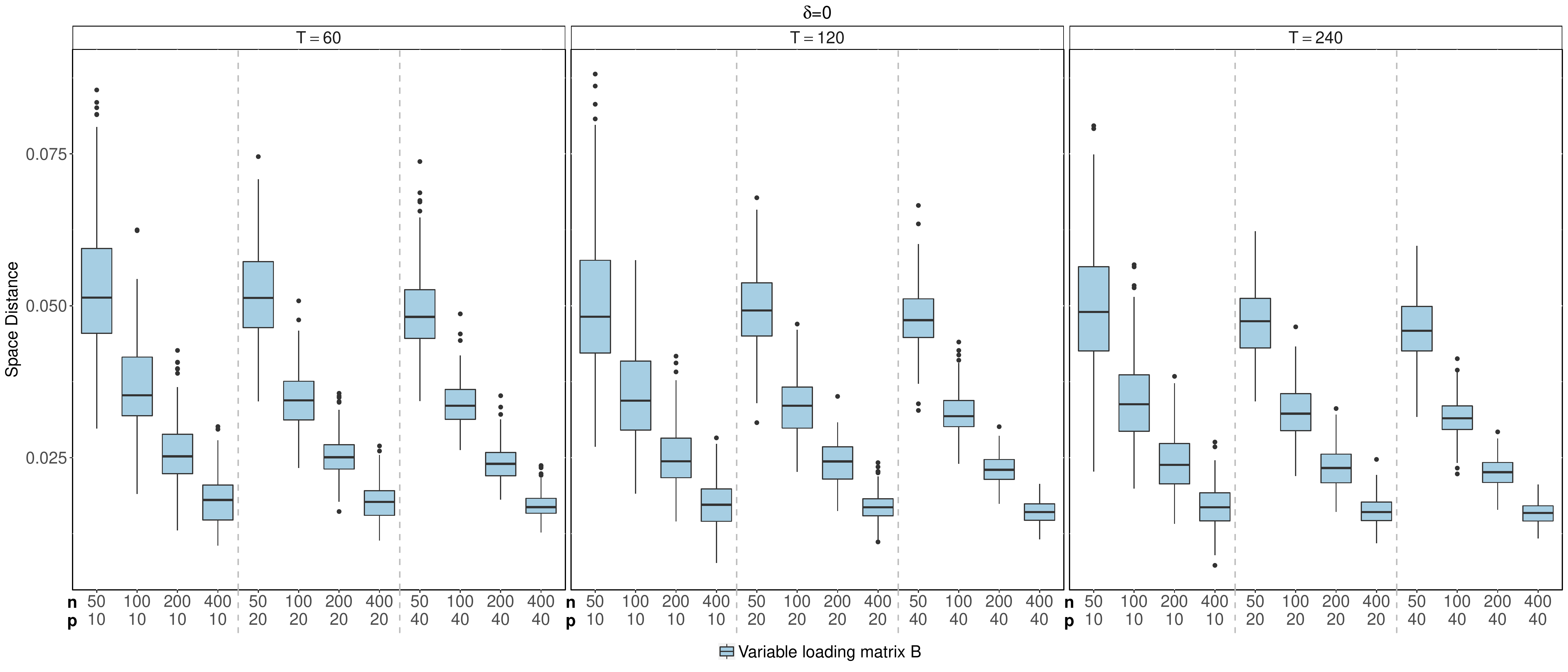}
	\caption{Box-plots of the estimation accuracy of variable loading matrix measured by $\mathcal{D}(\widehat{\bB}, \bB)$. The results are based on $200$ iterations. See Table \ref{table:spdist_msd_table} in Appendix \ref{appendix:tableplots} for mean and standard deviations of the spatial distance. }
	\label{fig:spdistB.0}
\end{figure} 

\begin{figure}[ht!]
	\centering
	\includegraphics[width=\linewidth,height=\textheight,keepaspectratio=true]{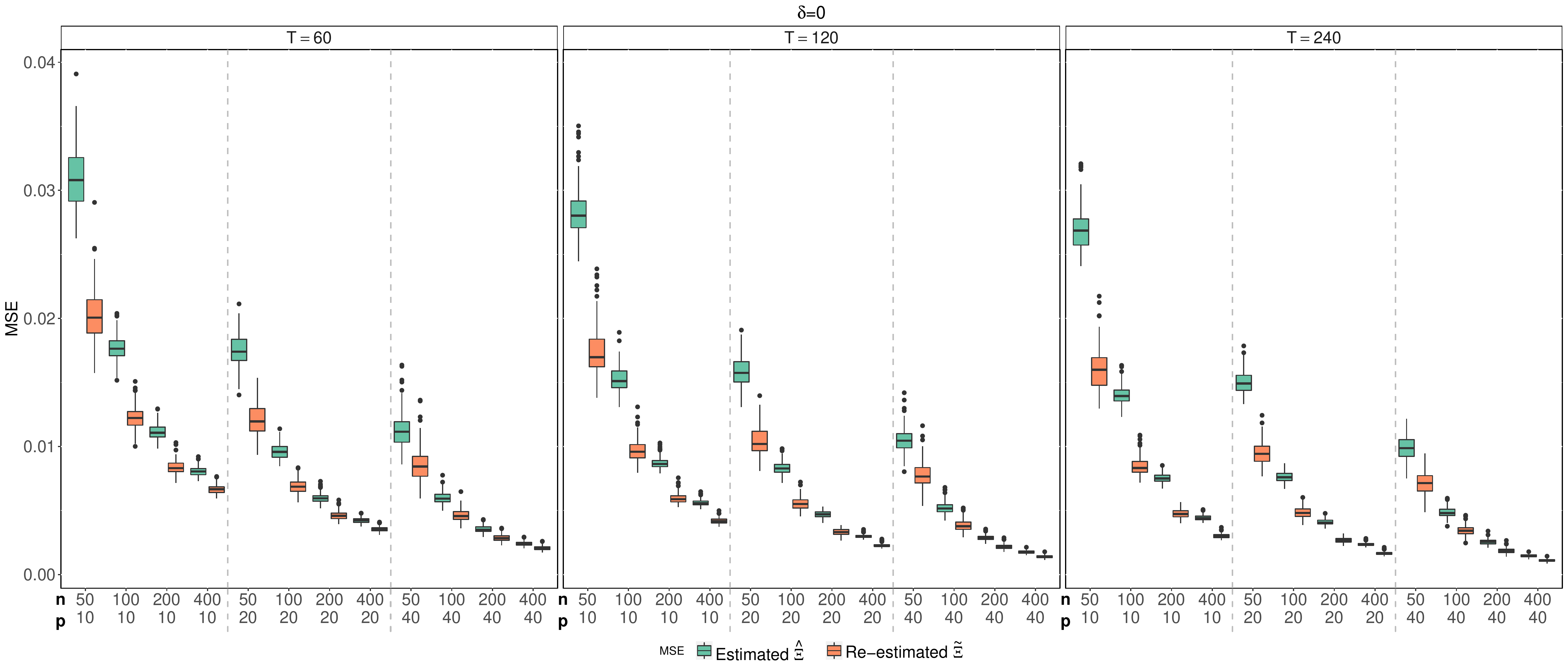}
	\caption{Box-plots of the estimation of signals MSE. Gray boxes represent the our procedure. The results are based on $200$ iterations. See Table \ref{table:spdist_msd_table} in Appendix \ref{appendix:tableplots} for mean and standard deviations of the MSE. }
	\label{fig:signalest.0}
\end{figure} 

\begin{figure}[ht!]
	\centering
	\includegraphics[width=\linewidth,height=\textheight,keepaspectratio=true]{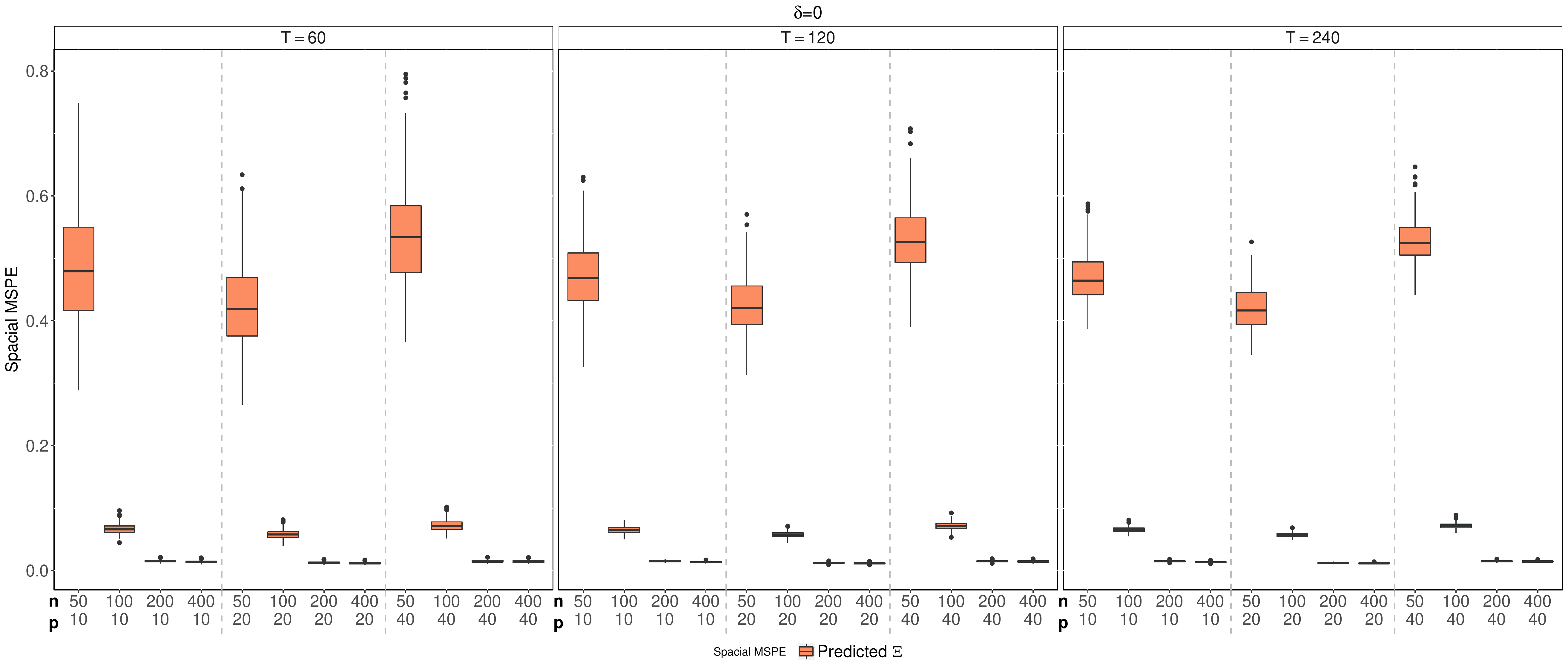}
	\caption{Box-plots of the spatial prediction measured by average MSPE for $50$ new locations. Colored boxes represent the our model. The results are based on $200$ iterations. See Table \ref{table:STprediction} in Appendix \ref{appendix:tableplots} for mean and standard deviations of the MSPE. }
	\label{fig:signalspaceprd.0}
\end{figure} 

\begin{figure}[ht!]
	\centering
	\includegraphics[width=\linewidth,height=\textheight,keepaspectratio=true]{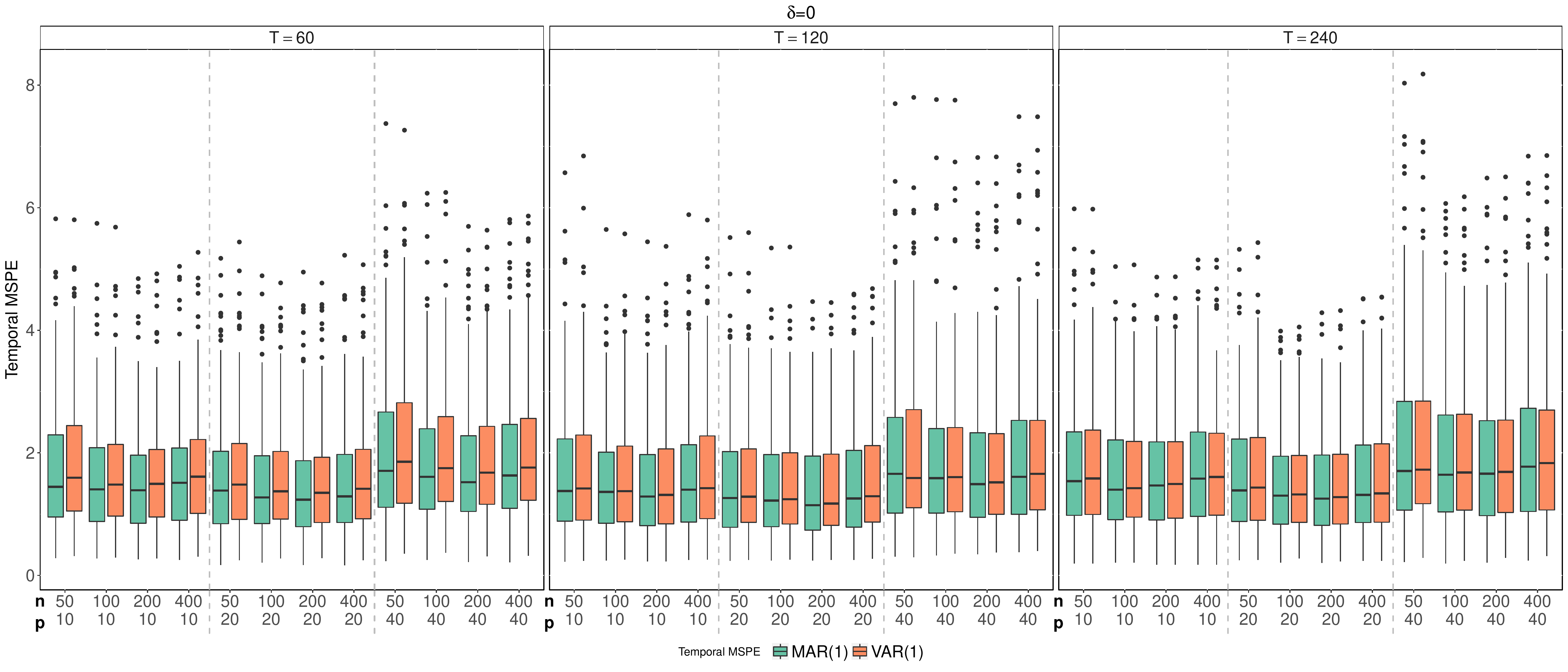}
	\caption{Box-plots of the one step ahead forecasting accuracy measured by MSPE. Gray boxes represent the MAR(1) model. The results are based on $200$ iterations. See Table \ref{table:STprediction} in Appendix \ref{appendix:tableplots} for mean and standard deviations of the MSPE. }
	\label{fig:signaltimeprd.0}
\end{figure}

Define the mean squared error of estimated signals $\hat{\bxi}$ as
\[
MSE(\hat{\bxi}) = \frac{1}{npT} \sum_{t=1}^{T} \sum_{i=1}^{n} \norm{ \hat{\bxi}_t(\bs_i) - \bxi_t(\bs_i) }^2_2.
\]
We compare the mean square error between first estimated $\hat{\bXi}_t$ defined in (\ref{eqn:signal_mat_est_sep}) and re-estimated $\tilde{\bXi}_t$ defined as
\[
\tilde{\bXi} = \begin{bmatrix} \tilde{\bXi}_1, \cdots, \tilde{\bXi}_T \end{bmatrix} = \tilde{\bA} \tilde{\bX} \hat{\bB}'.
\]
The box plots of $MSE(\hat{\bxi})$ and $MSE(\tilde{\bxi})$ are in Figure \ref{fig:signalspaceprd.0}. Re-estimated provides much more accurate estimate for $\bxi_t(\bs_j)$ than $\tilde{\bxi}_t(\bs_j)$ does. 

To demonstrate the performance of spatial prediction, we generate data at a set $\calS_0$ of $50$ new locations randomly sampled from $\calU[-1,1]^2$. For each $t = 1, \ldots, T$, we calculate the spatial prediction $\hat{\by}_t(\cdot) = \hat{\bxi}_t(\cdot)$ defined in (\ref{eqn:pred_xi_s0}) for each location in $\calS_0$. The mean squared spatial prediction error is calculated as 
\[
MSPE(\hat{\by}) = \frac{1}{50 p T} \sum_{t=1}^{T} \sum_{s_0 \in \calS_0} \norm{ \hat{\by}_t(\bs_0) - \bxi_t(\bs_0) }^2_2.
\]  

To demonstrate the performance of temporal forecasting, we generate $\bX_{T+h}$ according to the matrix time series (\ref{eqn:mar1}) for $h=1,2$ and compute both the one-step-ahead and two-step-ahead predictions at time $T$. The mean square temporal prediction error is computed as +
\[
MSPE(\hat{\by}_{T+h}) = \frac{1}{n p} \sum_{j=1}^{n} \norm{ \hat{\by}_{T+h}(\bs_j) - \bxi_{T+h}(\cdot) }^2_2.
\]

Figure \ref{fig:signalspaceprd.0} presents box-plots of the spatial prediction measured by average MSPE for $50$ new locations. The results are based on $200$ iterations. Figure \ref{fig:signaltimeprd.0} compares the MSPEs using matrix time series MAR(1) and vectorized time series VAR(1) estimates.

The means and standard errors of the MSPEs from 200 simulations for each model setting are reported in Table \ref{table:STprediction} in Appendix \ref{appendix:tableplots}. It also reports the means and standard errors of the MSPEs using matrix time series MAR(1) and vectorized time series VAR(1) estimates.

\section{Real Data Application} \label{sec:appl}

In this section, we apply the proposed method to the Comprehensive Climate Dataset (CCDS) -- a collection of climate records of North America. The dataset was compiled from five federal agencies sources by \cite{lozano2009spatial}. It contains monthly observations of 17 climate variables spanning from 1990 to 2001 on a $2.5 \times 2.5$ degree grid for latitudes in $(30.475, 50.475)$, and longitudes in $(-119.75, -79.75)$. The total number of observation locations is 125 and the length of the whole time series is 156. Table \ref{table:CCDS_varlist} lists the variables used in our analysis. Detailed information about data pre-processing is given in \cite{lozano2009spatial}.

\begin{table}[htpb!]
\centering
\caption{Variables and data sources in the Comprehensive Climate Dataset (CCDS)}
\label{table:CCDS_varlist}
\resizebox{0.7\textwidth}{!}{%
\begin{tabular}{l|c|l|c}
\hline
Variables (Short name) & Variable group & \multicolumn{1}{c|}{Type} & Source \\ \hline
Methane (CH4) & $CH_4$ & \multirow{4}{*}{Greenhouse Gases} & \multirow{4}{*}{NOAA} \\
Carbon-Dioxide (CO2) & $CO_2$ &  &  \\
Hydrogen (H2) & $H_2$ &  &  \\
Carbon-Monoxide (CO) & $CO$ &  &  \\ \hline
Temperature (TMP) & TMP & \multirow{8}{*}{Climate} & \multirow{8}{*}{CRU} \\
Temp Min (TMN) & TMP &  &  \\
Temp Max (TMX) & TMP &  &  \\
Precipitation (PRE) & PRE &  &  \\
Vapor (VAP) & VAP &  &  \\
Cloud Cover (CLD) & CLD &  &  \\
Wet Days (WET) & WET &  &  \\
Frost Days (FRS) & FRS &  &  \\ \hline
Global Horizontal (GLO) & SOL & \multirow{4}{*}{Solar Radiation} & \multirow{4}{*}{NCDC} \\
Direct Normal (DIR) & SOL &  &  \\
Global Extraterrestrial (ETR) & SOL &  &  \\
Direct Extraterrestrial (ETRN) & SOL &  &  \\ \hline
Utra Violet (UV) & AER & Aerosol Index & NASA \\ \hline
\end{tabular}%
}
\end{table}

We first remove the trend and annually seasonal component by taking difference between observations from the same month in consecutive years. Then we normalized this data set by removing the trend and dividing it by the standards deviation for each variable across space. We randomly select 10\% of locations and predict the value of all variables over the whole time span for these locations. We repeat the procedure 100 times and report the average spatial MSPE. 

\begin{table}[htpb!]
\centering
\caption{Multivariate kriging comparison.}
\label{table:compare_cokrig}
\begin{tabular}{c|ccc}
\hline
MSPE & LLDF & Simple & Ordinary \\ \hline
Spatial & 0.4812 & 0.7634 & 0.7312 \\
 &  &  &  \\ \hline
\end{tabular}
\end{table}

We compare the spatial prediction performance of our proposed method with the classical cokriging approaches including simple kriging and ordinary cokriging with nonbias condition which are applied to each variable separately. The R package {\it gstat} is used for the classical cokriging algorithms. Comparison of the spatial prediction performance between different methods are presented in Table \ref{table:compare_cokrig}.


\clearpage
\bibliographystyle{\mybibsty}
\bibliography{\mybib}

\begin{appendices}

\section{Proofs}  

\subsection{Factor loadings}

\begin{lemma}  \label{lemma:entrywise_conv_rate_cov_vecXt}
Let $X_{t, ij}$ denote the $ij$-th entry of $\bX_t$. Under Condition \ref{cond:vecXt_alpha_mixing} and \ref{cond:Xt_cov_fullrank_bounded}, for any $i, k = 1, \ldots, d$ and $j, l = 1, \cdots, r$, we have 
\begin{equation}
\left|\frac{1}{T} \sum_{t=1}^{T} \left( X_{t, ij} X_{t,kl} - Cov(X_{t, ij} X_{t,kl}) \right) \right| = O_p(T^{-1/2}).
\end{equation} 
\end{lemma}

\begin{lemma}  \label{lemma:4_conv_rates}
Under Conditions 1-6, it holds that
\begin{eqnarray}
\sum_{i=1}^{p} \sum_{j=1}^{p} \norm{ \widehat{\bOmega}_{s_1 s_2, ij} - \bOmega_{s_1 s_2, ij} }^2_2 & = & O_p((n_1n_2)^{1-\delta} p^{2-2\gamma} T^{-1}),   \label{Lemma2-sig-converg},\\
\sum_{i=1}^{p} \sum_{j=1}^{p} \norm{ \widehat{\bOmega}_{s_1 e_2, ij} - \bOmega_{s_1 e_2, ij} }^2_2 & = & O_p(n_1^{2-\delta} p^{2-\gamma} T^{-1}),  \label{Lemma2-sig-err-converg}, \\
\sum_{i=1}^{p} \sum_{j=1}^{p} \norm{ \widehat{\bOmega}_{e_1s_2, ij} - \bOmega_{e_1s_2, ij} }^2_2 & = & O_p(n_2^{2-\delta} p^{2-\gamma} T^{-1}),  \label{Lemma2-err-sig-converg}, \\
\sum_{i=1}^{p} \sum_{j=1}^{p} \norm{ \widehat{\bOmega}_{e_1 e_2, ij} - \bOmega_{e_1 e_2, ij} }^2_2 & = & O_p(n_1 n_2 p^2 T^{-1}).  \label{Lemma2-err-converg}
\end{eqnarray} 
\end{lemma}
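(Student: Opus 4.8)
The plan is to expand each observed column into its signal and nugget parts and track the four resulting cross-moments separately. Writing $Y_{1t,\cdot i} = \bA_1 \bX_t (B_{i\cdot})' + E_{1t,\cdot i}$ and $Y_{2t,\cdot j} = \bA_2 \bX^*_t (B_{j\cdot})' + E_{2t,\cdot j}$, the sample cross-covariance $\hat\bOmega_{A,ij}$ splits into a signal-signal piece, two signal-error pieces, and an error-error piece; these are exactly the four quantities $\hat\bOmega_{s_1 s_2,ij}$, $\hat\bOmega_{s_1 e_2,ij}$, $\hat\bOmega_{e_1 s_2,ij}$, $\hat\bOmega_{e_1 e_2,ij}$. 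First I would record that, at the population level, three of the four targets vanish: the factor process is uncorrelated with the nugget at all leads and lags, so $\bOmega_{s_1 e_2,ij} = \bOmega_{e_1 s_2,ij} = \bzero$, and because $\calS_1$ and $\calS_2$ are disjoint and the nugget is white across distinct locations, $\bOmega_{e_1 e_2,ij} = \bzero$. Hence for the mixed and error terms the lemma is a bound on pure sampling fluctuation.

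For each of the four pieces I would bound the spectral norm by the Frobenius norm and control the latter through a second-moment computation rather than a termwise $O_p$ argument (which would be illegitimate once the number of summands grows with $n$ and $p$). Concretely, I would bound $\rE\,\norm{\hat\bOmega_{\cdot\cdot,ij}}_F^2$ entrywise: under the $\alpha$-mixing and moment conditions (Conditions \ref{cond:vecXt_alpha_mixing}-\ref{cond:Xt_cov_fullrank_bounded}) each scalar sample average of a mean-zero product has variance $O(T^{-1})$ with a constant uniform in all indices, exactly as in Lemma \ref{lemma:entrywise_conv_rate_cov_vecXt}. Summing these expectations over the location and variable indices and applying Markov's inequality then delivers the stated $O_p$ rates. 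The dimensional bookkeeping is where the factor-strength conditions enter: a summation over signal coordinates in $\calS_l$ produces a factor $\norm{\bA_l}_F^2 \asymp n_l^{1-\delta}$ (Condition \ref{cond:A_factor_strength}), a summation over error coordinates in $\calS_l$ produces the full count $n_l$, a summation over the variable index carried by a signal column produces $\sum_i \norm{B_{i\cdot}}^2 = \norm{\bB}_F^2 \asymp p^{1-\gamma}$ (Condition \ref{cond:B_factor_strength}), and a summation over a variable index carried by an error column produces the full count $p$.

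Assembling the four cases is then mechanical. For the signal-signal term it is cleanest to keep the sandwich form $\hat\bOmega_{s_1 s_2,ij} - \bOmega_{s_1 s_2,ij} = \bA_1(\hat\bD_{ij} - \bD_{ij})\bA_2'$ with $\hat\bD_{ij}$ the $d\times d$ sample second moment of $\bX_t (B_{i\cdot})'$ against $(B_{j\cdot})(\bX^*_t)'$, apply submultiplicativity to pull out $\norm{\bA_1}_2^2\norm{\bA_2}_2^2 \asymp (n_1n_2)^{1-\delta}$, bound the $d\times d$ core by Lemma \ref{lemma:entrywise_conv_rate_cov_vecXt} at rate $T^{-1}$ with weight $\norm{B_{i\cdot}}^2\norm{B_{j\cdot}}^2$, and sum over $i,j$ to collect the remaining $p^{2-2\gamma}$; this produces $(n_1n_2)^{1-\delta}p^{2-2\gamma}T^{-1}$. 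The error-error term enjoys no reduction on either side and so collects the full $n_1 n_2 p^2 T^{-1}$, while the two mixed terms interpolate, the signal side contributing its reduced factor and the white-noise side its full count.

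The main obstacle will be the mixed signal-error terms, for two reasons. First, one must verify that the asymmetry is handled correctly: the signal coordinates still benefit from the $n_l^{1-\delta}$ reduction through $\bA_l$, but the nugget coordinates do not, so the two sides of the cross-moment scale differently and the exponents cannot be read off by symmetry with the signal-signal case. Second, and more importantly, because the population cross-moment is exactly zero these terms are pure variance, so the whole rate rests on the uniform $O(T^{-1})$ second-moment bound for $\frac1T\sum_t (\text{signal})_a (\text{nugget})_b$; establishing that this constant does not degrade as the indices range over the growing sets $\calS_1,\calS_2,\{1,\dots,p\}$ is the crux, and it is here that the mixing summability in Condition \ref{cond:vecXt_alpha_mixing} together with the moment bound in Condition \ref{cond:Xt_cov_fullrank_bounded} must be used carefully to keep the aggregated expectation at the advertised order.
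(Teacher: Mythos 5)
Your approach is sound and is essentially the argument the paper intends: the paper states this lemma without proof, but your four-way split of $\hat{\bOmega}_{A,ij}$ into signal--signal, two mixed, and error--error pieces is exactly the decomposition the paper itself uses in the proof of Lemma \ref{lemma:conv_rate_covYt}, and your second-moment/Frobenius-norm bookkeeping via Lemma \ref{lemma:entrywise_conv_rate_cov_vecXt} and Conditions \ref{cond:A_factor_strength}--\ref{cond:B_factor_strength} is the standard route. One point worth recording: carried out faithfully, your accounting for the mixed terms yields $O_p(n_1^{1-\delta} n_2\, p^{2-\gamma} T^{-1})$ for the $s_1 e_2$ piece (reduced factor $\norm{\bA_1}_2^2 \asymp n_1^{1-\delta}$ from the signal side in $\calS_1$, full count $n_2$ from the nugget side in $\calS_2$) and symmetrically $O_p(n_1 n_2^{1-\delta} p^{2-\gamma} T^{-1})$ for $e_1 s_2$, which matches the stated $n_1^{2-\delta}$ and $n_2^{2-\delta}$ rates only when $n_1 \asymp n_2$; this appears to be an imprecision in the lemma's statement rather than in your argument, and it is immaterial for the downstream results, which all specialize to $n_1 \asymp n_2 \asymp n$.
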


\begin{lemma}  \label{lemma:conv_rate_covYt}
Under Conditions 1-6, it holds that
\begin{equation}
\sum_{i=1}^{p} \sum_{j=1}^{p} \norm{ \widehat{\bOmega}_{ij} - \bOmega_{ij} }^2_2 = = O_p \left( n_1^{2-\delta} p^{2-\gamma} T^{-1} + n_2^{2-\delta} p^{2-\gamma} T^{-1} + n_1 n_2 p^2 T^{-1} \right) .
\end{equation} 
\end{lemma}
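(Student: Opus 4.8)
The plan is to reduce this statement to the four rates already established in Lemma~\ref{lemma:4_conv_rates} by decomposing each observed cross-covariance into signal and noise parts and matching the terms against the population object.

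First I would expand $\widehat{\bOmega}_{ij}$ using the representation $Y_{1t, \cdot i} = \bA_1 \bX_t B_{i \cdot} + E_{1t, i}$ and $Y_{2t, \cdot j} = \bA_2 \bX^*_t B_{j \cdot} + E_{2t, j}$ from (\ref{eqn:stvp_mat_div_1}). Writing each factor as ``signal plus noise'' and expanding the outer product in the definition of the sample cross-covariance gives the four-term split
\[
\widehat{\bOmega}_{ij} = \widehat{\bOmega}_{s_1 s_2, ij} + \widehat{\bOmega}_{s_1 e_2, ij} + \widehat{\bOmega}_{e_1 s_2, ij} + \widehat{\bOmega}_{e_1 e_2, ij},
\]
where the four sample terms are exactly those appearing in Lemma~\ref{lemma:4_conv_rates}. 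Because $\bE_t$ is white noise uncorrelated with $\bX_t$ at all leads and lags, and because $\calS_1$ and $\calS_2$ are disjoint so that $\Cov{\bepsilon_t(\bu), \bepsilon_t(\bv)} = \bzero$ whenever $\bu \in \calS_1$ and $\bv \in \calS_2$, the three mixed population terms vanish, namely $\bOmega_{s_1 e_2, ij} = \bOmega_{e_1 s_2, ij} = \bOmega_{e_1 e_2, ij} = \bzero$. Hence $\bOmega_{ij} = \bOmega_{s_1 s_2, ij}$, and the estimation error separates block by block:
\[
\widehat{\bOmega}_{ij} - \bOmega_{ij} = \sum_{\bullet} \left( \widehat{\bOmega}_{\bullet, ij} - \bOmega_{\bullet, ij} \right),
\]
the sum running over the four index blocks $s_1 s_2,\, s_1 e_2,\, e_1 s_2,\, e_1 e_2$.

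Next I would apply the elementary inequality $\norm{M_1 + M_2 + M_3 + M_4}^2_2 \le 4 \sum_{k=1}^4 \norm{M_k}^2_2$ to each summand, sum over $i, j = 1, \ldots, p$, and substitute the four bounds from Lemma~\ref{lemma:4_conv_rates}, producing the rate
\[
(n_1 n_2)^{1-\delta} p^{2-2\gamma} T^{-1} + n_1^{2-\delta} p^{2-\gamma} T^{-1} + n_2^{2-\delta} p^{2-\gamma} T^{-1} + n_1 n_2 p^2 T^{-1}.
\]
The only point needing care is that the stated bound omits the signal--signal rate. Since $\delta, \gamma \in [0,1]$ we have $1 - \delta \le 1$ and $2 - 2\gamma \le 2$, so $(n_1 n_2)^{1-\delta} p^{2-2\gamma} \le n_1 n_2 p^2$; that is, the signal--signal term is dominated by the noise--noise term and may be dropped, which yields the claimed bound. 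I do not anticipate any genuine obstacle here: all the probabilistic content is carried by Lemma~\ref{lemma:4_conv_rates}, and the present lemma amounts to an assembly step plus the one-line dominance check just described.
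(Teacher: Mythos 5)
Your proposal is correct and follows essentially the same route as the paper: the same four-block signal/noise decomposition of $\widehat{\bOmega}_{ij}$, the same inequality $\norm{\sum_k M_k}^2_2 \le 4\sum_k \norm{M_k}^2_2$, and substitution of the four rates from Lemma~\ref{lemma:4_conv_rates}. Your added remarks — that the three mixed population blocks vanish because the noise is uncorrelated with the signal and the two location sets are disjoint, and that the signal--signal rate $(n_1 n_2)^{1-\delta}p^{2-2\gamma}T^{-1}$ is dominated by $n_1 n_2 p^2 T^{-1}$ since $\delta,\gamma\in[0,1]$ — are exactly the steps the paper leaves implicit, so they are a welcome clarification rather than a deviation.
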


\begin{proof}

\begin{align*}
\hat{\bOmega}_{ij} & = \frac{1}{T} \sum_{t=1}^{T} \bY_{1 t, \cdot i} \bY'_{2 t, \cdot j} \\
& = \frac{1}{T} \sum_{t=1}^{T} \left( \bA_1 \bX_t B_{i \cdot} + E_{t, \cdot i} \right) \left( \bA_2 \bX_t B_{j \cdot} + E_{t, \cdot j} \right)' \\
& = \hat{\bOmega}_{s, ij} + \hat{\bOmega}_{se, ij} + \hat{\bOmega}_{es, ij} + \hat{\bOmega}_{e, ij}.
\end{align*}

\begin{align*}
\sum_{i=1}^{p} \sum_{j=1}^{p} \norm{\hat{\bOmega}_{ij} - \bOmega_{ij}}^2_2 & \le 4 \sum_{i=1}^{p} \sum_{j=1}^{p} \left( \norm{ \hat{\bOmega}_{s_1s_2, ij} - \bOmega_{s_1s_2, ij} }^2_2 + \norm{\hat{\bOmega}_{s_1e_2, ij} - \bOmega_{s_1e_2, ij}}^2_2 + \norm{\hat{\bOmega}_{e_1s_2, ij} - \bOmega_{e_1s_2, ij}}^2_2 + \norm{\hat{\bOmega}_{e_1e_2, ij} - \bOmega_{e_1e_2, ij}}^2_2 \right) \\
& = O_p(n_1^{2-\delta} p^{2-\gamma} T^{-1} + n_2^{2-\delta} p^{2-\gamma} T^{-1} + n_1 n_2 p^2 T^{-1}) 
\end{align*}
\end{proof}

\begin{lemma}  \label{lemma:conv_rate_Mhat}
Under Conditions 1-6 and $m_1 p_1^{-1+\delta_1} m_2 p_2^{-1+\delta_2} T^{-1/2} = o_p(1)$, it holds that
\begin{equation}
\norm{ \widehat{\bM}_1 - \bM_1 }_2 = O_p \left( n^{2-\delta} p^{2-\gamma} T^{-1/2} \right).
\end{equation} 
\end{lemma}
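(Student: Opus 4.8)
The plan is to reduce the bound on $\|\widehat{\bM}_1 - \bM_1\|_2$ to the covariance bound already established in Lemma \ref{lemma:conv_rate_covYt} together with a purely deterministic bound on the size of the population cross-covariances. Recall that $\bM_1 = \sum_{i=1}^p\sum_{j=1}^p \bOmega_{ij}\bOmega_{ij}'$ and $\widehat{\bM}_1 = \sum_{i=1}^p\sum_{j=1}^p \widehat{\bOmega}_{ij}\widehat{\bOmega}_{ij}'$, where in the population $\bOmega_{ij}$ contains only the signal part, since $\bE_t$ is white noise uncorrelated with $\bX_t$ across the disjoint location sets $\calS_1,\calS_2$. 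Writing $\bDel_{ij} = \widehat{\bOmega}_{ij}-\bOmega_{ij}$, I would first use the algebraic identity
\[
\widehat{\bOmega}_{ij}\widehat{\bOmega}_{ij}' - \bOmega_{ij}\bOmega_{ij}' = \bDel_{ij}\bDel_{ij}' + \bDel_{ij}\bOmega_{ij}' + \bOmega_{ij}\bDel_{ij}',
\]
followed by the triangle inequality and submultiplicativity of the spectral norm to obtain
\[
\|\widehat{\bM}_1 - \bM_1\|_2 \le \sum_{i,j}\|\bDel_{ij}\|_2^2 + 2\sum_{i,j}\|\bDel_{ij}\|_2\,\|\bOmega_{ij}\|_2.
\]

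For the cross term I would apply the Cauchy--Schwarz inequality over the double sum, giving the factorization into $\big(\sum_{i,j}\|\bDel_{ij}\|_2^2\big)^{1/2}\big(\sum_{i,j}\|\bOmega_{ij}\|_2^2\big)^{1/2}$. The first factor is controlled directly by Lemma \ref{lemma:conv_rate_covYt}, which under $n_1\asymp n_2\asymp n$ gives $\sum_{i,j}\|\bDel_{ij}\|_2^2 = O_p(n^2 p^2 T^{-1})$, the error--error term dominating since $\delta,\gamma\ge0$. The remaining task, and the main piece of work, is the deterministic bound on $\sum_{i,j}\|\bOmega_{ij}\|_2^2$.

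For that second factor I would use the factorization $\bOmega_{ij} = \bA_1\,\Cov{\bX_t B_{i\cdot},\,\bX^*_t B_{j\cdot}}\,\bA_2'$ from (\ref{eqn:Omega_ij}), so that $\|\bOmega_{ij}\|_2 \le \|\bA_1\|_2\,\|\bA_2\|_2\,\|\Cov{\bX_t B_{i\cdot},\bX^*_t B_{j\cdot}}\|_2$. The $d\times d$ covariance is a bilinear form in the rows $B_{i\cdot}$ and $B_{j\cdot}$ whose coefficients are fourth-order moments of the entries of $\bX_t$; these are uniformly bounded by Condition \ref{cond:Xt_cov_fullrank_bounded}, so there is a constant $C$ with $\|\Cov{\bX_t B_{i\cdot},\bX^*_t B_{j\cdot}}\|_2 \le C\,\|B_{i\cdot}\|\,\|B_{j\cdot}\|$. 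Squaring and separating the indices gives $\sum_{i,j}\|\Cov{\,\cdot\,}\|_2^2 \le C^2\big(\sum_i\|B_{i\cdot}\|^2\big)\big(\sum_j\|B_{j\cdot}\|^2\big) = C^2\|\bB\|_F^4$. Since $r$ is fixed, Condition \ref{cond:B_factor_strength} pins all $r$ singular values of $\bB$ to order $p^{1-\gamma}$, whence $\|\bB\|_F^2\asymp p^{1-\gamma}$ and $\|\bB\|_F^4\asymp p^{2-2\gamma}$. Combining with Condition \ref{cond:A_factor_strength} in the form $\|\bA_1\|_2^2\asymp n_1^{1-\delta}$, $\|\bA_2\|_2^2\asymp n_2^{1-\delta}$ produces $\sum_{i,j}\|\bOmega_{ij}\|_2^2 = O\big((n_1 n_2)^{1-\delta}p^{2-2\gamma}\big)$, i.e. $O(n^{2-2\delta}p^{2-2\gamma})$ under $n_1\asymp n_2\asymp n$.

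Putting the pieces together, the cross term is of order $O_p\big((n^2 p^2 T^{-1})^{1/2}(n^{2-2\delta}p^{2-2\gamma})^{1/2}\big) = O_p(n^{2-\delta}p^{2-\gamma}T^{-1/2})$, which is exactly the claimed rate, while the quadratic term $\sum_{i,j}\|\bDel_{ij}\|_2^2 = O_p(n^2 p^2 T^{-1})$ is smaller by a factor $n^{\delta}p^{\gamma}T^{-1/2}$. The stated rate condition forces this factor to $o(1)$, so the quadratic term is negligible relative to the cross term, yielding the overall bound $O_p(n^{2-\delta}p^{2-\gamma}T^{-1/2})$. I expect the only delicate step to be the deterministic bound on $\sum_{i,j}\|\bOmega_{ij}\|_2^2$: one must carry the row-wise decomposition of $\bB$ carefully so that the sharp $\|\bB\|_F^4$ factor emerges rather than a crude $p^2\|\bB\|_2^4$, since it is precisely this accounting that produces the exponent $2-2\gamma$ in $p$ and matches the target rate after the $T^{-1/2}$ gained from Cauchy--Schwarz.
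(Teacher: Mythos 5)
Your proposal is correct and follows essentially the same route as the paper's proof: the identity $\widehat{\bOmega}_{ij}\widehat{\bOmega}_{ij}'-\bOmega_{ij}\bOmega_{ij}'=\bDel_{ij}\bDel_{ij}'+\bDel_{ij}\bOmega_{ij}'+\bOmega_{ij}\bDel_{ij}'$ with Cauchy--Schwarz on the cross term, Lemma \ref{lemma:conv_rate_covYt} for $\sum_{i,j}\|\bDel_{ij}\|_2^2$, and the deterministic bound $\sum_{i,j}\|\bOmega_{ij}\|_2^2=O((n_1n_2)^{1-\delta}p^{2-2\gamma})$ obtained by extracting $\|\bA_1\|_2^2\|\bA_2\|_2^2$ and reducing the covariance factor to $\|B_{i\cdot}\|^2\|B_{j\cdot}\|^2$ so that $\|\bB\|_F^4\asymp p^{2-2\gamma}$ emerges (the paper does this via the $\E{\bX_t\otimes\bX_t}$/vectorization step, you via a directly asserted bilinear-form bound, which is equivalent). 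Your explicit observation that the quadratic term is negligible under the rate condition is left implicit in the paper but is the right justification for the stated rate.
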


\begin{proof}
\begin{align*}
\sum_{i=1}^{p} \sum_{j=1}^{p} \norm{\bOmega_{ij}}^2_2 & = \sum_{i=1}^{p} \sum_{j=1}^{p} \norm{ \bA_1 \frac{1}{T} \sum_{t=1}^{T} \Cov{ \bX_t B_{i \cdot}, \bX_t B_{j \cdot} } \bA'_2}^2_2 \\
& \le \sum_{i=1}^{p} \sum_{j=1}^{p} \norm{\bA_1}^2_2 \norm{\bA_2}^2_2 \norm{\frac{1}{T} \sum_{t=1}^{T} \E{ \bX_t B_{i \cdot} B'_{j \cdot} \bX'_t  }}^2_2 \\
& \le \norm{\bA_1}^2_2 \norm{\bA_2}^2_2 \sum_{i=1}^{p} \sum_{j=1}^{p} \norm{\frac{1}{T} \sum_{t=1}^{T} \E{ \bX_t \otimes \bX_t} \vect{B_{i \cdot} B'_{j \cdot}} }^2_2 \\
& \le \norm{\bA_1}^2_2 \norm{\bA_2}^2_2 \sum_{i=1}^{p} \sum_{j=1}^{p} \norm{\frac{1}{T} \sum_{t=1}^{T} \E{ \bX_t \otimes \bX_t} }^2_2 \norm{ \vect{B_{i \cdot} B'_{j \cdot}} }^2_2 \\
& = \norm{\bA_1}^2_2 \norm{\bA_2}^2_2 \sum_{i=1}^{p} \sum_{j=1}^{p} \norm{\frac{1}{T} \sum_{t=1}^{T} \E{ \bX_t \otimes \bX_t} }^2_2 \norm{ B_{i \cdot} B'_{j \cdot} }^2_F \\
& \le \norm{\bA_1}^2_2 \norm{\bA_2}^2_2 \norm{\frac{1}{T} \sum_{t=1}^{T} \E{ \bX_t \otimes \bX_t} }^2_2 \sum_{i=1}^{p} \sum_{j=1}^{p} \norm{ B_{i \cdot}}^2_2 \norm{B'_{j \cdot}}^2_2 \\
& = \norm{\bA_1}^2_2 \norm{\bA_2}^2_2 \norm{\frac{1}{T} \sum_{t=1}^{T} \E{ \bX_t \otimes \bX_t} }^2_2 \norm{B}^4_F \\
& \le \norm{\bA_1}^2_2 \norm{\bA_2}^2_2 \norm{\frac{1}{T} \sum_{t=1}^{T} \E{ \bX_t \otimes \bX_t} }^2_2 \cdot r^2 \cdot \norm{B}^4_2  \\
& = O_p \left( (n_1n_2)^{1-\delta} p^{2-2\gamma} \right) 
\end{align*}

Then, 
\begin{eqnarray*}
\norm{\hat{\bM}_1 - \bM_1}_2 & = & \norm[\bigg]{\sum_{i=1}^{p} \sum_{j=1}^{p} \left( \hat{\bOmega}_{ij}\hat{\bOmega}'_{ij} - \bOmega_{ij} \bOmega'_{ij} \right)}_2 \\
& \le & \sum_{i=1}^{p} \sum_{j=1}^{p} \norm{\hat{\bOmega}_{ij} - \bOmega_{ij}}^2_2 + 2 \sum_{i=1}^{p} \sum_{j=1}^{p} \norm{\bOmega_{ij}}_2 \norm{\hat{\bOmega}_{ij} - \bOmega_{ij}}_2 \\
& \le & \sum_{i=1}^{p} \sum_{j=1}^{p} \norm{\hat{\bOmega}_{ij} - \bOmega_{ij}}^2_2 + 2 \left( \sum_{i=1}^{p} \sum_{j=1}^{p} \norm{\bOmega_{ij}}^2_2 \cdot \sum_{i=1}^{p} \sum_{j=1}^{p} \norm{\hat{\bOmega}_{ij} - \bOmega_{ij}}^2_2 \right)^{1/2} \\
& = & O_p((n_1^{2-\delta} p^{2-\gamma} + n_2^{2-\delta} p^{2-\gamma} + n_1 n_2 p^2) T^{-1}) \\
&   & + O_p \left( ( (n_1^{3-2\delta} n_2^{1-\delta} p^{4-3\gamma} + n_1^{1-\delta} n_2^{3-2\delta} p^{4-3\gamma} + n_1^{2-\delta} n_2^{2-\delta} p^{4-2\gamma} ) T^{-1} )^{1/2} \right).
\end{eqnarray*}

\end{proof}

\begin{lemma}  \label{conv_rate_sigval_M_1}
Under Condition \ref{cond:eigenval_cov_Et_bounded} and \ref{cond:Xt_cov_fullrank_bounded}, we have 
\begin{equation*}
\lambda_i(\bM_1) \asymp (n_1 n_2)^{1-\delta} p^{2-2\gamma}, \qquad i = 1, 2, \ldots, k_1,	
\end{equation*}
where $\lambda_i(\bM_1)$ denotes the $i$-th largest singular value of $\bM_1$. 
\end{lemma}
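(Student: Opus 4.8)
The plan is to reduce the eigenvalue problem for the $n_1 \times n_1$ matrix $\bM_1$ to one for a fixed $d \times d$ core, and then to read off the three rate factors $n_1^{1-\delta}$, $n_2^{1-\delta}$ and $p^{2-2\gamma}$ one at a time. Writing $\bH_{ij} = \Cov{\bX_t B_{i\cdot}, \bX^*_t B_{j\cdot}}$ for the $d\times d$ factor cross-covariance, we have $\bOmega_{ij} = \bA_1 \bH_{ij}\bA_2'$, so $\bM_1 = \sum_{i,j}\bOmega_{ij}\bOmega_{ij}' = \bA_1\bG_1\bA_1'$ with $\bG_1 = \sum_{i,j}\bH_{ij}(\bA_2'\bA_2)\bH_{ij}'$. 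Since $\bG_1$ is symmetric positive semidefinite, the nonzero eigenvalues of $\bA_1\bG_1\bA_1' = (\bA_1\bG_1^{1/2})(\bA_1\bG_1^{1/2})'$ coincide with those of the $d\times d$ matrix $\bG_1^{1/2}(\bA_1'\bA_1)\bG_1^{1/2}$. By the spatial factor strength condition, $c_1 n_1^{1-\delta}\bI_d\preceq\bA_1'\bA_1\preceq c_2 n_1^{1-\delta}\bI_d$; conjugating this inequality by $\bG_1^{1/2}$ and applying Weyl monotonicity (an Ostrowski-type congruence argument) gives $\lambda_i(\bM_1)\asymp n_1^{1-\delta}\lambda_i(\bG_1)$ for each $i\le d$.

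Next I would extract the $n_2$ factor from inside $\bG_1$. Again $c_1 n_2^{1-\delta}\bI_d\preceq\bA_2'\bA_2\preceq c_2 n_2^{1-\delta}\bI_d$, and since the map $M\mapsto\bH_{ij}M\bH_{ij}'$ preserves the positive semidefinite order, summing over $i,j$ yields $c_1 n_2^{1-\delta}\bH\preceq\bG_1\preceq c_2 n_2^{1-\delta}\bH$, where $\bH := \sum_{i,j}\bH_{ij}\bH_{ij}'$. Hence $\lambda_i(\bG_1)\asymp n_2^{1-\delta}\lambda_i(\bH)$, and the lemma is reduced to the single claim $\lambda_i(\bH)\asymp p^{2-2\gamma}$ for $i = 1,\ldots,k_1$, with $k_1 = d$ the rank of $\bH$. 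The upper bound is already in hand: $\lambda_1(\bM_1)\le\norm{\bM_1}_2\le\sum_{i,j}\norm{\bOmega_{ij}}_2^2 = O((n_1 n_2)^{1-\delta}p^{2-2\gamma})$ by the computation used for Lemma \ref{lemma:conv_rate_Mhat}, so only the lower bound on $\lambda_{k_1}(\bH)$ remains.

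For the $p$-rate I would write $\bb_i = B_{i\cdot}'$ and $\bK_{ij} = \E{\bX_t\bb_i\bb_j'\bX_t'}$, so that $\bH_{ij} = \bK_{ij}\bQ'$, $\bK_{ij}' = \bK_{ji}$, and (using that the change-of-normalization matrix $\bQ$ is well conditioned, $\bQ'\bQ\asymp\bI_d$) $\bH\asymp\sum_{i,j}\bK_{ij}\bK_{ij}'$. Vectorizing, $\rmvec(\bK_{ij}) = \bP(\bb_j\otimes\bb_i)$ with $\bP = \E{\bX_t\otimes\bX_t}$, whence $\rmtr(\bH)\asymp\sum_{i,j}\norm{\bK_{ij}}_F^2 = \rmtr\big(\bP'\bP\,(\bB'\bB)\otimes(\bB'\bB)\big)$, using $\sum_i\bb_i\bb_i' = \bB'\bB$. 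The variable factor strength condition gives $\bB'\bB\asymp p^{1-\gamma}\bI_r$, so $(\bB'\bB)\otimes(\bB'\bB)\asymp p^{2-2\gamma}\bI_{r^2}$ and $\rmtr(\bH)\asymp p^{2-2\gamma}\norm{\bP}_F^2\asymp p^{2-2\gamma}$ by the bounded-moment part of Condition \ref{cond:Xt_cov_fullrank_bounded}; this already forces $\lambda_1(\bH)\gtrsim p^{2-2\gamma}/d$. The main obstacle is upgrading this trace bound to a bound on the smallest nonzero eigenvalue, i.e. showing $\sum_{i,j}\bK_{ij}\bK_{ij}'\succeq c\,p^{2-2\gamma}\bI_d$, equivalently that $\bH$ has exactly $k_1 = d$ eigenvalues, all of order $p^{2-2\gamma}$. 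For this I would bound $\bu'\bH\bu\gtrsim\sum_{i,j}\norm{\bK_{ij}'\bu}_2^2$ uniformly over unit vectors $\bu$ and invoke the full-rank-with-bounded-singular-values hypothesis $\norm{\bSigma_f(h)}_2\asymp O(1)\asymp\sigma_k(\bSigma_f(h))$ of Condition \ref{cond:Xt_cov_fullrank_bounded} to certify that the linear map $(\bb_i,\bb_j)\mapsto\bK_{ij}$ is non-degenerate on its rank-$k_1$ range, so that the $p^{2-2\gamma}$ supplied by $\bB'\bB$ is not annihilated in any coordinate direction. Carrying this non-degeneracy argument through the Kronecker/vectorization bookkeeping, rather than the comparatively routine strength-extraction steps, is where the real work lies.
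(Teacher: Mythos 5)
The paper states this lemma without any proof in the appendix, so there is no argument of record to compare against; judged on its own terms, your reduction is the right one and is carried out correctly up to the last step. Writing $\bM_1=\bA_1\bG_1\bA_1'$ with $\bG_1=\sum_{i,j}\bH_{ij}(\bA_2'\bA_2)\bH_{ij}'$, passing to the nonzero spectrum of the $d\times d$ matrix $\bG_1^{1/2}(\bA_1'\bA_1)\bG_1^{1/2}$, and peeling off $n_1^{1-\delta}$ and then $n_2^{1-\delta}$ via the two-sided Loewner bounds supplied by Condition \ref{cond:A_factor_strength} is clean and rigorous, and it correctly explains why the general (un-normalized) form of $\bM_1$ carries the factor $(n_1n_2)^{1-\delta}$ even though display (\ref{eqn:M1}) is written as if $\bA_2'\bA_2=\bI_d$. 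The upper bound $\lambda_1(\bM_1)\le\sum_{i,j}\norm{\bOmega_{A,ij}}_2^2=O((n_1n_2)^{1-\delta}p^{2-2\gamma})$ is also fine and matches the computation inside the proof of Lemma \ref{lemma:conv_rate_Mhat}.

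The genuine gap is the lower bound $\lambda_{k_1}(\bH)\ge c\,p^{2-2\gamma}$ for $\bH=\sum_{i,j}\bH_{ij}\bH_{ij}'$, which you correctly identify as the crux but do not prove. Your trace computation controls only $\lambda_1(\bH)$ from below, and nothing in the conditions available delivers the uniform statement $\bu'\bH\bu\ge c\,p^{2-2\gamma}$ over unit vectors $\bu$: the rank and singular-value hypothesis in Condition \ref{cond:Xt_cov_fullrank_bounded} concerns the lag-$h$ autocovariance $\bSigma_f(h)$ used for estimating $\bB$, whereas $\bM_{A_1}$ is built from the \emph{contemporaneous} cross-covariances $\Cov{\bX_tB_{i\cdot},\,\bX^*_tB_{j\cdot}}$ between the two location partitions, about which the paper assumes only the qualitative rank-$d$ statement following (\ref{eqn:Omega_ij}) (and the other condition cited in the lemma statement is not defined anywhere in the manuscript). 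So the decisive non-degeneracy --- that $\E{\bX_t\otimes\bX_t}$, restricted to the range of $(\bB'\bB)\otimes(\bB'\bB)$, does not annihilate the $p^{2-2\gamma}$ mass in any direction of $\calR^d$ --- must either be assumed outright (the usual pervasiveness route in this literature) or derived from a quantitative positive-definiteness condition on the contemporaneous factor covariance; your sketch names the target but does not close it. A secondary unstated assumption is that the change-of-basis matrix $\bQ$ linking $\bX_t$ and $\bX_t^*$ is well conditioned, which you invoke as $\bQ'\bQ\asymp\bI_d$ without justification. Everything before these two points I would accept as written.
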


\begin{theorem} 
Under Condition \ref{cond:vecXt_alpha_mixing}-\ref{cond:B_factor_strength} and $n^{\delta} p^{\gamma} T^{-1/2} = o(1)$, we have 
\begin{equation}
\calD \left( \calM(\hat{\bA}_i), \calM(\bA_i) \right) = O_p( ( n_1 n_2^{\delta-1} p^{\gamma} + n_1^{\delta-1} n_2 p^{\gamma} + n_1^{\delta} n_2^{\delta} p^{2\gamma} ) T^{-1} )^{1/2}.
\end{equation}
If $n_1 \asymp n_2 \asymp n$, we have
\begin{equation}
\calD \left( \calM(\hat{\bA}_i), \calM(\bA_i) \right) = O_p(  n^{\delta} p^{\gamma} T^{-1/2} ).
\end{equation}
\end{theorem}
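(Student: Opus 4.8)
The plan is to cast this as an eigenspace perturbation problem: since $\calM(\bA_i)$ is recovered as the span of the top $d$ eigenvectors of $\bM_{A_i}$ and $\calM(\hat{\bA}_i)$ as the span of the top $d$ eigenvectors of $\hat{\bM}_{A_i}$, a Davis--Kahan $\sin\Theta$ bound reduces the subspace distance to the ratio of the spectral-norm perturbation $\|\hat{\bM}_{A_i}-\bM_{A_i}\|_2$ over the eigengap of $\bM_{A_i}$.

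First I would make the distance metric explicit. Writing $\bP_i=\bA_i(\bA_i'\bA_i)^{-1}\bA_i'$ and $\hat{\bP}_i$ for its sample analogue, on the event $\{\hat d=d\}$ one has $\calD(\calM(\hat{\bA}_i),\calM(\bA_i))=(2d)^{-1/2}\|\hat{\bP}_i-\bP_i\|_F$, and since $d$ is fixed the rank-$\le 2d$ difference $\hat{\bP}_i-\bP_i$ has equivalent Frobenius and spectral norms up to a constant. That $\{\hat d=d\}$ holds with probability tending to one follows because, by Lemma~\ref{conv_rate_sigval_M_1}, $\bM_{A_i}$ has exactly $d$ nonzero eigenvalues all of order $(n_1n_2)^{1-\delta}p^{2-2\gamma}$, while $\|\hat{\bM}_{A_i}-\bM_{A_i}\|_2$ is of smaller order under $n^\delta p^\gamma T^{-1/2}=o(1)$, so the top $d$ sample eigenvalues separate from the remainder.

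On this event I would invoke the $\sin\Theta$ theorem. As $\bM_{A_i}$ is symmetric nonnegative-definite with $\lambda_{d+1}(\bM_{A_i})=0$, the relevant eigengap is exactly $\lambda_d(\bM_{A_i})$, and Davis--Kahan gives
\[
\calD(\calM(\hat{\bA}_i),\calM(\bA_i)) \;\le\; \frac{c\,\|\hat{\bM}_{A_i}-\bM_{A_i}\|_2}{\lambda_d(\bM_{A_i})}
\]
for an absolute constant $c$. The denominator is $\asymp (n_1n_2)^{1-\delta}p^{2-2\gamma}$ by Lemma~\ref{conv_rate_sigval_M_1}. For the numerator, Lemma~\ref{lemma:conv_rate_Mhat} gives a $T^{-1}$ term plus a square-root $T^{-1/2}$ term; under $n^\delta p^\gamma T^{-1/2}=o(1)$ the square-root term dominates (it exceeds the $T^{-1}$ term by a factor of order $T^{1/2}$ after dividing by the eigengap), so the effective numerator is the square root of $(n_1^{3-2\delta}n_2^{1-\delta}+n_1^{1-\delta}n_2^{3-2\delta})p^{4-3\gamma}T^{-1}+n_1^{2-\delta}n_2^{2-\delta}p^{4-2\gamma}T^{-1}$. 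Dividing by $\lambda_d(\bM_{A_i})$ and simplifying the exponents term by term yields the three summands $n_1n_2^{\delta-1}p^\gamma$, $n_1^{\delta-1}n_2p^\gamma$, $n_1^\delta n_2^\delta p^{2\gamma}$ inside the outer square root, which is exactly the asserted rate. Setting $n_1\asymp n_2\asymp n$ collapses the first two summands to $n^\delta p^\gamma$ and the third to $n^{2\delta}p^{2\gamma}$, whose square root $n^\delta p^\gamma T^{-1/2}$ dominates and gives the simplified bound. The case $i=2$ is identical since $\bM_{A_1}$ and $\bM_{A_2}$ share the same positive eigenvalues.

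The main obstacle is not the perturbation inequality but the bookkeeping in the numerator: one must confirm that the square-root contribution of $\|\hat{\bM}_{A_i}-\bM_{A_i}\|_2$ genuinely dominates in the regime $n^\delta p^\gamma T^{-1/2}=o(1)$, and then carefully propagate the factor-strength exponents $\delta$ (Condition~\ref{cond:A_factor_strength}) and $\gamma$ (Condition~\ref{cond:B_factor_strength}) so that the $(n_1n_2)^{1-\delta}$ and $p^{2-2\gamma}$ factors from the eigengap cancel against those in the numerator. This exponent arithmetic in the final simplification is where mistakes are most likely to creep in.
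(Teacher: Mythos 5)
Your proposal is correct and follows essentially the same route as the paper: the paper also invokes an eigenvector perturbation theorem to bound the loading-space error by $\|\hat{\bM}_{A_i}-\bM_{A_i}\|_2/\lambda_{\min}(\bM_{A_i})$, then combines Lemma~\ref{lemma:conv_rate_Mhat} for the numerator with Lemma~\ref{conv_rate_sigval_M_1} for the eigengap $(n_1n_2)^{1-\delta}p^{2-2\gamma}$, and keeps the dominant square-root term exactly as you do. Your exponent arithmetic reproduces the three summands $n_1n_2^{\delta-1}p^{\gamma}$, $n_1^{\delta-1}n_2p^{\gamma}$, $n_1^{\delta}n_2^{\delta}p^{2\gamma}$ correctly, and the added remark on rank consistency is a harmless refinement the paper leaves implicit.
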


\begin{proof}
By Perturbation Theorem,
\begin{eqnarray}
\norm{\hat{\bA}_1 - \bA_1}_2 & \le & \frac{8}{\lambda_{min}(\bM_1)} \norm{\hat{\bM}_1 - \bM_1}_2 \nonumber \\
& = & O_p( ( n_1 n_2^{\delta-1} p^{\gamma} + n_1^{\delta-1} n_2 p^{\gamma} + n_1^{\delta} n_2^{\delta} p^{2\gamma} ) T^{-1} ) \nonumber \\
&  &  + O_p( ( n_1 n_2^{\delta-1} p^{\gamma} + n_1^{\delta-1} n_2 p^{\gamma} + n_1^{\delta} n_2^{\delta} p^{2\gamma} ) T^{-1} )^{1/2} \nonumber \\
& = & O_p( ( n_1 n_2^{\delta-1} p^{\gamma} + n_1^{\delta-1} n_2 p^{\gamma} + n_1^{\delta} n_2^{\delta} p^{2\gamma} ) T^{-1} )^{1/2}. \nonumber
\end{eqnarray}

If $n_1 \asymp n_2 \asymp n/2$, we have $\norm{\hat{\bA}_1 - \bA_1}_2 = O_p(  n^{\delta} p^{\gamma} T^{-1/2} )$.

If set $n_2=c$ fixed and $n_1=n-c$, we have $\norm{\hat{\bA}_1 - \bA_1}_2 = O_p(  ( n p^{-\gamma} + n^{\delta})^{1/2} p^{\gamma} T^{-1/2} )$.

We have the same result for $\norm{\hat{\bA}_2 - \bA_2}_2$. 
\end{proof}

\begin{theorem} 
This proposition considers the error bound of signal estimator as in (\ref{eqn:signal_est}) for each partition. Under $n^{\delta} p^{\gamma} T^{-1}=o_p(1)$, if $n_1 \asymp n2 \asymp n$, then
\begin{equation}
n^{-1/2} p^{-1/2} \norm{\hat{\bXi}_{it} - \bXi_{it}}_2  = O_p(n^{\delta/2} p^{\gamma/2} T^{-1/2} + n^{-1/2} p^{-1/2}),
\end{equation}
for $i = 1, 2$, and 
\begin{equation}
n^{-1} p^{-1} \norm{\hat{\bXi}_{t} - \bXi_{t}}^2_2  = O_p(n^{\delta} p^{\gamma} T^{-1} + n^{-1/2+\delta/2} p^{-1/2+\gamma/2} T^{-1/2} + n^{-1} p^{-1})
\end{equation}
\end{theorem}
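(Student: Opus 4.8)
The plan is to exploit the fact that the estimator is a double projection of the data. Write $\bP_A$ and $\bP_B$ for the orthogonal projections onto $\calM(\bA_1)$ and $\calM(\bB)$, and $\hat{\bP}_A=\hat{\bA}_1\hat{\bA}_1'$, $\hat{\bP}_B=\hat{\bB}\hat{\bB}'$ for their fitted counterparts (genuine projections since $\hat{\bA}_1'\hat{\bA}_1=\bI_d$ and $\hat{\bB}'\hat{\bB}=\bI_r$). Then $\hat{\bXi}_{1t}=\hat{\bP}_A\bY_{1t}\hat{\bP}_B=\hat{\bP}_A(\bXi_{1t}+\bE_{1t})\hat{\bP}_B$. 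Because $\bXi_{1t}=\bA_1\bX_t\bB'$ lies in the column space of $\bA_1$ and the row space of $\bB$, we have $\bP_A\bXi_{1t}\bP_B=\bXi_{1t}$, so that
\begin{equation*}
\hat{\bXi}_{1t}-\bXi_{1t}=(\hat{\bP}_A-\bP_A)\bXi_{1t}\hat{\bP}_B+\bP_A\bXi_{1t}(\hat{\bP}_B-\bP_B)+\hat{\bP}_A\bE_{1t}\hat{\bP}_B.
\end{equation*}
The first two terms carry the loading-estimation error, the last term carries the noise that leaks through the fitted subspaces.

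First I would control the size of the signal block. By the factor-strength Conditions \ref{cond:A_factor_strength} and \ref{cond:B_factor_strength}, $\norm{\bA_1}_2\asymp n^{(1-\delta)/2}$ and $\norm{\bB}_2\asymp p^{(1-\gamma)/2}$, while $\norm{\bX_t}_2=O_p(1)$ since $d,r$ are fixed and the entries of $\bX_t$ have bounded $2\gamma$-th moments (Condition \ref{cond:Xt_cov_fullrank_bounded}); hence $\norm{\bXi_{1t}}_2\le\norm{\bA_1}_2\norm{\bX_t}_2\norm{\bB}_2=O_p(n^{(1-\delta)/2}p^{(1-\gamma)/2})$. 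For the projection gaps, Theorem \ref{thm:A_sep_err_bnd} gives $\norm{\hat{\bP}_A-\bP_A}_2=O_p(\calD(\calM(\hat{\bA}_1),\calM(\bA_1)))=O_p(n^{\delta}p^{\gamma}T^{-1/2})$ in the balanced regime $n_1\asymp n_2\asymp n$, and the companion eigen-analysis bound for $\hat{\bB}$ (proved exactly as Theorem \ref{thm:A_sep_err_bnd}, with the spatial and variable modes interchanged) yields the same rate $\norm{\hat{\bP}_B-\bP_B}_2=O_p(n^{\delta}p^{\gamma}T^{-1/2})$. Multiplying, each of the first two terms is $O_p(n^{(1-\delta)/2}p^{(1-\gamma)/2}\cdot n^{\delta}p^{\gamma}T^{-1/2})=O_p(n^{(1+\delta)/2}p^{(1+\gamma)/2}T^{-1/2})$, which after the $n^{-1/2}p^{-1/2}$ normalization becomes the leading $n^{\delta/2}p^{\gamma/2}T^{-1/2}$ term.

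The crux is the noise term $\hat{\bP}_A\bE_{1t}\hat{\bP}_B$. Since $\hat{\bA}_1,\hat{\bB}$ have orthonormal columns, $\norm{\hat{\bP}_A\bE_{1t}\hat{\bP}_B}_2=\norm{\hat{\bA}_1'\bE_{1t}\hat{\bB}}_2$, a matrix of fixed size $d\times r$; I would show this is $O_p(1)$, and this is precisely the origin of the $n^{-1/2}p^{-1/2}$ floor after normalization. The easy part is the oracle piece $(\bA_1^{\circ})'\bE_{1t}\bB^{\circ}$ (with $\bA_1^{\circ},\bB^{\circ}$ orthonormal bases of the true spaces): each of its $dr$ entries is a normalized bilinear form in the white noise with $O(1)$ variance, hence $O_p(1)$. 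The delicate part, which I expect to be the main obstacle, is controlling the interaction between the loading error and the same noise slice, e.g. $(\hat{\bA}_1-\bA_1^{\circ})'\bE_{1t}\bB^{\circ}$: the crude bound $\norm{\hat{\bA}_1-\bA_1^{\circ}}_2\norm{\bE_{1t}\bB^{\circ}}_2$ is too lossy, so one must exploit that $\hat{\bA}_1$ is built from time-averaged covariances and is therefore only weakly dependent on the single realization $\bE_{1t}$, forcing these cross terms to be $o_p(1)$ under $n^{\delta}p^{\gamma}T^{-1}=o_p(1)$. Granting this, $\norm{\hat{\bXi}_{1t}-\bXi_{1t}}_2=O_p(n^{(1+\delta)/2}p^{(1+\gamma)/2}T^{-1/2}+1)$, and the identical argument covers $i=2$; dividing by $n^{1/2}p^{1/2}$ gives the first display. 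For the aggregate signal I would stack the partitions, use $\norm{\hat{\bXi}_t-\bXi_t}_2^2\le\norm{\hat{\bXi}_{1t}-\bXi_{1t}}_2^2+\norm{\hat{\bXi}_{2t}-\bXi_{2t}}_2^2$, and square the per-partition rate: expanding $(n^{(1+\delta)/2}p^{(1+\gamma)/2}T^{-1/2}+1)^2$ reproduces exactly the three terms $n^{1+\delta}p^{1+\gamma}T^{-1}$, $n^{(1+\delta)/2}p^{(1+\gamma)/2}T^{-1/2}$, and $1$, which after division by $np$ is the second display.
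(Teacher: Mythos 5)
Your proposal is correct and follows essentially the same route as the paper: the identical three-term decomposition into the two loading-error pieces and the projected-noise piece, the same spectral-norm bounds $\norm{\bA_1\bX_t\bB'}_2=O_p(n^{(1-\delta)/2}p^{(1-\gamma)/2})$ combined with the $O_p(n^{\delta}p^{\gamma}T^{-1/2})$ rates for the estimated subspaces, and the same eigenvalue-subadditivity step $\norm{\hat{\bXi}_t-\bXi_t}_2^2\le\norm{\hat{\bXi}_{1t}-\bXi_{1t}}_2^2+\norm{\hat{\bXi}_{2t}-\bXi_{2t}}_2^2$ for stacking the partitions. The only difference is that you explicitly flag the dependence between $\hat{\bA}_1,\hat{\bB}$ and the single noise slice $\bE_{1t}$ as needing an argument, whereas the paper simply bounds $\norm{\hat{\bA}_1'\bE_t\hat{\bB}}_2=\norm{(\hat{\bB}'\otimes\hat{\bA}_1')\vect{\bE_t}}_2$ by $O_p(1)$ without addressing it — so your treatment is, if anything, the more careful one.
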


\begin{proof}
 
\begin{eqnarray*}
\norm{\hat{\bXi}_{1t} - \bXi_{1t}}_2  & = &  \norm[\Big]{ \hat{\bA}^{(j)}_1 \hat{\bA}^{(j)'}_1 \left( \bA^{(j)}_1 \bX_t \bB' + \bE_t^{(j)} \right) \hat{\bB}^{(j)} \hat{\bB}^{(j)'} - \bA^{(j)}_1 \bX_t \bB'}_2 \\
& \le &  \norm[\Big]{ \hat{\bA}^{(j)}_1 \hat{\bA}^{(j)'}_1 \bA^{(j)}_1 \bX_t \bB' \left( \hat{\bB}^{(j)} \hat{\bB}^{(j)'} - \bB \bB' \right) }_2 \\
& & +  \norm[\Big]{ \left( \hat{\bA}^{(j)}_1 \hat{\bA}^{(j)'}_1 - \bA^{(j)}_1 \bA^{(j)'}_1 \right) \bA^{(j)}_1 \bX_t \bB'}_2 \\
& & +  \norm[\Big]{ \hat{\bA}^{(j)}_1 \hat{\bA}^{(j)'}_1 \bE_t^{(j)} \hat{\bB}^{(j)} \hat{\bB}^{(j)'} }_2 \\
& = & \bI_1 + \bI_2 + \bI_3. 
\end{eqnarray*}

\begin{eqnarray*}
& \norm{\bI_1}_2 & \le 2 \norm{\bX_t}_2 \norm{\hat{\bB}^{(j)} - \bB^{(j)}}_2  = O_p(n_1^{1/2-\delta/2} p^{1/2-\gamma/2} \norm{\hat{\bB}^{(j)} - \bB}_2 ) \\
& & = O_p(n_1^{1/2-\delta/2} n^{\delta} p^{1/2+\gamma/2} T^{-1/2} ) \\
&\norm{\bI_2}_2 & \le 2 \norm{\hat{\bA}^{(j)}_1 - \bA^{(j)}_1}_2 \norm{\bX_t}_2 = O_p(n_1^{1/2-\delta/2} p^{1/2-\gamma/2} \norm{\hat{\bA}^{(j)}_1 - \bA^{(j)}_1}_2)\\
& & = O_p( ( n_1 n_2^{\delta-1} p^{\gamma} + n_1^{\delta-1} n_2 p^{\gamma} + n_1^{\delta} n_2^{\delta} p^{2\gamma} ) T^{-1} )^{1/2} n_1^{1/2-\delta/2} p^{1/2-\gamma/2} ) \\
& & = O_p((n_1^{2-\delta} n_2^{\delta-1} + n_2 + n_1 n_2^{\delta} p^{\gamma}) p T^{-1})^{1/2} \\
&\norm{\bI_3}_2 & \le \norm{\hat{\bA}^{(j)'}_1 \bE_t^{(j)} \hat{\bB}^{(j)}} = \norm{(\hat{\bB}^{(j)'} \otimes \hat{\bA}^{(j)'}_1 )\vect{\bE_t^{(j)}}}_2 \le dr \norm{\bSigma_e}_2 = O_p(1).
\end{eqnarray*}

Thus, 
\begin{equation}
\norm{\hat{\bXi}_{1t} - \bXi_{1t}}_2  = O_p(n_1^{1/2-\delta/2} p^{1/2-\gamma/2} \norm{\hat{\bA}^{(j)}_1 - \bA^{(j)}_1}_2) + O_p(n_1^{1/2-\delta/2} p^{1/2-\gamma/2} \norm{\hat{\bB}^{(j)} - \bB}_2 ) + O_p(1).
\end{equation}

\begin{equation}
n_1^{-1/2} p^{-1/2} \norm{\hat{\bXi}_{1t} - \bXi_{1t}}_2  = O_p(n_1^{-\delta/2} p^{-\gamma/2} \norm{\hat{\bA}^{(j)}_1 - \bA^{(j)}_1}_2) + O_p(n_1^{-\delta/2} p^{-\gamma/2} \norm{\hat{\bB}^{(j)} - \bB}_2 ) + O_p(n_1^{-1/2} p^{-1/2}).
\end{equation}

Similarly for ${\bXi}_{2t}$, we have 
\begin{equation}
\norm{\hat{\bXi}_{2t} - \bXi_{2t}}_2  = O_p(n_2^{1/2-\delta/2} p^{1/2-\gamma/2} \norm{\hat{\bA}^{(j)}_2 - \bA^{(j)}_2}_2) + O_p(n_2^{1/2-\delta/2} p^{1/2-\gamma/2} \norm{\hat{\bB}^{(j)} - \bB}_2 ) + O_p(1).
\end{equation}

\begin{equation}
n_2^{-1/2} p^{-1/2} \norm{\hat{\bXi}_{2t} - \bXi_{2t}}_2  = O_p(n_2^{-\delta/2} p^{-\gamma/2} \norm{\hat{\bA}^{(j)}_1 - \bA^{(j)}_1}_2) + O_p(n_2^{-\delta/2} p^{-\gamma/2} \norm{\hat{\bB}^{(j)} - \bB}_2 ) + O_p(n_2^{-1/2} p^{-1/2}).
\end{equation}

\textcolor{red}{If $n_1 \asymp n_2 \asymp n$, then}
\begin{equation}
\norm{\hat{\bXi}_{it} - \bXi_{it}}_2  = O_p(n^{1/2+\delta/2} p^{1/2+\gamma/2} T^{-1/2}) + O_p(1), \qquad i = 1,2.
\end{equation}

Now we find the $L_2$-norm bounds for 
\begin{equation*}
\norm{\hat{\bXi}_{t} - \bXi_{t}}^2_2 = \norm[\bigg]{ \begin{pmatrix}
\hat{\bXi}_{1t} - \bXi_{1t} \\
\hat{\bXi}_{2t} - \bXi_{2t}
\end{pmatrix}  }^2_2 .
\end{equation*}

Let $\bM = \hat{\bXi}_{t} - \bXi_{t} = \begin{pmatrix} \bM_1 \\ \bM_2 \end{pmatrix}$, the above problem is equivelent to finding $\lambda_{max}(\bM'\bM)$ from $\lambda_{max}(\bM_1'\bM_1)$ and $\lambda_{max}(\bM_2'\bM_2)$. \\

Since 
\begin{equation*}
\lambda_{max}(\bM'\bM) = \lambda_{max}(\bM_1'\bM_1 + \bM_2'\bM_2) \le \lambda_{max}(\bM_1'\bM_1) + \lambda_{max}(\bM_2'\bM_2), 
\end{equation*}

We have 
\begin{align*}
\norm{\hat{\bXi}_{t} - \bXi_{t}}^2_2  & \le \norm{\hat{\bXi}_{1t} - \bXi_{1t}}^2_2  + \norm{\hat{\bXi}_{2t} - \bXi_{2t}}^2_2  \\
& = O_p(n^{1+\delta} p^{1+\gamma} T^{-1}) + O_p(n^{1/2+\delta/2} p^{1/2+\gamma/2} T^{-1/2}) + O_p(1).
\end{align*}

$n^{-1} p^{-1} \norm{\hat{\bXi}_{t} - \bXi_{t}}^2_2  = O_p(n^{\delta} p^{\gamma} T^{-1} + n^{-1/2+\delta/2} p^{-1/2+\gamma/2} T^{-1/2} + n^{-1} p^{-1})$.
\end{proof}

\subsection{Space factor loading matrix re-estimation}

\begin{lemma}
If $n_1 \asymp n2 \asymp n$, then
\begin{equation}
n^{-1/2} p^{-1/2} \norm{\hat{\bPsi}_{it} - \bPsi_{it}}_2  = O_p(n^{\delta/2} p^{\gamma/2} T^{-1/2}) + O_p(n^{-1/2} p^{-1/2}),
\end{equation}
for $i = 1, 2$, and 
\begin{equation}
n^{-1} p^{-1} \norm{\hat{\bPsi}_{t} - \bPsi_{t}}^2_2  = O_p(n^{\delta} p^{\gamma} T^{-1} + n^{-1/2+\delta/2} p^{-1/2+\gamma/2} T^{-1/2} + n^{-1} p^{-1})
\end{equation}
\end{lemma}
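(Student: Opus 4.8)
The plan is to reduce the statement to the signal bound already established in Theorem \ref{thm:signal_err_bnd}, using that the variable-factor matrix is just the signal matrix projected onto the variable-loading space: $\bPsi_{it} = \bXi_{it}\bB = \bA_i\bX_t$ and $\hat{\bPsi}_{it} = \hat{\bXi}_{it}\hat{\bB}$, where the columns of $\hat{\bB}$ are orthonormal. First I would split the error into a signal-estimation part and a loading-estimation part,
\begin{equation*}
\hat{\bPsi}_{it} - \bPsi_{it} = (\hat{\bXi}_{it} - \bXi_{it})\hat{\bB} + \bXi_{it}(\hat{\bB} - \bB),
\end{equation*}
so that the first term is controlled by the signal theorem and the second collects the additional cost of estimating $\bB$.

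For the first term, submultiplicativity of the spectral norm together with $\norm{\hat{\bB}}_2 = 1$ (orthonormality of the columns of $\hat{\bB}$) gives $\norm{(\hat{\bXi}_{it} - \bXi_{it})\hat{\bB}}_2 \le \norm{\hat{\bXi}_{it} - \bXi_{it}}_2$, which Theorem \ref{thm:signal_err_bnd} bounds by $O_p(n^{1/2+\delta/2} p^{1/2+\gamma/2} T^{-1/2}) + O_p(1)$. For the second term I would factor $\bXi_{it} = \bA_i\bX_t\bB'$ and use $\norm{\bXi_{it}}_2 \le \norm{\bA_i}_2\norm{\bX_t}_2\norm{\bB}_2$; the factor-strength Conditions \ref{cond:A_factor_strength}--\ref{cond:B_factor_strength} give $\norm{\bA_i}_2 \asymp n^{(1-\delta)/2}$ and $\norm{\bB}_2 \asymp p^{(1-\gamma)/2}$, while Condition \ref{cond:Xt_cov_fullrank_bounded} gives $\norm{\bX_t}_2 = O_p(1)$, so $\norm{\bXi_{it}}_2 = O_p(n^{(1-\delta)/2} p^{(1-\gamma)/2})$. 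Pairing this with the rate $\norm{\hat{\bB} - \bB}_2 = O_p(n^{\delta} p^{\gamma} T^{-1/2})$ that is read off from the $\bI_1$ computation in the proof of Theorem \ref{thm:signal_err_bnd} shows $\norm{\bXi_{it}(\hat{\bB}-\bB)}_2 = O_p(n^{1/2+\delta/2} p^{1/2+\gamma/2} T^{-1/2})$, i.e. of exactly the same order as the first term.

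Combining the two terms gives $\norm{\hat{\bPsi}_{it} - \bPsi_{it}}_2 = O_p(n^{1/2+\delta/2}p^{1/2+\gamma/2}T^{-1/2}) + O_p(1)$, and dividing by $n^{1/2}p^{1/2}$ yields the stated per-partition rate. For the joint statement I would stack the two partitions, set $\bM_i = \hat{\bPsi}_{it} - \bPsi_{it}$, and repeat the eigenvalue argument used at the close of Theorem \ref{thm:signal_err_bnd}, namely $\lambda_{\max}(\bM'\bM) \le \lambda_{\max}(\bM_1'\bM_1) + \lambda_{\max}(\bM_2'\bM_2)$, so that $\norm{\hat{\bPsi}_t - \bPsi_t}_2^2$ is bounded by the sum of the two squared per-partition bounds; normalizing by $np$ delivers the claimed rate. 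I expect the only genuine obstacle to be confirming that the cross term $\bXi_{it}(\hat{\bB} - \bB)$ does not dominate, which hinges on balancing the signal magnitude $\norm{\bXi_{it}}_2 = O_p(n^{(1-\delta)/2}p^{(1-\gamma)/2})$ against the loading rate $\norm{\hat{\bB} - \bB}_2$; everything else is a direct transcription of the signal-matrix argument, since multiplication by the near-orthonormal variable loadings preserves the convergence rate.
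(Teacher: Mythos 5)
Your proof is correct and reaches the stated rates, but it takes a mildly different route from the paper. The paper's proof does not pass through the signal estimator $\hat{\bXi}_{it}$ at all: it expands $\hat{\bPsi}_{it} = \hat{\bA}_i\hat{\bA}_i'\,(\bA_i\bX_t\bB' + \bE_{it})\,\hat{\bB}$ directly from the raw data and splits the error into the same three pieces used for Theorem \ref{thm:signal_err_bnd} --- a $\hat{\bB}-\bB$ term, a $\hat{\bA}_i\hat{\bA}_i'-\bA_i\bA_i'$ term, and a projected-noise term --- then simply notes the bounds carry over verbatim. You instead use the two-term decomposition $(\hat{\bXi}_{it}-\bXi_{it})\hat{\bB} + \bXi_{it}(\hat{\bB}-\bB)$, invoking Theorem \ref{thm:signal_err_bnd} as a black box for the first piece (with $\|\hat{\bB}\|_2=1$) and paying separately for the loading error in the second. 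Your version buys economy: it avoids re-deriving the three-term expansion and makes explicit why projecting onto the estimated variable loadings cannot degrade the signal rate. The cost is that you must independently verify the cross term $\bXi_{it}(\hat{\bB}-\bB)$ is not dominant, which you do correctly by balancing $\|\bXi_{it}\|_2 = O_p(n^{(1-\delta)/2}p^{(1-\gamma)/2})$ against $\|\hat{\bB}-\bB\|_2 = O_p(n^{\delta}p^{\gamma}T^{-1/2})$ to get $O_p(n^{1/2+\delta/2}p^{1/2+\gamma/2}T^{-1/2})$, exactly the order of the paper's $\bI_1$ term; the stacking step via $\lambda_{\max}(\bM'\bM) \le \lambda_{\max}(\bM_1'\bM_1) + \lambda_{\max}(\bM_2'\bM_2)$ is identical to the paper's. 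The only caveat, which applies equally to the paper's own writing, is that $\|\hat{\bB}-\bB\|_2$ is meaningful only after fixing the rotational/sign indeterminacy of the estimated loadings.
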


\begin{proof} 

\begin{eqnarray*}
\norm{\bPsi_{it} - \bPsi_{it}}_2 & = & \norm[\big]{ \hat{\bQ}_{A_i} \hat{\bZ}_t - \bQ_{A_i} \bZ_t}_2 = \norm[\big]{ \hat{\bQ}_{A_i} \hat{\bQ}'_{A_i} (\bQ_{A_i} \bZ_t \bQ'_B + \bE_t) \hat{\bQ}_B - \bQ_{A_i} \bZ_t}_2 \\
& = & \norm[\big]{ \hat{\bQ}_{A_i} \hat{\bQ}'_{A_i} \bQ_{A_i} \bZ_t \bQ'_B ( \hat{\bQ}_B - \bQ_B) + (\hat{\bQ}_{A_i} \hat{\bQ}'_{A_i} - \bQ_{A_i} \bQ'_{A_i})\bQ_{A_i} \bZ_t + \hat{\bQ}_{A_i} \hat{\bQ}'_{A_i} \bE_t \hat{\bQ}_B}_2 \\
& \le & \norm[\big]{ \hat{\bQ}_{A_i} \hat{\bQ}'_{A_i} \bQ_{A_i} \bZ_t \bQ'_B ( \hat{\bQ}_B - \bQ_B) }_2 + \norm[\big]{ (\hat{\bQ}_{A_i} \hat{\bQ}'_{A_i} - \bQ_{A_i} \bQ'_{A_i})\bQ_{A_i} \bZ_t }_2 + \norm[\big]{ \hat{\bQ}_{A_i} \hat{\bQ}'_{A_i} \bE_t \hat{\bQ}_B}_2
\end{eqnarray*}

Then, similar to the proof of Theorem \ref{prop:signal_error_bound}, we have the desired results. 
\end{proof}

Let $\bU_t = \hat{\bPsi}_{t} - \bPsi_{t}$ and $\Delta_{npT} = n^{\delta} p^{\gamma} T^{-1} + n^{-1/2+\delta/2} p^{-1/2+\gamma/2} T^{-1/2} + n^{-1} p^{-1}$. Then $\Delta_{npT}$ is the convergence rate of $n^{-1} p^{-1} \norm{\bU_t}^2_2$. Since $\norm{\bU_t}^2_2 \le \norm{\bU_t}^2_F \le r \norm{\bU_t}^2_2$ where $r$ is fixed, we have $n^{-1} p^{-1} \norm{\bU_t}^2_F = O_p(\Delta_{npT})$.

Define $\bW_t = \bX_t \bR'_B$, $\bW = (\bW_1 \cdots \bW_T)$, $\bPsi = (\bPsi_1 \cdots \bPsi_T) = \bA \bW $. Assume $\frac{1}{rT} \bW \bW' = \bI_d$. The the columns of $\bW$ compose of the eigenvectors of $\frac{1}{nprT}\bPsi' \bPsi = \frac{1}{nprT} \bW' \bA' \bA \bW$ corresponding to the $d$ nonzero eigenvalues. However, we only have the estimate of $\hat{\bPsi} = (\hat{\bPsi}_1 \cdots \hat{\bPsi}_T)$. Thus, $\hat{\bW}$ and $\hat{\bA}$ can be estimated from $\frac{1}{nprT}\hat{\bPsi}' \hat{\bPsi} = \frac{1}{nprT}(\bPsi+\bU)'(\bPsi+\bU) $, where $\bU = (\bU_1 \cdots \bU_T)$ is the approximation error from the previous steps.

Let $\bV_{npT}$ be the $d \times d$ diagonal matrix of the first $d$ largest eigenvalues of $\frac{1}{nprT} \hat{\bPsi}'\hat{\bPsi}$ in decreasing order. By definition of eigenvectors and eigenvalues, we have $\frac{1}{nprT} \hat{\bPsi}' \hat{\bPsi} \hat{\bW}' = \hat{\bW}' \bV_{npT}$ or $\frac{1}{nprT} \hat{\bPsi}' \hat{\bPsi} \hat{\bW}' \bV_{npT}^{-1} = \hat{\bW}'$.

Define $\bH = \frac{1}{nprT} \bA'\bA \bW \hat{\bW}' \bV^{-1}_{npT}$, then 
\begin{eqnarray*}
\hat{\bW}' - \bW'\bH & = & \frac{1}{nprT} \hat{\bPsi}' \hat{\bPsi} \hat{\bW}' \bV_{npT}^{-1} - \frac{1}{nprT} \bW' \bA'\bA \bW \hat{\bW}' \bV^{-1}_{npT} \\
& = & \left( \frac{1}{nprT} \bW' \bA' \bU \hat{\bW}' + \frac{1}{npT} \bU' \bA \bW \hat{\bW}' + \frac{1}{nprT} \bU'\bU \hat{\bW}' \right) \bV^{-1}_{npT} \\
& = & \left( \bN_1 + \bN_2 + \bN_3 \right) \bV^{-1}_{npT}.
\end{eqnarray*} 

\begin{lemma}
$\frac{1}{rT} \norm{\bN_1}^2_F = \frac{1}{rT} \norm{\bN_2}^2_F = O_p(n^{-\delta} p^{-\gamma} \Delta_{npT})$ and $\frac{1}{rT} \norm{\bN_3}^2_F = O_p(\Delta_{npT}^2)$.
\end{lemma}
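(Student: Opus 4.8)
The plan is to bound each $\bN_i$ by repeated application of the submultiplicative inequalities $\norm{\bP\bQ}_F \le \norm{\bP}_2\norm{\bQ}_F$ and $\norm{\bP\bQ}_F \le \norm{\bP}_F\norm{\bQ}_2$, always placing the error matrix $\bU$ in the Frobenius slot and every remaining factor in the spectral slot. First I would record the ingredients. The normalizations $\frac{1}{rT}\bW\bW' = \bI_d$ and $\frac{1}{rT}\hat{\bW}\hat{\bW}' = \bI_d$ (the latter because the rows of $\frac{1}{\sqrt{rT}}\hat{\bW}$ are orthonormal eigenvectors of $\frac{1}{nprT}\hat{\bPsi}'\hat{\bPsi}$) give $\norm{\bW}_2 = \norm{\hat{\bW}}_2 = \sqrt{rT}$ for free, so no eigenvalue-gap control is needed at this stage. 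The spatial factor strength of Condition \ref{cond:A_factor_strength}, applied to the full location set, gives $\norm{\bA}_2 \asymp n^{(1-\delta)/2}$. Finally, summing the per-$t$ bound $\frac{1}{np}\norm{\bU_t}_F^2 \le \frac{r}{np}\norm{\bU_t}_2^2 = O_p(\Delta_{npT})$ from the preceding $\bPsi$-lemma over $t=1,\ldots,T$ yields $\norm{\bU}_F^2 = O_p(npT\,\Delta_{npT})$.

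For $\bN_1 = \frac{1}{nprT}\bW'\bA'\bU\hat{\bW}'$, I would bound $\norm{\bW'\bA'\bU\hat{\bW}'}_F \le \norm{\bA}_2\norm{\bW}_2\norm{\bU}_F\norm{\hat{\bW}}_2$; the product $\norm{\bW}_2\norm{\hat{\bW}}_2 = rT$ cancels the $rT$ in the prefactor, leaving $\norm{\bN_1}_F \le \frac{1}{np}\norm{\bA}_2\norm{\bU}_F = O_p(n^{-\delta/2}p^{-1/2}T^{1/2}\Delta_{npT}^{1/2})$, and hence $\frac{1}{rT}\norm{\bN_1}_F^2 = O_p(n^{-\delta}p^{-1}\Delta_{npT})$. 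Because $\gamma \in [0,1]$ we have $p^{-1}\le p^{-\gamma}$, so this is absorbed into the claimed $O_p(n^{-\delta}p^{-\gamma}\Delta_{npT})$; the term $\bN_2 = \frac{1}{nprT}\bU'\bA\bW\hat{\bW}'$ is handled identically by symmetry.

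For $\bN_3 = \frac{1}{nprT}\bU'\bU\hat{\bW}'$ the error enters quadratically: using $\norm{\bU'\bU}_F \le \norm{\bU}_2\norm{\bU}_F \le \norm{\bU}_F^2$ together with $\norm{\hat{\bW}}_2 = \sqrt{rT}$ gives $\norm{\bN_3}_F \le \frac{1}{nprT}\norm{\bU}_F^2\sqrt{rT} = O_p(\Delta_{npT}\sqrt{T})$, so that $\frac{1}{rT}\norm{\bN_3}_F^2 = O_p(\Delta_{npT}^2)$, as claimed. The only genuinely delicate point is the aggregation step in the first paragraph: the per-$t$ rate must hold with a constant uniform in $t$ so that summing $T$ copies produces $O_p(npT\,\Delta_{npT})$ rather than a stochastically larger bound. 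This is justified by the strict stationarity of $\{\vect{\bX_t}\}$ in Condition \ref{cond:vecXt_alpha_mixing}, which makes the implied constants in the signal and $\bPsi$ error bounds identical across $t$. Everything else is routine norm bookkeeping.
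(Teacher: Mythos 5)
Your argument is correct and follows essentially the same route as the paper: both proofs reduce each $\bN_i$ to a product of norm bounds, using $\norm{\bW}_2^2=\norm{\hat{\bW}}_2^2=rT$ from the orthonormality normalizations, $\norm{\bU}_F^2=O_p(npT\,\Delta_{npT})$ by aggregating the per-$t$ bound from the preceding $\bPsi$-lemma, and submultiplicativity to isolate $\bU$ (the paper uses Frobenius norms throughout, which is equivalent up to constants since $d$ and $r$ are fixed). The one point where you diverge is the scale of $\bA$: the paper works with the re-estimated spatial loading $\bA=\frac{1}{rT}\bPsi\bW'$, which absorbs the scale of $\bB$ through $\bPsi_t=\bXi_t\bB$ and therefore satisfies $\norm{\bA}_2^2\asymp n^{1-\delta}p^{1-\gamma}$, not the $\norm{\bA}_2^2\asymp n^{1-\delta}$ you take from Condition \ref{cond:A_factor_strength}; with the paper's normalization the computation for $\bN_1,\bN_2$ lands exactly on $O_p(n^{-\delta}p^{-\gamma}\Delta_{npT})$ rather than on the smaller $O_p(n^{-\delta}p^{-1}\Delta_{npT})$ that you then absorb via $p^{-1}\le p^{-\gamma}$. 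This does not affect the validity of the lemma's statement, but you should use the correct scale of $\bA$ for this section so that the intermediate inequality is a genuine upper bound. Your remark on the uniformity over $t$ needed in the aggregation step is well taken; the paper passes over it silently with the same $T\max_t$ device, so you are no worse off there.
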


\begin{proof}

Note that $\norm{\bU}^2_F = \norm{\hat{\bPsi} - \bPsi}^2_F = \norm[\big]{\sum_{t=1}^{T}(\hat{\bPsi}_t - \bPsi_t)}^2_F \le T \; \underset{1 \le t \le T}{\max} \norm{\hat{\bPsi}_t - \bPsi_t}^2_F = O_p(npT \Delta_{npT} )$ and $\norm{\bW}^2_F = \norm{\hat{\bW}}^2_F = O_p(rT)$ and $r$ is fixed. In addition, we have $\norm{\bA}^2_F \asymp \norm{\bA}^2_2 = O_p(n^{1-\delta}p^{1-\gamma})$.

Thus, 
\begin{eqnarray*}
\frac{1}{rT} \norm{\bN_1}^2_F & \le & \frac{1}{n^2p^2r^3T^3} \norm{\bW}^2_F \norm{\bA}^2_F \norm{\bU}^2_F \norm{\hat{\bW}}^2_F = O_p(n^{-\delta} p^{-\gamma} \Delta_{npT}) \\
\frac{1}{rT} \norm{\bN_2}^2_F & \le & \frac{1}{n^2p^2r^3T^3} \norm{\bU}^2_F \norm{\bA}^2_F \norm{\bW}^2_F \norm{\hat{\bW}}^2_F = O_p(n^{-\delta} p^{-\gamma} \Delta_{npT}) \\
\frac{1}{rT} \norm{\bN_3}^2_F & \le & \frac{1}{n^2p^2r^3T^3} \norm{\bU}^4_2 \norm{\hat{\bW}}^2_F = O_p( \Delta^2_{npT} )
\end{eqnarray*}
\end{proof}

\begin{lemma}
(i) $\norm{\bV_{npT}}_2 = O_p(n^{-\delta} p^{-\gamma})$, $\norm{\bV_{npT}^{-1}}_2 = O_p(n^{\delta} p^{\gamma})$. \\
(ii) $\norm{\bH}_2 = O_p(1)$. 
\end{lemma}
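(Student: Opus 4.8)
The plan is to read off both statements in (i) from the spectrum of the \emph{population} matrix $\frac{1}{nprT}\bPsi'\bPsi$ and then transfer the bounds to the sample matrix $\frac{1}{nprT}\hat{\bPsi}'\hat{\bPsi}$, whose top-$d$ eigenvalues constitute $\bV_{npT}$, by a Weyl-type perturbation argument; statement (ii) is then a one-line submultiplicativity estimate that uses (i). First I would identify the population spectrum. Since $\bPsi = \bA\bW$, the nonzero eigenvalues of $\frac{1}{nprT}\bPsi'\bPsi = \frac{1}{nprT}\bW'\bA'\bA\bW$ coincide with those of $\frac{1}{nprT}\bA'\bA\bW\bW'$, which equals $\frac{1}{np}\bA'\bA$ after invoking the normalization $\frac{1}{rT}\bW\bW' = \bI_d$. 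By the factor-strength bound $\norm{\bA}^2_{min} \asymp \norm{\bA}^2_2 \asymp n^{1-\delta}p^{1-\gamma}$ recorded just above, all $d$ eigenvalues of $\frac{1}{np}\bA'\bA$ are of exact order $n^{-\delta}p^{-\gamma}$, so in the population both $\lambda_1$ and $\lambda_d$ are $\asymp n^{-\delta}p^{-\gamma}$.

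Next I would control the perturbation. Writing $\hat\bPsi = \bPsi + \bU$ gives $\frac{1}{nprT}(\hat\bPsi'\hat\bPsi - \bPsi'\bPsi) = \frac{1}{nprT}(\bPsi'\bU + \bU'\bPsi + \bU'\bU)$. Using $\norm{\bPsi}_2 \le \norm{\bA}_2\norm{\bW}_2 = O_p(n^{(1-\delta)/2}p^{(1-\gamma)/2}T^{1/2})$ together with $\norm{\bU}^2_F = O_p(npT\,\Delta_{npT})$ from the preceding lemma, the two cross terms are $O_p(n^{-\delta/2}p^{-\gamma/2}\Delta_{npT}^{1/2})$ and the quadratic term is $O_p(\Delta_{npT})$ in spectral norm. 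Under the standing rate condition $n^\delta p^\gamma T^{-1/2} = o(1)$ (and $\delta,\gamma\le 1$) one checks $n^\delta p^\gamma\Delta_{npT} = o(1)$, so the total perturbation is $o_p(n^{-\delta}p^{-\gamma})$, i.e. of strictly smaller order than the population eigenvalues. Weyl's inequality then yields, for each $i\le d$, that $\lambda_i(\frac{1}{nprT}\hat\bPsi'\hat\bPsi) \asymp \lambda_i(\frac{1}{nprT}\bPsi'\bPsi) \asymp n^{-\delta}p^{-\gamma}$; in particular $\norm{\bV_{npT}}_2 = \lambda_1 = O_p(n^{-\delta}p^{-\gamma})$ and $\norm{\bV^{-1}_{npT}}_2 = 1/\lambda_d = O_p(n^\delta p^\gamma)$, which is (i).

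For (ii), I would apply submultiplicativity to $\bH = \frac{1}{nprT}\bA'\bA\bW\hat\bW'\bV^{-1}_{npT}$:
\[
\norm{\bH}_2 \le \frac{1}{nprT}\,\norm{\bA}^2_2\,\norm{\bW}_2\,\norm{\hat\bW}_2\,\norm{\bV^{-1}_{npT}}_2 .
\]
With $\norm{\bA}^2_2 \asymp n^{1-\delta}p^{1-\gamma}$, with $\norm{\bW}_2 = \norm{\hat\bW}_2 = (rT)^{1/2}$ (both from $\frac{1}{rT}\bW\bW' = \frac{1}{rT}\hat\bW\hat\bW' = \bI_d$), and with $\norm{\bV^{-1}_{npT}}_2 = O_p(n^\delta p^\gamma)$ from (i), the exponents cancel exactly: $\frac{1}{nprT}\cdot n^{1-\delta}p^{1-\gamma}\cdot rT\cdot n^\delta p^\gamma = O_p(1)$, giving $\norm{\bH}_2 = O_p(1)$.

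The hard part will be the perturbation step: one must simultaneously verify that the $d$ signal eigenvalues stay bounded away from zero (which delivers the $\bV^{-1}_{npT}$ bound, the more delicate of the two) and that the estimation error $\bU$ injects a perturbation of \emph{strictly} smaller order, which requires carefully tracking the exponents of $n$, $p$, and $T$ through $\Delta_{npT}$ and invoking $n^\delta p^\gamma T^{-1/2} = o(1)$. The submultiplicativity bound for (ii) is routine once (i) supplies the eigenvalue rates.
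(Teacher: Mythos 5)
Your proposal is correct and follows essentially the same route as the paper: decompose $\frac{1}{nprT}\hat{\bPsi}'\hat{\bPsi}$ (the paper uses the transposed Gram matrix $\frac{1}{nprT}\hat{\bPsi}\hat{\bPsi}'$, which shares the nonzero spectrum) into $\frac{1}{np}\bA'\bA$ plus cross and quadratic perturbation terms, bound the perturbation in spectral norm, invoke Weyl's inequality together with the factor-strength condition to get $\lambda_k \asymp n^{-\delta}p^{-\gamma}$, and finish $\norm{\bH}_2 = O_p(1)$ by submultiplicativity. If anything you are more careful than the paper, which only records the perturbation as $o_p(1)$ rather than the $o_p(n^{-\delta}p^{-\gamma})$ actually needed to preserve the lower bound on the smallest signal eigenvalue.
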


\begin{proof}
The $d$ eigenvalues of $\bV_{npT}$ are the same as those of $\frac{1}{nprT} \hat{\bPsi}\hat{\bPsi}' = \frac{1}{np} \bA \bA' + \frac{1}{nprT} \bA \bW \bU' + \frac{1}{nprT} \bU \bW' \bA' + \frac{1}{nprT} \bU \bU'$, which follows from $\hat{\bPsi}=\bA \bW + \bU$ and $\bW \bW' / rT = I_d$. Thus

\begin{equation*}
\norm{\frac{1}{nprT} \hat{\bPsi}\hat{\bPsi}' - \frac{1}{np} \bA \bA'}_2 \le \frac{1}{nprT} \norm{ \bA \bW \bU_t'}_2 + \frac{1}{nprT} \norm{ \bU \bW' \bA'}_2 + \frac{1}{nprT} \norm{\bU \bU_t'} = o_p(1).
\end{equation*}

Using the inequality for the $k$th eigenvalue, $|\lambda_k(\bW) - \lambda_k(\bW_1)| \le |\bW - \bW_1|$, we have $|\lambda_k(\frac{1}{nprT} \hat{\bPsi}\hat{\bPsi}') - \lambda(\frac{1}{np} \bA \bA')| = o_p(1)$. $\lambda_k(\frac{1}{np} \bA \bA') \asymp n^{-\delta} p^{-\gamma}$, $k=1, \ldots, d$. Thus, $\norm{\bV_{npT}}_{min} \asymp n^{-\delta} p^{-\gamma} \asymp \norm{\bV_{npT}}_2$, $\norm{\bV_{npT}^{-1}}_{min} \asymp n^{\delta} p^{\gamma} \asymp \norm{\bV_{npT}^{-1}}_2$, and $\norm{\bH}_2 = O_p(1)$.

\end{proof}
  
\begin{lemma}
\begin{equation*}
\frac{1}{rT}\norm{\hat{\bW}' - \bW'\bH}^2_F = O_p(\Delta_{npT} + n^{\delta} p^{\gamma} \Delta_{npT}^2) 
\end{equation*}
\end{lemma}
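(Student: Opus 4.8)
The plan is to read the result straight off the decomposition $\hat{\bW}' - \bW'\bH = (\bN_1 + \bN_2 + \bN_3)\bV^{-1}_{npT}$ established above, so that the only ingredients I need are the three term-wise rates for $\bN_1,\bN_2,\bN_3$ and the operator-norm control of $\bV^{-1}_{npT}$. First I would peel off $\bV^{-1}_{npT}$ in spectral norm via $\norm{\bM\bV^{-1}_{npT}}_F \le \norm{\bM}_F\,\norm{\bV^{-1}_{npT}}_2$ with $\bM = \bN_1+\bN_2+\bN_3$, and then split the sum with the elementary bound $\norm{\bN_1+\bN_2+\bN_3}^2_F \le 3\bigl(\norm{\bN_1}^2_F + \norm{\bN_2}^2_F + \norm{\bN_3}^2_F\bigr)$, so that only the previously derived per-term rates enter.

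Dividing by $rT$ and substituting $\tfrac{1}{rT}\norm{\bN_1}^2_F = \tfrac{1}{rT}\norm{\bN_2}^2_F = O_p(n^{-\delta}p^{-\gamma}\Delta_{npT})$ together with $\tfrac{1}{rT}\norm{\bN_3}^2_F = O_p(\Delta_{npT}^2)$ gives $\tfrac{1}{rT}\norm{\bM}^2_F = O_p(n^{-\delta}p^{-\gamma}\Delta_{npT} + \Delta_{npT}^2)$. Combining this with the bound $\norm{\bV^{-1}_{npT}}_2 = O_p(n^{\delta}p^{\gamma})$ from the preceding lemma on the eigenvalues of $\bV_{npT}$ is then the final step: the $n^{-\delta}p^{-\gamma}\Delta_{npT}$ piece, carried against one factor of the ill-conditioning constant, produces the leading $O_p(\Delta_{npT})$, while the $\bN_3$ piece produces $O_p(n^{\delta}p^{\gamma}\Delta_{npT}^2)$, which together match the claimed rate $O_p(\Delta_{npT} + n^{\delta}p^{\gamma}\Delta_{npT}^2)$.

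The main obstacle is precisely the ill-conditioning of $\bV_{npT}$: since its smallest eigenvalue is of order $n^{-\delta}p^{-\gamma}$, the one-line submultiplicative split above pays \emph{two} factors of $\norm{\bV^{-1}_{npT}}_2$ and yields only $O_p(n^{\delta}p^{\gamma}\Delta_{npT} + n^{2\delta}p^{2\gamma}\Delta_{npT}^2)$, which is a factor $n^{\delta}p^{\gamma}$ short of the stated bound. To recover the sharper rate I would avoid bounding $\bV^{-1}_{npT}$ twice in spectral norm and instead exploit the eigen-relation $\tfrac{1}{nprT}\hat{\bPsi}'\hat{\bPsi}\,\hat{\bW}' = \hat{\bW}'\bV_{npT}$ defining $\hat{\bW}$ and $\bV_{npT}$: because each $\bN_i$ already carries $\hat{\bW}'$ on the right and the columns of $\hat{\bW}$ satisfy $\tfrac{1}{rT}\norm{\hat{\bW}}^2_F = O_p(1)$, one factor of $\bV^{-1}_{npT}$ should be absorbable into the normalized eigenvectors rather than charged at the operator-norm rate $n^{\delta}p^{\gamma}$. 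Making this absorption rigorous, instead of relying on the crude norm inequality, is the delicate part of the argument and is what separates the target rate $O_p(\Delta_{npT} + n^{\delta}p^{\gamma}\Delta_{npT}^2)$ from the looser bound.
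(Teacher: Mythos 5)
Your route is exactly the paper's: its entire proof of this lemma is the single line ``Follow from Lemma 6, 7 and 8,'' i.e.\ plug the term-wise rates $\frac{1}{rT}\norm{\bN_1}^2_F=\frac{1}{rT}\norm{\bN_2}^2_F=O_p(n^{-\delta}p^{-\gamma}\Delta_{npT})$, $\frac{1}{rT}\norm{\bN_3}^2_F=O_p(\Delta_{npT}^2)$ and $\norm{\bV_{npT}^{-1}}_2=O_p(n^{\delta}p^{\gamma})$ into the decomposition $\hat{\bW}'-\bW'\bH=(\bN_1+\bN_2+\bN_3)\bV_{npT}^{-1}$. So there is no divergence of method to report.

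The substantive content of your write-up is the observation that this combination, carried out honestly, does \emph{not} give the stated rate: since $\norm{\bM\bV_{npT}^{-1}}_F^2\le\norm{\bM}_F^2\,\norm{\bV_{npT}^{-1}}_2^2$ charges $n^{2\delta}p^{2\gamma}$ on the squared norm, one lands at $O_p(n^{\delta}p^{\gamma}\Delta_{npT}+n^{2\delta}p^{2\gamma}\Delta_{npT}^2)$, a factor $n^{\delta}p^{\gamma}$ above the claim. That diagnosis is correct, and the stated rate is precisely what one would obtain by (incorrectly) charging $\norm{\bV_{npT}^{-1}}_2$ only once against a squared Frobenius norm; the paper supplies no argument that closes this gap, and in the weak-factor regime $\delta>0$ or $\gamma>0$ the gap is real (when $\delta=\gamma=0$ the two rates coincide and there is nothing to fix). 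However, your proposed repair --- absorbing one factor of $\bV_{npT}^{-1}$ into the normalized eigenvectors $\hat{\bW}$ --- does not work as sketched and you do not carry it out: since $\frac{1}{rT}\hat{\bW}\hat{\bW}'=\bI_d$, one has $\norm{\hat{\bW}'\bV_{npT}^{-1}}_F^2=rT\,\mathrm{tr}(\bV_{npT}^{-2})\asymp rT\,n^{2\delta}p^{2\gamma}$, so grouping $\hat{\bW}'$ with $\bV_{npT}^{-1}$ costs exactly the same two factors. Recovering the sharper rate would instead require improving the inputs themselves --- e.g.\ showing that $\norm{\bA'\bU}_F$ concentrates better than the crude bound $\norm{\bA}_F\norm{\bU}_F$ used in the lemma for $\bN_1,\bN_2$ --- which neither you nor the paper does. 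So the proposal correctly reproduces the paper's (incomplete) argument and correctly locates its weak point, but it does not prove the lemma as stated.
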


\begin{proof}
Follow from Lemma 6, 7 and 8. 
\end{proof}

\begin{lemma}
\[
\norm{\bH - \bI_d}_F =  O_p \left(\Delta_{npT} + n^{\delta} p^{\gamma}\Delta_{npT}^2 \right)  + O_p \left(\Delta_{npT} T^{-1} + n^{\delta} p^{\gamma}\Delta_{npT}^2 T^{-1} \right)^{1/2}.
\]
\end{lemma}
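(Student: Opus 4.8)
Throughout write $\bD:=\hat\bW'-\bW'\bH$ and $B:=\Delta_{npT}+n^{\delta}p^{\gamma}\Delta_{npT}^{2}$, so that the preceding lemma reads $\tfrac{1}{rT}\norm{\bD}_F^{2}=O_p(B)$ and the asserted rate is exactly $O_p(B)+O_p\big((B/T)^{1/2}\big)$. The plan is to first control the symmetric quantity $\bH'\bH-\bI_d$ and then upgrade to $\bH-\bI_d$. Using the two normalizations $\tfrac{1}{rT}\bW\bW'=\bI_d$ and $\tfrac{1}{rT}\hat\bW\hat\bW'=\bI_d$ together with $\hat\bW'=\bW'\bH+\bD$ gives the exact identity
\[
\bI_d-\bH'\bH=\tfrac{1}{rT}\bigl(\bH'\bW\bD+\bD'\bW'\bH+\bD'\bD\bigr).
\]

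The quadratic term is immediate: $\tfrac{1}{rT}\norm{\bD'\bD}_F\le\tfrac{1}{rT}\norm{\bD}_F^{2}=O_p(B)$, which already supplies the first summand. Both cross terms reduce, via $\norm{\bH}_2=O_p(1)$ (the preceding norm lemma), to bounding $\tfrac{1}{rT}\bW\bD$. The naive submultiplicative bound $\tfrac{1}{rT}\norm{\bW}_2\norm{\bD}_F=O_p(B^{1/2})$ is too large by a factor $T^{1/2}$, so the whole difficulty is concentrated in showing $\tfrac{1}{rT}\norm{\bW\bD}_F=O_p\big((B/T)^{1/2}\big)$.

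This is the main obstacle. Writing $\tfrac{1}{rT}\bW\bD=\tfrac{1}{rT}\sum_{t=1}^{T}\bW_t\bD_t$ in time blocks, I would treat the sum as a time average of the cross-products between the true latent factors $\bW_t$ and the carried-over estimation error $\bD_t$, rather than bounding it term by term. Since $\bD_t$ is to leading order a linear functional of the idiosyncratic noise, which is uncorrelated with $\bX_t$ at all leads and lags, while $\{\vect{\bX_t}\}$ is strictly stationary and $\alpha$-mixing (Condition~\ref{cond:vecXt_alpha_mixing}) with the moment bound of Condition~\ref{cond:Xt_cov_fullrank_bounded}, each summand is mean-zero up to a higher-order bias and the cross-block covariances are summable. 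A mixing second-moment bound then gives $\norm{\sum_t\bW_t\bD_t}_F=O_p\big(\sqrt{r}\,\norm{\bD}_F\big)$ instead of the worst-case $\sqrt{rT}\,\norm{\bD}_F$, and since $\tfrac{1}{rT}\sqrt{r}\,\norm{\bD}_F=\tfrac{1}{\sqrt{r}\,T}\norm{\bD}_F=O_p\big((B/T)^{1/2}\big)$ this is precisely the extra $T^{-1/2}$ over the crude bound. The real work is to make this averaging rigorous: controlling the off-diagonal covariances through the mixing coefficients of Condition~\ref{cond:vecXt_alpha_mixing} and verifying that the bias from replacing $\bD_t$ by its noise-driven leading part is of smaller order. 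Combining the two terms yields $\norm{\bH'\bH-\bI_d}_F=O_p(B)+O_p\big((B/T)^{1/2}\big)$.

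Finally I would pass from $\bH'\bH-\bI_d$ to $\bH-\bI_d$. The identity $\bH'\bH-\bI_d=(\bH-\bI_d)+(\bH-\bI_d)'+(\bH-\bI_d)'(\bH-\bI_d)$ shows that the bound above controls the symmetric part $(\bH-\bI_d)+(\bH-\bI_d)'$ at the stated rate, the quadratic remainder being of smaller order once $\norm{\bH-\bI_d}_2=o_p(1)$ is known (which follows from $\norm{\bH}_2=O_p(1)$ and $\norm{\bH'\bH-\bI_d}_2=o_p(1)$). The antisymmetric part is not seen by $\bH'\bH$ and must be handled separately: here I would invoke the separation of the limiting eigenvalues collected in $\bV_{npT}$, which identifies the eigenvectors up to sign and, under the ordering-and-sign convention, pins the rotation to $\bI_d$ and bounds the off-diagonal (antisymmetric) rotation at the same order via an eigenvector-perturbation argument that again benefits from the averaging of the factor-against-noise cross term. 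This gives $\norm{\bH-\bI_d}_F=O_p(B)+O_p\big((B/T)^{1/2}\big)$, as claimed. The principal difficulty remains the mixing/averaging estimate for $\tfrac{1}{rT}\bW\bD$; the eigenvalue-separation step is the secondary technical point.
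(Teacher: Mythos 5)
Your argument is essentially the paper's: both reduce $\bI_d-\bH'\bH$ to a quadratic term $\tfrac{1}{rT}\bD'\bD=O_p(\Delta_{npT}+n^{\delta}p^{\gamma}\Delta_{npT}^2)$ plus cross terms of the form $\tfrac{1}{rT}\bW\bD$, and both pass from $\bH'\bH$ to $\bH$ via an eigenvalue-separation/eigenvector-perturbation step (the paper defers this to Proposition C.3 of Fan et al., 2016). The only substantive difference is that the paper simply asserts $\tfrac{1}{rT}\norm{\bW(\hat{\bW}'-\bW'\bH)}_F=O_p\left(\Delta_{npT}T^{-1}+n^{\delta}p^{\gamma}\Delta_{npT}^2T^{-1}\right)^{1/2}$ without justification, whereas you correctly point out that the submultiplicative bound only yields the weaker $O_p(\Delta_{npT}+n^{\delta}p^{\gamma}\Delta_{npT}^2)^{1/2}$ and that the extra $T^{-1/2}$ must come from a mixing/averaging argument on the factor-against-error cross product --- a step you, like the paper, leave unproved, so neither write-up is complete there, but yours at least names the gap.
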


\begin{proof}
$\bH = \frac{1}{nprT} \bA'\bA \bW \hat{\bW}' \bV^{-1}_{npT}$

\begin{eqnarray*}
\norm[\bigg]{\bI_d - \frac{1}{rT}\hat{\bW} \bW'\bH}_F & = & \norm[\bigg]{\frac{1}{rT}\hat{\bW}(\hat{\bW}' - \bW'\bH)}_F \\
& \le & \frac{1}{rT} \norm{\hat{\bW}' - \bW'\bH}^2_F + \frac{1}{rT}\norm{\bW(\hat{\bW}' - \bW' \bH)}_F \\
& = & O_p \left(\Delta_{npT} + n^{\delta} p^{\gamma}\Delta_{npT}^2\right)  + O_p \left(\Delta_{npT} T^{-1} + n^{\delta} p^{\gamma}\Delta_{npT}^2 T^{-1} \right) ^{1/2}
\end{eqnarray*}

\begin{eqnarray*}
\norm[\bigg]{\frac{1}{rT} \hat{\bW} \bW' \bH - \bH'\bH }_F & = & \norm[\bigg]{ \frac{1}{rT}(\hat{\bW}' - \bW'H)'\bW'\bH }_F = O_p \left(\Delta_{npT} T^{-1} + n^{\delta} p^{\gamma}\Delta_{npT}^2 T^{-1} \right) ^{1/2}
\end{eqnarray*}

Thus, 
\[
\norm[\bigg]{\bI_d - \bH'\bH}_F = O_p \left(\Delta_{npT} + n^{\delta} p^{\gamma}\Delta_{npT}^2\right)  + O_p \left(\Delta_{npT} T^{-1} + n^{\delta} p^{\gamma}\Delta_{npT}^2 T^{-1} \right) ^{1/2}
\]

In addition, by the definition of $\bH = \frac{1}{nprT} \bA'\bA \bW \hat{\bW}' \bV^{-1}_{npT}$, we have
\[
\norm[bigg]{\bH \bV_{npT} - \frac{1}{np} \bA'\bA \bH }_F = \frac{1}{nprT} \bA'\bA \bW (\hat{\bW}' - \bW' \bH) = O_p \left( n^{-\delta} p^{-\gamma} (\Delta_{npT} + n^{\delta} p^{\gamma}\Delta_{npT}^2)^{-1/2} \right).
\]

With the same argument of Proposition C.3 in \cite{fan2016projected}, we have 
\[
\norm{\bH - \bI_d}_F =  O_p \left(\Delta_{npT} + n^{\delta} p^{\gamma}\Delta_{npT}^2 \right)  + O_p \left(\Delta_{npT} T^{-1} + n^{\delta} p^{\gamma}\Delta_{npT}^2 T^{-1} \right)^{1/2}.
\]

\end{proof}

\begin{theorem}
\[
\frac{1}{rT} \norm{\hat{\bW}' - \bW'}^2_F = O_p \left(\Delta_{npT} + n^{\delta} p^{\gamma}\Delta_{npT}^2\right)
\]
\end{theorem}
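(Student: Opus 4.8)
The plan is to obtain the bound by inserting the rotation matrix $\bH$ and applying the triangle inequality, so that the two preceding lemmas can be invoked directly. Writing
\[
\hat{\bW}' - \bW' = (\hat{\bW}' - \bW'\bH) + \bW'(\bH - \bI_d),
\]
and using $(a+b)^2 \le 2a^2 + 2b^2$, I would split
\[
\frac{1}{rT}\norm{\hat{\bW}' - \bW'}^2_F \le \frac{2}{rT}\norm{\hat{\bW}' - \bW'\bH}^2_F + \frac{2}{rT}\norm{\bW'(\bH - \bI_d)}^2_F.
\]
The first summand is precisely the quantity controlled in the lemma bounding $\frac{1}{rT}\norm{\hat{\bW}' - \bW'\bH}^2_F$, which gives $O_p(\Delta_{npT} + n^{\delta}p^{\gamma}\Delta_{npT}^2)$. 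It then suffices to show the second summand has the same order.

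For the cross term I would use the submultiplicative estimate $\norm{\bW'(\bH - \bI_d)}_F \le \norm{\bW'}_F \norm{\bH - \bI_d}_2$ together with the fact $\norm{\bW}^2_F = O_p(rT)$ recorded inside the proof of the $\bN_j$ lemma, so that $\frac{1}{rT}\norm{\bW'}_F^2 = O_p(1)$. Because $\bH - \bI_d$ is $d \times d$ with $d$ fixed, $\norm{\bH - \bI_d}_2 \le \norm{\bH - \bI_d}_F$, and the lemma bounding $\norm{\bH - \bI_d}_F$ yields
\[
\norm{\bH - \bI_d}_2^2 = O_p\big((\Delta_{npT} + n^{\delta}p^{\gamma}\Delta_{npT}^2)^2\big) + O_p\big(\Delta_{npT} T^{-1} + n^{\delta}p^{\gamma}\Delta_{npT}^2 T^{-1}\big).
\]
Consequently $\frac{2}{rT}\norm{\bW'(\bH - \bI_d)}^2_F = O_p(1)\cdot\norm{\bH - \bI_d}_2^2$ inherits this same order.

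The only genuinely nontrivial step is to verify that both error pieces are dominated by $D := \Delta_{npT} + n^{\delta}p^{\gamma}\Delta_{npT}^2$. The factor $T^{-1}$ in the second piece is harmless since $T \ge 1$, giving $D\,T^{-1} = O(D)$; for the squared piece I would argue $D = o(1)$, whence $D^2 = O(D)$. Under the standing assumption $n^{\delta}p^{\gamma}T^{-1/2} = o(1)$, a termwise check shows $\Delta_{npT} = o(1)$ and moreover $n^{\delta}p^{\gamma}\Delta_{npT} = O(1)$: expanding $n^{\delta}p^{\gamma}\Delta_{npT} = n^{2\delta}p^{2\gamma}T^{-1} + n^{-1/2+3\delta/2}p^{-1/2+3\gamma/2}T^{-1/2} + n^{-1+\delta}p^{-1+\gamma}$, the first two terms are $o(1)$ (each carrying a factor $n^{\delta}p^{\gamma}T^{-1/2}=o(1)$ times a quantity bounded by $1$), while the last is at most $1$ since $\delta,\gamma \le 1$. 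Hence $n^{\delta}p^{\gamma}\Delta_{npT}^2 = O(\Delta_{npT}) = o(1)$ and $D = o(1)$, so $\norm{\bH - \bI_d}_2^2 = O_p(D)$. Adding the two contributions gives
\[
\frac{1}{rT}\norm{\hat{\bW}' - \bW'}^2_F = O_p\left(\Delta_{npT} + n^{\delta}p^{\gamma}\Delta_{npT}^2\right),
\]
as claimed. The main obstacle is this final dominance bookkeeping; everything else is an immediate assembly of the prior lemmas.
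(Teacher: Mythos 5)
Your proposal is correct and follows essentially the same route as the paper: insert the rotation $\bH$, split via $\frac{1}{rT}\norm{\hat{\bW}' - \bW'}^2_F \le \frac{2}{rT}\norm{\hat{\bW}' - \bW'\bH}^2_F + 2\norm{\bH - \bI_d}^2_F$, and invoke the two preceding lemmas. Your explicit verification that the squared and $T^{-1}$-scaled pieces of the $\norm{\bH-\bI_d}_F$ bound are dominated by $\Delta_{npT} + n^{\delta}p^{\gamma}\Delta_{npT}^2$ is bookkeeping the paper leaves implicit, and it checks out.
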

\begin{proof}
\[
\frac{1}{rT} \norm{\hat{\bW}' - \bW'}^2_F \le \frac{2}{rT} \norm{\hat{\bW}' - \bW'\bH}^2_F + 2 \norm{\bH - \bI_d}^2_F = O_p \left(\Delta_{npT} + n^{\delta} p^{\gamma}\Delta_{npT}^2\right)
\]
\end{proof}

\begin{proposition}
\[
\frac{1}{np} \norm[\big]{\hat{\bA} - \bA}^2_F = O_p \left( \Delta_{npT} \right). 
\]
\end{proposition}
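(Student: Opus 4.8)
The plan is to work directly from the closed forms $\hat{\bA} = \frac{1}{rT}\hat{\bPsi}\hat{\bW}'$ and its population analogue. Because $\bPsi = \bA\bW$ and the rows of $\frac{1}{\sqrt{rT}}\bW$ are orthonormal, i.e. $\frac{1}{rT}\bW\bW' = \bI_d$, the target satisfies $\bA = \frac{1}{rT}\bPsi\bW'$ exactly. Writing $\bU = \hat{\bPsi} - \bPsi$, I would split the error as
\begin{equation*}
\hat{\bA} - \bA = \frac{1}{rT}\hat{\bPsi}\big(\hat{\bW}' - \bW'\big) + \frac{1}{rT}\bU\bW',
\end{equation*}
and show that, after normalization by $np$, each summand is $O_p(\Delta_{npT})$.

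The second summand is the easy one: submultiplicativity gives $\norm{\frac{1}{rT}\bU\bW'}_F^2 \le \frac{1}{(rT)^2}\norm{\bU}_F^2\norm{\bW}_2^2$, and since $\norm{\bW}_2^2 = rT$ and the bound $\norm{\bU}_F^2 = O_p(npT\Delta_{npT})$ is already in hand, dividing by $np$ yields $\frac{1}{np\,rT}\norm{\bU}_F^2 = O_p(\Delta_{npT})$ at once.

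For the first summand I would bound $\norm{\hat{\bPsi}}_2 \le \norm{\bPsi}_2 + \norm{\bU}_2$. The scalings $\norm{\bA}_2^2 \asymp n^{1-\delta}p^{1-\gamma}$ (established in the analysis of $\bV_{npT}$) and $\norm{\bW}_2^2 = rT$ give $\norm{\bPsi}_2^2 \le \norm{\bA}_2^2\norm{\bW}_2^2 = O(n^{1-\delta}p^{1-\gamma}rT)$, while $\norm{\bU}_2^2 \le \norm{\bU}_F^2 = O_p(npT\Delta_{npT})$ is of the same order as $\norm{\bPsi}_2^2$ as soon as $n^\delta p^\gamma\Delta_{npT} = O(1)$; hence $\norm{\hat{\bPsi}}_2^2 = O_p(n^{1-\delta}p^{1-\gamma}rT)$. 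Feeding this and Theorem \ref{thm:reest_X_err_bnd} into the submultiplicative bound gives
\begin{equation*}
\frac{1}{np}\norm[\Big]{\tfrac{1}{rT}\hat{\bPsi}(\hat{\bW}' - \bW')}_F^2 \le \frac{n^{-\delta}p^{-\gamma}}{rT}\norm{\hat{\bW}' - \bW'}_F^2 = O_p\big(n^{-\delta}p^{-\gamma}\Delta_{npT} + \Delta_{npT}^2\big),
\end{equation*}
which is $O_p(\Delta_{npT})$.

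The computations are otherwise mechanical, so the only real obstacle is the rate bookkeeping. The convergence rate for $\hat{\bW}'$ carries the inflated second term $n^\delta p^\gamma\Delta_{npT}^2$, and the argument survives only because the prefactor $\frac{1}{np}\norm{\bPsi}_2^2 = O(n^{-\delta}p^{-\gamma})$ exactly cancels the leading $n^\delta p^\gamma$, collapsing that contribution to $\Delta_{npT}^2$. I would therefore verify at the outset that $n^\delta p^\gamma\Delta_{npT} = o(1)$: writing $a = n^\delta p^\gamma$ and $b = n^{1-\delta}p^{1-\gamma}$ one has $a\Delta_{npT} = a^2 T^{-1} + (aT^{-1/2})b^{-1/2} + b^{-1}$, and each summand is $o(1)$ under the standing assumption $n^\delta p^\gamma T^{-1/2} = o(1)$ together with $b\to\infty$. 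This also certifies $\Delta_{npT} = o(1)$, so $\Delta_{npT}^2 = O(\Delta_{npT})$, which closes the bound.
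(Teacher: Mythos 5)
Your proof is correct and follows essentially the same route as the paper: both telescope $\hat{\bA}-\bA = \frac{1}{rT}\hat{\bPsi}\hat{\bW}'-\frac{1}{rT}\bPsi\bW'$ into two cross terms, invoke $\norm{\bU}_F^2 = O_p(npT\Delta_{npT})$, the scaling $\norm{\bPsi}_2^2 = O_p(n^{1-\delta}p^{1-\gamma}rT)$, and Theorem \ref{thm:reest_X_err_bnd}, and observe that the $n^{-\delta}p^{-\gamma}$ prefactor absorbs the inflated $n^{\delta}p^{\gamma}\Delta_{npT}^2$ term. The only difference is that you pair $\hat{\bPsi}$ with $\hat{\bW}'-\bW'$ (forcing the extra check that $\norm{\bU}_2^2$ does not dominate $\norm{\bPsi}_2^2$, i.e. $n^{\delta}p^{\gamma}\Delta_{npT}=O(1)$), whereas the paper pairs $\bPsi$ with $\hat{\bW}'-\bW'$ and $\bU$ with $\hat{\bW}'$, which sidesteps that step; your explicit verification that $\Delta_{npT}=o(1)$ is bookkeeping the paper leaves implicit.
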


\begin{proof}
\[
\hat{\bA} = \frac{1}{rT} \hat{\bPsi} \hat{\bW}'
\]
\begin{equation*}
\frac{1}{rT} \norm{\bPsi}^2_2 = \norm{\bPsi\bPsi'}_2 = \norm[\bigg]{ \frac{1}{rT} \sum_{t=1}^{T} \bPsi_t\bPsi'_t }_2 \le \underset{1\le t \le T}{\max} \norm{ \bPsi_t\bPsi'_t }_2 / r = O_p(n^{1-\delta}p^{1-\gamma})
\end{equation*}

\begin{eqnarray*}
\frac{1}{np} \norm[\big]{\hat{\bA} - \bA}^2_F & = & \frac{1}{np} \norm[\bigg]{\frac{1}{rT}\hat{\bPsi} \hat{\bW}' - \frac{1}{rT}{\bPsi}{\bW}'}^2_F = \frac{1}{np} \norm[\bigg]{\frac{1}{rT}(\hat{\bPsi}-{\bPsi}) \hat{\bW}' + \frac{1}{rT} {\bPsi} (\hat{\bW}' - {\bW}')}^2_F \\
& \le & 2 \frac{\norm{\bU}^2_F}{nprT} \cdot \frac{1}{rT} \norm{\hat{\bW}'}^2_F + 2 \frac{\norm{\bPsi}^2_F}{nprT} \cdot \frac{1}{rT} \norm{\hat{\bW}' - \bW'}^2_F \\
& = & O_p \left( \Delta_{npT} + n^{-\delta} p^{-\gamma}  (\Delta_{npT} + n^{\delta} p^{\gamma}\Delta_{npT}^2 ) \right) \\
& = & O_p ( \Delta_{npT} )
\end{eqnarray*} 
\end{proof}

\subsection{Sieve approximation of space loading function}

$\bA(\bs) = (a_1(\bs), \cdots, a_d(\bs))$, now we want to approximate $a_j(\bs)$ with linear combination of basis functions, the approximating functions are $\hat{g}_j(\bs)$. We estimate $\hat{g}_j(\bs)$ based on estimated value $\hat{A}_{\cdot j}$'s. $\hat{A}_{\cdot j} = {A}_{\cdot j} + {E}_{A,\cdot j}$. Since for $n \times d$ matrix $\bA$ with fixed column dimension $d$, $\norm{\bA}^2_2 \le \norm{\bA}^2_F \le d \norm{\bA}^2_2$. we have $\norm{\hat{A}_{\cdot j} - {A}_{\cdot j}}^2_2 = O_p(np\Delta_{npT})$, $j = 1, \ldots, d$. 

${A}_{\cdot j} = a_j(\bs)$, then $\hat{A}_{\cdot j} = \hat{a}_j(\bs)= a_j(\bs) + e_{a,j}(\bs)$. 

\begin{lemma}
If H\"{o}lder class, then $| a_{j}(\bs) |^2_{\infty} \asymp n^{-\delta} p^{1-\gamma}$, $| e_{a,j}(\bs) |^2_{\infty} = O_p(p\Delta_{npT})$.
\end{lemma}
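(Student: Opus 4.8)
The plan is to prove the two assertions separately, in each case first fixing the aggregate ($L_2$-over-locations) order and then upgrading it to the pointwise/sup order. Throughout I identify the column $A_{\cdot j}$ with the vector $(a_j(\bs_1),\dots,a_j(\bs_n))'$ of function values at the sampling sites, so that $\norm{A_{\cdot j}}_2^2=\sum_{i=1}^n a_j(\bs_i)^2$, and likewise for $e_{a,j}$. The two ingredients I draw on are the eigenvalue calculation already carried out for the re-estimated loadings, which gives $\lambda_k(\tfrac1{np}\bA\bA')\asymp n^{-\delta}p^{-\gamma}$ for $k=1,\dots,d$, and the column-wise consequence of Proposition~\ref{thm:reest_A_err_bnd}, namely $\norm{\hat A_{\cdot j}-A_{\cdot j}}_2^2=O_p(np\,\Delta_{npT})$.

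For the first assertion, multiplying the eigenvalue bound by $np$ gives $\lambda_k(\bA\bA')\asymp n^{1-\delta}p^{1-\gamma}$ for every $k\le d$, so $\norm{\bA}_{min}^2\asymp\norm{\bA}_2^2\asymp n^{1-\delta}p^{1-\gamma}$. Each diagonal entry of the $d\times d$ Gram matrix $\bA'\bA$ is squeezed between its extreme eigenvalues, so $\norm{A_{\cdot j}}_2^2=(\bA'\bA)_{jj}\in[\norm{\bA}_{min}^2,\norm{\bA}_2^2]\asymp n^{1-\delta}p^{1-\gamma}$ for each $j$. Hence the empirical average of $a_j(\bs_i)^2$ is $\asymp n^{-\delta}p^{1-\gamma}$, which already yields the lower bound $|a_j|_\infty^2\ge\tfrac1n\norm{A_{\cdot j}}_2^2\asymp n^{-\delta}p^{1-\gamma}$. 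For the matching upper bound I would factor the $(n,p)$-dependent normalization out of the loading, writing $a_j(\bs)=c_{np}\,g_j(\bs)$ with $c_{np}^2\asymp n^{-\delta}p^{1-\gamma}$ and $g_j$ a fixed shape function in the H\"older-class condition with $\sup_{\bs\in\calS}|g_j(\bs)|=O(1)$ (finite since $\calS$ is compact and $g_j$ continuous). Then $|a_j|_\infty^2=c_{np}^2\sup_{\bs}|g_j(\bs)|^2\asymp n^{-\delta}p^{1-\gamma}$, provided the empirical sampling measure is non-degenerate on the support of $g_j$ so that the empirical $L_2$ norm faithfully reflects the true scaling.

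For the second assertion, Proposition~\ref{thm:reest_A_err_bnd} gives $\norm{\hat\bA-\bA}_F^2=O_p(np\,\Delta_{npT})$, and since $d$ is fixed this transfers to each column as $\norm{e_{a,j}}_2^2=O_p(np\,\Delta_{npT})$; dividing by $n$ shows the average per-location squared error is $O_p(p\,\Delta_{npT})$. The claimed bound $|e_{a,j}|_\infty^2=O_p(p\,\Delta_{npT})$ asserts that the \emph{maximum} per-location error is of this same order, i.e. that the estimation error is spread roughly evenly across locations. I would establish this by re-deriving the loading error row by row: starting from the decomposition $\hat\bA-\bA=\tfrac1{rT}(\hat\bPsi-\bPsi)\hat\bW'+\tfrac1{rT}\bPsi(\hat\bW'-\bW')$ used in the proof of Proposition~\ref{thm:reest_A_err_bnd}, I would isolate the $i$-th row, bound $\norm{(\hat\bPsi-\bPsi)_{i\cdot}}_2$ by the location-wise signal error and $\norm{\bPsi_{i\cdot}}_2$ by the location-wise signal strength, and take the maximum over $i=1,\dots,n$ using the balancedness of the strong factors implied by Conditions~\ref{cond:A_factor_strength}--\ref{cond:B_factor_strength}.

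The main obstacle is precisely this last upgrade from an aggregate $L_2$ bound to a uniform ($\ell_\infty$) bound in the second assertion: an $\ell_2$ bound on a vector does not by itself control its largest entry, so the argument must exploit extra structure. Concretely, the per-location signal error entering the row-wise decomposition is not delivered directly by the earlier signal-estimation results (which control only the aggregate $\norm{\hat\bXi_t-\bXi_t}_2$), so the crux is to track that error location by location and to verify that the strong-factor balancedness prevents any single location from absorbing a disproportionate share of the total error. Once the per-location errors are shown to be uniformly $O_p(p\,\Delta_{npT})$, both the sup control in the first assertion and the $\ell_\infty$ bound in the second follow, completing the lemma.
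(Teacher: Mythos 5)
Your treatment of the first assertion is essentially the paper's: the paper likewise sandwiches $\norm{A_{\cdot j}}_2^2=(\bA'\bA)_{jj}$ between $\lambda_{\min}(\bA'\bA)$ and $\lambda_{\max}(\bA'\bA)$ to obtain $\norm{A_{\cdot j}}_2^2\asymp n^{1-\delta}p^{1-\gamma}$, and then appeals to the H\"older smoothness (``multivariate Taylor expansion and Sandwich Theorem'') to pass from the empirical average $\frac{1}{n}\sum_{i} a_j(\bs_i)^2\asymp n^{-\delta}p^{1-\gamma}$ to the sup norm. Your explicit factorization $a_j=c_{np}\,g_j$ makes transparent something the paper leaves implicit: as literally stated, the H\"older-class condition bounds $|a_j|_\infty$ by a fixed constant $c$, which is incompatible with $|a_j|_\infty^2\asymp n^{-\delta}p^{1-\gamma}$ unless the loading carries an $(n,p)$-dependent scale, so this part of your write-up is fine and arguably cleaner than the original.

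The genuine gap is the second assertion, which you have correctly located but not closed. All that Proposition~\ref{thm:reest_A_err_bnd} delivers is the aggregate bound $\norm{\hat{A}_{\cdot j}-A_{\cdot j}}_2^2=O_p(np\,\Delta_{npT})$; the claim $|e_{a,j}|_\infty^2=O_p(p\,\Delta_{npT})$ asserts that the largest of the $n$ entries is of the same order as the root-mean-square entry, and an $\ell_2$ bound cannot give that on its own --- a single location could in principle absorb the entire $np\,\Delta_{npT}$ budget. Your proposed remedy, re-deriving the error row by row from $\hat{\bA}-\bA=\frac{1}{rT}(\hat{\bPsi}-\bPsi)\hat{\bW}'+\frac{1}{rT}\bPsi(\hat{\bW}'-\bW')$ and taking a maximum over rows, is the right idea but remains a plan: the required uniform-over-locations bound on the rows of $\hat{\bPsi}-\bPsi$ is a per-location version of the signal-error analysis that is supplied by neither you nor the paper (the earlier results control only spectral and Frobenius norms of $\hat{\bXi}_t-\bXi_t$), and the ``balancedness'' you invoke is not among the stated conditions. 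For what it is worth, the paper's own proof of this lemma addresses only the first assertion and is entirely silent on $|e_{a,j}|_\infty$, so the gap is in the source as much as in your proposal; since the sup-norm bound on $\hat{g}_j-a_j$ in the following lemma consumes exactly this estimate, the uniform per-location error control needs to be proved rather than asserted.
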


\begin{proof}
\begin{eqnarray*}
\lambda_{max}(\bA \bA') =  \lambda_{max}(\sum_{j=1}^{d} A_{\cdot j} A'_{\cdot j}) \ge \lambda_{min}(\sum_{j=1}^{d} A_{\cdot j} A'_{\cdot j}) & \ge & \sum_{j=1}^{d} \lambda_{min}(A'_{\cdot j}A_{\cdot j}) = \sum_{j=1}^{d} \sum_{i=1}^{n} A_{i j}^2 \\
\lambda_{min}(\bA \bA') = \lambda_{min}(\sum_{j=1}^{d} A_{\cdot j} A'_{\cdot j}) \le \lambda_{max}(\sum_{j=1}^{d} A_{\cdot j} A'_{\cdot j}) & \le & \sum_{j=1}^{d} \lambda_{max}(A'_{\cdot j}A_{\cdot j}) = \sum_{j=1}^{d} \sum_{i=1}^{n} A_{i j}^2 
\end{eqnarray*}

Since $\norm{\bA}^2_{min} \asymp \norm{\bA}^2_{max} \asymp n^{1-\delta}p^{1-\gamma}$, then $\| A_{\cdot j} \|^2 \asymp n^{1-\delta} p^{1-\gamma}$.

If H\"{o}lder class, then $| a_{j}(\bs) |^2_{\infty} \asymp n^{-\delta} p^{1-\gamma}$ by multivariate Taylor expansion and Sandwich Theorem.

\end{proof}

\begin{lemma}
$\norm{\hat{g}_j(\bs) - a_j(\bs)}_{\infty} = O_p(J_n^{-\kappa} n^{-\delta/2} p^{1/2-\gamma/2}) + O_p(\sqrt{p\Delta_{npT}})$.
\end{lemma}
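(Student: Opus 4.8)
The plan is to split the sieve estimation error into a deterministic approximation (bias) part and a stochastic (noise) part and bound each against one of the two terms on the right-hand side. Write $\bU = [u_i(\bs_k)]_{k\le n,\, i\le J_n}$ for the $n\times J_n$ design matrix of the tensor-product linear sieve and $u(\bs) = (u_1(\bs),\ldots,u_{J_n}(\bs))'$, so that the fitted function is $\hat g_j(\bs) = u(\bs)'\hat\beta_j$ with $\hat\beta_j = (\bU'\bU)^{-1}\bU'\hat A_{\cdot j}$ and $\hat A_{\cdot j} = A_{\cdot j} + E_{A,\cdot j}$, where $A_{\cdot j} = (a_j(\bs_1),\ldots,a_j(\bs_n))'$ and $E_{A,\cdot j}$ collects the pointwise errors $e_{a,j}(\bs_k)$. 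By linearity of least squares I would decompose $\hat g_j = \hat g_j^{(s)} + \hat g_j^{(e)}$, where $\hat g_j^{(s)}(\bs) = u(\bs)'(\bU'\bU)^{-1}\bU'A_{\cdot j}$ is the sieve fit to the signal and $\hat g_j^{(e)}(\bs) = u(\bs)'(\bU'\bU)^{-1}\bU'E_{A,\cdot j}$ is the sieve fit to the noise.

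For the bias part I would invoke smoothness. The Hölder condition gives $a_j\in\calA^{\kappa}_c(\calS)$ on $\calS\subset\mathbb{R}^2$, but because the re-estimated column carries the spatial factor strength, the preceding lemma shows $|a_j(\bs)|_{\infty}\asymp n^{-\delta/2}p^{1/2-\gamma/2}$, i.e.\ the effective Hölder constant scales as $n^{-\delta/2}p^{1/2-\gamma/2}$. Standard approximation theory for tensor-product linear sieves then yields a best-approximation error $\inf_{\beta}\sup_{\bs}|a_j(\bs)-u(\bs)'\beta| = O(J_n^{-\kappa}\,n^{-\delta/2}p^{1/2-\gamma/2})$, and since the discrete least-squares projection of such a basis has a bounded Lebesgue constant under a quasi-uniform design, $\|\hat g_j^{(s)} - a_j\|_{\infty} = O_p(J_n^{-\kappa}\,n^{-\delta/2}p^{1/2-\gamma/2})$, which is the first term.

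For the stochastic part the key is to convert the already-established $\ell_2$ bound on the pointwise errors into a uniform bound. By Proposition~\ref{thm:reest_A_err_bnd} we have $\|E_{A,\cdot j}\|_2^2 = \|\hat A_{\cdot j}-A_{\cdot j}\|_2^2 = O_p(np\,\Delta_{npT})$. Applying Cauchy--Schwarz in the $(\bU'\bU)^{-1}$ inner product,
\[
|\hat g_j^{(e)}(\bs)| \le \sqrt{u(\bs)'(\bU'\bU)^{-1}u(\bs)}\;\cdot\;\sqrt{E_{A,\cdot j}'\,\bU(\bU'\bU)^{-1}\bU'\,E_{A,\cdot j}}.
\]
The second factor equals the norm of the orthogonal projection of $E_{A,\cdot j}$ onto the column space of $\bU$, hence is bounded by $\|E_{A,\cdot j}\|_2 = O_p(\sqrt{np\,\Delta_{npT}})$; crucially this step uses no moment or independence structure on the $e_{a,j}(\bs_k)$, which is what makes the estimation-induced noise tractable. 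For the first factor I would use sieve regularity of the tensor-product basis: local support gives $\sup_{\bs}\|u(\bs)\|_2^2 = O(1)$ and a balanced design gives $\|(\bU'\bU)^{-1}\|_2 = O_p(1/n)$, so $\sup_{\bs}u(\bs)'(\bU'\bU)^{-1}u(\bs) = O_p(1/n)$. Combining, $\|\hat g_j^{(e)}\|_{\infty} = O_p(n^{-1/2}\|E_{A,\cdot j}\|_2) = O_p(\sqrt{p\,\Delta_{npT}})$, the second term, and the triangle inequality $\|\hat g_j - a_j\|_{\infty}\le \|\hat g_j^{(s)}-a_j\|_{\infty} + \|\hat g_j^{(e)}\|_{\infty}$ finishes the proof.

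The main obstacle is precisely the uniform control of the stochastic term: the errors $e_{a,j}(\bs_k)$ are neither independent nor mean-zero, so the only usable information is the aggregate bound $\|E_{A,\cdot j}\|_2^2 = O_p(np\,\Delta_{npT})$. The projection/Cauchy--Schwarz argument above is what lets this aggregate pass through the sieve fit while losing only the unavoidable $n^{-1/2}$, and the delicate point is verifying the bounded-leverage property $\sup_{\bs}u(\bs)'(\bU'\bU)^{-1}u(\bs) = O_p(1/n)$ for the chosen tensor-product basis, so that no spurious $J_n$ factor enters the stochastic rate.
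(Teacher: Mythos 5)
Your decomposition of the sieve fit into a signal part and a noise part, and your treatment of the bias term via the H\"older class together with the scaling $|a_j|_\infty \asymp n^{-\delta/2}p^{1/2-\gamma/2}$ and standard approximation rates, match the paper's argument (which invokes Theorems 12.6--12.8 of Schumaker). Where you diverge is the stochastic term, and that is where there is a genuine gap. Your bound rests on the claim $\sup_{\bs} u(\bs)'(\bU'\bU)^{-1}u(\bs) = O_p(1/n)$, but this cannot hold once $J_n \to \infty$: since
\[
\sum_{k=1}^{n} u(\bs_k)'(\bU'\bU)^{-1}u(\bs_k) = \mathrm{tr}\left( \bU(\bU'\bU)^{-1}\bU' \right) = J_n,
\]
the \emph{average} leverage over the design points is already $J_n/n$, so the supremum is at least $J_n/n$. (Concretely, for a quasi-uniform design and a local tensor-product B-spline basis one has $\lambda_{\min}(\bU'\bU)\asymp n/J_n$, not $n$; orthonormalizing the basis instead inflates $\sup_{\bs}\norm{u(\bs)}_2^2$ to order $J_n$, so the product is $J_n/n$ either way.) Your Cauchy--Schwarz step is otherwise sound, but it then delivers $\norm{\hat g_j^{(e)}}_\infty = O_p(\sqrt{J_n\,p\,\Delta_{npT}})$, which is off by a factor $\sqrt{J_n}$ from the claimed second term.

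The paper avoids this loss by taking a different route to the noise bound: the preceding lemma asserts the \emph{uniform} (sup-norm) control $|e_{a,j}(\bs)|_\infty^2 = O_p(p\,\Delta_{npT})$ on the estimation error itself, and the proof then writes $\bP\hat a_j - a_j = (\bP a_j - a_j) + (\bP e_{a,j} - e_{a,j}) + e_{a,j}$ and uses the sup-norm boundedness of the sieve projection $\bP$ (bounded Lebesgue constant), so that $\norm{\bP e_{a,j}}_\infty$ is controlled by $\norm{e_{a,j}}_\infty$ with no $J_n$ factor. Your approach deliberately uses only the aggregate $\ell_2$ information $\norm{E_{A,\cdot j}}_2^2 = O_p(np\,\Delta_{npT})$, which is an attractive idea precisely because the errors are neither independent nor mean-zero, but the price of passing from an $\ell_2$ aggregate to a pointwise bound through the hat matrix is exactly the maximal leverage, and that is irreducibly of order $J_n/n$. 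To repair the argument you would either need to import the uniform bound on $e_{a,j}$ as the paper does, or accept the extra $\sqrt{J_n}$ in the stated rate.
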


\begin{proof}
Following Theorem 12.6, 12.7 and 12.8 in \cite{schumaker2007spline}, we have $\norm{\hat{g}_j(\bs) - a_j(\bs)}_{\infty} = \norm{\bP \hat{a}_j(\bs) - a_j(\bs)} \le \norm{\bP a(\bs) - a_j(\bs)} + \norm{ \bP e_{a,j}(\bs) - e_{a,j}(\bs) } + \norm{ e_{a,j}(\bs) } = O_p(J_n^{-\kappa} n^{-\delta/2} p^{1/2-\gamma/2}) + O_p(\sqrt{p\Delta_{npT}})$.
\end{proof}

\begin{theorem}
\begin{equation}
\frac{1}{pT} \norm{\hat{\bxi}(\bs_0) - \bxi(\bs_0)}^2_2 = O_p(J_n^{-2\kappa} n^{-\delta} p^{-\gamma} + \Delta_{npT} + 1/T)
\end{equation}
\end{theorem}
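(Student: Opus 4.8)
The plan is to write the pointwise prediction error at the new site as a three–term telescoping sum, one summand for the error in each estimated ingredient $\hat{\bB}$, $\hat{\bX}_t$, $\hat{\ba}(\bs_0)$, and to bound each using the rates already established. Interpret $\bxi(\bs_0)=(\bxi_1(\bs_0),\ldots,\bxi_T(\bs_0))$ as the $p\times T$ signal matrix, so that $\norm{\hat{\bxi}(\bs_0)-\bxi(\bs_0)}_2^2\le\norm{\hat{\bxi}(\bs_0)-\bxi(\bs_0)}_F^2=\sum_{t=1}^{T}\norm{\hat{\bxi}_t(\bs_0)-\bxi_t(\bs_0)}_2^2$, and it suffices to control $\frac{1}{pT}\sum_t\norm{\hat{\bxi}_t(\bs_0)-\bxi_t(\bs_0)}_2^2$. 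Working in the re-estimated parametrization (in which $\bX_t$ and $\bW_t$ are $O_p(1)$ while the factor strength is carried by $\bA$, hence by $\ba(\bs_0)$), and inserting the rotation $\bH$ of the re-estimation lemma so that $\hat{\bW}'\approx\bW'\bH$ with $\hat{\bB}$ and $\hat{\ba}(\bs_0)$ aligned compatibly, I would telescope
\begin{equation*}
\hat{\bB}\hat{\bX}_t'\hat{\ba}(\bs_0)-\bB\bX_t'\ba(\bs_0) = (\hat{\bB}-\bB)\hat{\bX}_t'\hat{\ba}(\bs_0) + \bB(\hat{\bX}_t-\bX_t)'\hat{\ba}(\bs_0) + \bB\bX_t'(\hat{\ba}(\bs_0)-\ba(\bs_0)),
\end{equation*}
denoting the three summands $\mathrm{D}_{1t},\mathrm{D}_{2t},\mathrm{D}_{3t}$, and then use $\norm{a+b+c}^2\le 3(\norm{a}^2+\norm{b}^2+\norm{c}^2)$.

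The decisive term is $\mathrm{D}_{3t}$, the sieve term, which produces the signature rate. From $\norm{\mathrm{D}_{3t}}_2\le\norm{\bB}_2\norm{\bX_t}_2\norm{\hat{\ba}(\bs_0)-\ba(\bs_0)}_2$, $\norm{\bX_t}_2=O_p(1)$, and the pointwise sieve bound $|\hat{g}_j(\bs_0)-a_j(\bs_0)|^2=O_p(J_n^{-2\kappa}n^{-\delta}p^{1-\gamma})+O_p(p\Delta_{npT})$ for each of the $d$ (fixed) coordinates, one gets $\norm{\hat{\ba}(\bs_0)-\ba(\bs_0)}_2^2=O_p(J_n^{-2\kappa}n^{-\delta}p^{1-\gamma}+p\Delta_{npT})$. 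Summing over $t$ (a factor $T$) and dividing by $pT$ converts the sieve bias into $J_n^{-2\kappa}n^{-\delta}p^{-\gamma}$ and the propagated estimation error into $\Delta_{npT}$, i.e.\ the first two terms of the claim.

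For $\mathrm{D}_{1t}$ and $\mathrm{D}_{2t}$ I would invoke Proposition~\ref{thm:reest_A_err_bnd} and Theorem~\ref{thm:reest_X_err_bnd}, the spectral rate $\norm{\hat{\bB}-\bB}_2=O_p(n^{\delta}p^{\gamma}T^{-1/2})$ read off from the signal-bound argument, and the magnitude $\norm{\ba(\bs_0)}_2^2=O(n^{-\delta}p^{1-\gamma})$ from the H\"older-class lemma. For $\mathrm{D}_{1t}$, using $\sum_t\norm{\hat{\bX}_t}_2^2=O_p(T)$ and $\norm{\hat{\ba}(\bs_0)}_2^2=O(n^{-\delta}p^{1-\gamma})$ gives $\frac{1}{pT}\sum_t\norm{\mathrm{D}_{1t}}_2^2\le\norm{\hat{\bB}-\bB}_2^2\,n^{-\delta}p^{-\gamma}O_p(1)=O_p(n^{\delta}p^{\gamma}T^{-1})=O_p(\Delta_{npT})$, the first term of $\Delta_{npT}$. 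For $\mathrm{D}_{2t}$, $\sum_t\norm{\hat{\bX}_t-\bX_t}_2^2\approx\norm{\hat{\bW}'-\bW'}_F^2=O_p\big(rT(\Delta_{npT}+n^{\delta}p^{\gamma}\Delta_{npT}^2)\big)$, so after multiplying by $\norm{\ba(\bs_0)}_2^2=O(n^{-\delta}p^{1-\gamma})$ and dividing by $pT$ the spurious $n^{\delta}p^{\gamma}$ in the $\hat{\bW}$ rate is cancelled by $n^{-\delta}p^{-\gamma}$, leaving $O_p(n^{-\delta}p^{-\gamma}\Delta_{npT}+\Delta_{npT}^2)=O_p(\Delta_{npT})$. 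Any leftover cross terms and the rotation correction $\norm{\bH-\bI_d}$ contribute at most $O_p(1/T)$, which is dominated by $\Delta_{npT}$ (since $n^{\delta}p^{\gamma}T^{-1}\ge T^{-1}$) but is retained in the statement; collecting the three pieces yields the bound.

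The main obstacle is the simultaneous bookkeeping of the factor-strength exponents $\delta,\gamma$ and the rotation matrices. Because prediction is at a single point, one must pass from the averaged/aggregate rates in which the cited lemmas are stated to the pointwise quantity $\hat{\ba}(\bs_0)-\ba(\bs_0)$, which is exactly what the sup-norm sieve bound and the magnitude lemma $|a_j(\bs)|^2_\infty\asymp n^{-\delta}p^{1-\gamma}$ supply. The subtle point is that the $O_p(1)$ per-time-point noise floor from the signal estimation enters $\mathrm{D}_{1t}$ and $\mathrm{D}_{2t}$ only multiplied by $\norm{\ba(\bs_0)}_2^2=O(n^{-\delta}p^{1-\gamma})$, so after the $p^{-1}$ normalization it is damped to $O_p(n^{-\delta}p^{-\gamma})$ inside $\Delta_{npT}$ rather than surviving at order one; making all these scalings line up, rather than any single inequality, is where the real care is needed.
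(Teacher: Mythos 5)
Your proposal is correct and follows essentially the same route as the paper: the same three-term telescoping over the errors in $\hat{\bB}$, $\hat{\bW}$ (equivalently $\hat{\bX}_t$) and $\hat{\ba}(\bs_0)$, invoking the same sup-norm sieve bound, the magnitude lemma $|a_j(\bs)|^2_\infty \asymp n^{-\delta}p^{1-\gamma}$, and the re-estimation rate for $\hat{\bW}$, with each piece landing on the same rate ($J_n^{-2\kappa}n^{-\delta}p^{-\gamma}+\Delta_{npT}$, $\Delta_{npT}$, and a term absorbed into $\Delta_{npT}+1/T$). The only differences are cosmetic: the paper stacks over $t$ using Kronecker products with the orthonormalized $\bQ_B$ while you sum per time point, and you assign the residual $O_p(1/T)$ to the rotation/cross terms rather than to the $\hat{\bQ}_B-\bQ_B$ term, which does not affect the final bound.
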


\begin{proof}
Let $\bxi'_t(\bs_0) = \ba'(\bs_0) \bX_t \bB = \ba'(\bs_0) \bX_t \bR_{B}' \bQ_{B}'$

$\bxi'(\bs_0) = (\bxi'_1(\bs_0) \cdots \bxi'_T(\bs_0)) = (\ba'(\bs_0) \bX_1 \bR_{B}' \bQ_{B}' \cdots \ba'(\bs_0) \bX_T \bR_{B}' \bQ_{B}') = \ba'(\bs_0) \bW (\bI_T \otimes \bQ_B')$.

$\hat{\bxi}'(\bs_0) = \hat{\bg}'(\bs_0) \hat{\bW} (\bI_T \otimes \hat{\bQ}_B')$.

$\hat{\bxi}'(\bs_0) - \bxi'(\bs_0) = \hat{\bg}'(\bs_0) \hat{\bW}' (\bI_T \otimes \hat{\bQ}_B') - \ba'(\bs_0) \bW' (\bI_T \otimes \bQ_B')$.

\begin{eqnarray*}
\hat{\bxi}(\bs_0) - \bxi(\bs_0) & = & (\bI_T \otimes \hat{\bQ}_B) \hat{\bW}' \hat{\bg}(\bs_0)   - (\bI_T \otimes \bQ_B) \bQ_W \ba(\bs_0) \\
& = & (\bI_T \otimes \hat{\bQ}_B) \left( \hat{\bW}' \hat{\bg}(\bs_0) - \bQ_W \ba(\bs_0) \right) + \left( \bI_T \otimes (\hat{\bQ}_B - {\bQ}_B) \right) \bQ_W \ba(\bs_0) \\
& = & (\bI_T \otimes \hat{\bQ}_B) \hat{\bW}' \left( \hat{\bg}(\bs_0) - \ba(\bs_0) \right) + (\bI_T \otimes \hat{\bQ}_B)\left( \hat{\bW}' - \bW' \right) \ba(\bs_0) + \left( \bI_T \otimes (\hat{\bQ}_B - {\bQ}_B) \right) \bW' \ba(\bs_0)  
\end{eqnarray*}

\begin{eqnarray*}
\frac{1}{\sqrt{T}} \norm{(\bI_T \otimes \hat{\bQ}_B) \hat{\bW}' \left( \hat{\bg}(\bs_0) - \ba(\bs_0) \right)}_2 & \le & \frac{1}{\sqrt{T}} \norm{\bI_T \otimes \hat{\bQ}_B}_2 \norm{\hat{\bW}'}_2 \norm{\hat{\bg}(\bs_0) - \ba(\bs_0)}_2 \\
& = & O_p(J_n^{-\kappa} n^{-\delta/2} p^{1/2-\gamma/2} + \sqrt{p\Delta_{npT}}) \\
\frac{1}{T} \norm{(\bI_T \otimes \hat{\bQ}_B)\left( \hat{\bW}' - \bW' \right) \ba(\bs_0)}^2_2 & \le & \frac{1}{T} \norm{\bI_T \otimes \hat{\bQ}_B}^2_2 \norm{\hat{\bW}' - \bW'}_2 \norm{\ba(\bs_0)}^2_2 \\
& = & O_p\left(\Delta_{npT} + n^{\delta} p^{\gamma}\Delta_{npT}^2\right) O_p(n^{-\delta}p^{1-\gamma}) \\
& = & O_p( n^{-\delta}p^{1-\gamma}\Delta_{npT} + p \Delta_{npT}^2  ) \\
\frac{1}{\sqrt{T}} \norm{\left( \bI_T \otimes (\hat{\bQ}_B - {\bQ}_B) \right) \bW' \ba(\bs_0)}_2 & \le & \frac{1}{\sqrt{T}} \norm{\bI_T \otimes (\hat{\bQ}_B - {\bQ}_B)}_2 \norm{\bW'}_2 \norm{\ba(\bs_0)}_2 \\
& = & O_p(n^{\delta/2}p^{\gamma/2}T^{-1/2})O_p(n^{-\delta/2}p^{1/2-\gamma/2}) \\
& = & O_p(\sqrt{p/T})
\end{eqnarray*}

Thus, 
\begin{equation}
\frac{1}{pT} \norm{\hat{\bxi}(\bs_0) - \bxi(\bs_0)}^2_2 = O_p(J_n^{-2\kappa} n^{-\delta} p^{-\gamma} + \Delta_{npT} + 1/T).
\end{equation}

\end{proof}

\section{Tables and Plots}  \label{appendix:tableplots}

\begin{table}[htpb!]
\centering
\caption{Mean and standard deviations (in parentheses) of the estimated accuracy measured by $\calD(\hat{\cdot}, \cdot)$ for spatial and variable loading matrices. All numbers in the table are 10 times the true numbers for clear representation. The results are based on 200 simulations.}
\label{table:spdist_msd_table}
\resizebox{\textwidth}{!}{%
\begin{tabular}{ccc|ccccc|ccccl}
\hline
\multicolumn{3}{c|}{} & \multicolumn{5}{c|}{$\gamma = 0$} & \multicolumn{5}{c}{$\gamma=0.5$} \\ \hline
T & p & n & $\calD(\hat{\bA}_1, \bA_1)$ & $\calD(\hat{\bA}_2, \bA_2)$ & Average & $\calD(\hat{\bA}, \bA)$ & $\calD(\hat{\bB}, \bB)$ & $\calD(\hat{\bA}_1, \bA_1)$ & $\calD(\hat{\bA}_2, \bA_2)$ & Average & $\calD(\hat{\bA}, \bA)$ & $\calD(\hat{\bB}, \bB)$ \\ \hline
60 & 10 & 50 & 0.68(0.1) & 0.67(0.1) & 0.68(0.08) & 0.67(0.07) & 0.53(0.11) & 1.27(0.19) & 1.25(0.21) & 1.26(0.16) & 1.25(0.15) & 0.69(0.14) \\
120 & 10 & 50 & 0.45(0.06) & 0.46(0.06) & 0.45(0.05) & 0.45(0.04) & 0.5(0.12) & 0.83(0.12) & 0.84(0.12) & 0.84(0.09) & 0.84(0.08) & 0.63(0.13) \\
240 & 10 & 50 & 0.31(0.04) & 0.31(0.04) & 0.31(0.03) & 0.31(0.02) & 0.49(0.11) & 0.57(0.07) & 0.57(0.08) & 0.57(0.05) & 0.57(0.04) & 0.6(0.13) \\ \hdashline
60 & 20 & 50 & 0.5(0.07) & 0.5(0.09) & 0.5(0.06) & 0.5(0.06) & 0.52(0.08) & 1.18(0.21) & 1.18(0.24) & 1.18(0.17) & 1.17(0.15) & 0.69(0.1) \\
120 & 20 & 50 & 0.34(0.05) & 0.34(0.05) & 0.34(0.03) & 0.34(0.03) & 0.5(0.07) & 0.79(0.12) & 0.79(0.12) & 0.79(0.09) & 0.78(0.08) & 0.6(0.08) \\
240 & 20 & 50 & 0.23(0.03) & 0.23(0.03) & 0.23(0.02) & 0.23(0.02) & 0.47(0.06) & 0.52(0.07) & 0.52(0.07) & 0.52(0.05) & 0.52(0.05) & 0.54(0.06) \\ \hdashline
60 & 40 & 50 & 0.32(0.06) & 0.32(0.05) & 0.32(0.04) & 0.32(0.04) & 0.49(0.07) & 0.98(0.21) & 0.95(0.19) & 0.96(0.15) & 0.95(0.13) & 0.67(0.07) \\
120 & 40 & 50 & 0.21(0.03) & 0.21(0.03) & 0.21(0.02) & 0.21(0.02) & 0.48(0.05) & 0.63(0.1) & 0.62(0.1) & 0.63(0.08) & 0.62(0.07) & 0.58(0.06) \\
240 & 40 & 50 & 0.15(0.02) & 0.14(0.02) & 0.14(0.01) & 0.14(0.01) & 0.46(0.05) & 0.42(0.06) & 0.41(0.06) & 0.41(0.04) & 0.41(0.03) & 0.53(0.06) \\ \hline
60 & 10 & 100 & 0.63(0.06) & 0.63(0.07) & 0.63(0.05) & 0.63(0.05) & 0.36(0.07) & 1.13(0.12) & 1.13(0.13) & 1.13(0.1) & 1.13(0.09) & 0.48(0.09) \\
120 & 10 & 100 & 0.43(0.04) & 0.43(0.04) & 0.43(0.03) & 0.43(0.03) & 0.35(0.07) & 0.77(0.08) & 0.77(0.07) & 0.77(0.05) & 0.77(0.05) & 0.44(0.08) \\
240 & 10 & 100 & 0.3(0.03) & 0.3(0.03) & 0.3(0.02) & 0.3(0.02) & 0.34(0.07) & 0.54(0.05) & 0.53(0.05) & 0.54(0.03) & 0.54(0.03) & 0.41(0.08) \\ \hdashline
60 & 20 & 100 & 0.47(0.05) & 0.47(0.05) & 0.47(0.04) & 0.47(0.04) & 0.35(0.05) & 1.01(0.11) & 1.02(0.11) & 1.01(0.08) & 1.01(0.08) & 0.47(0.06) \\
120 & 20 & 100 & 0.32(0.03) & 0.32(0.03) & 0.32(0.02) & 0.32(0.02) & 0.34(0.05) & 0.68(0.07) & 0.68(0.07) & 0.68(0.05) & 0.68(0.05) & 0.41(0.05) \\
240 & 20 & 100 & 0.22(0.02) & 0.22(0.02) & 0.22(0.01) & 0.22(0.01) & 0.32(0.05) & 0.47(0.04) & 0.47(0.04) & 0.47(0.03) & 0.47(0.03) & 0.37(0.05) \\ \hdashline
60 & 40 & 100 & 0.29(0.03) & 0.29(0.03) & 0.29(0.02) & 0.29(0.02) & 0.34(0.04) & 0.77(0.1) & 0.77(0.1) & 0.77(0.07) & 0.77(0.07) & 0.47(0.04) \\
120 & 40 & 100 & 0.2(0.02) & 0.2(0.02) & 0.2(0.01) & 0.2(0.01) & 0.32(0.04) & 0.52(0.05) & 0.51(0.05) & 0.52(0.04) & 0.52(0.04) & 0.4(0.04) \\
240 & 40 & 100 & 0.14(0.01) & 0.14(0.01) & 0.14(0.01) & 0.14(0.01) & 0.32(0.03) & 0.35(0.03) & 0.36(0.03) & 0.35(0.02) & 0.35(0.02) & 0.35(0.04) \\ \hline
60 & 10 & 200 & 0.63(0.05) & 0.62(0.05) & 0.63(0.04) & 0.63(0.04) & 0.26(0.06) & 1.11(0.08) & 1.1(0.08) & 1.1(0.07) & 1.1(0.07) & 0.33(0.07) \\
120 & 10 & 200 & 0.43(0.03) & 0.43(0.03) & 0.43(0.02) & 0.43(0.02) & 0.25(0.05) & 0.77(0.05) & 0.76(0.05) & 0.77(0.04) & 0.77(0.04) & 0.31(0.06) \\
240 & 10 & 200 & 0.3(0.02) & 0.3(0.02) & 0.3(0.01) & 0.3(0.01) & 0.24(0.05) & 0.54(0.03) & 0.54(0.03) & 0.54(0.02) & 0.54(0.02) & 0.29(0.06) \\ \hdashline
60 & 20 & 200 & 0.47(0.04) & 0.47(0.04) & 0.47(0.03) & 0.47(0.03) & 0.25(0.03) & 0.99(0.07) & 0.98(0.07) & 0.98(0.06) & 0.98(0.06) & 0.34(0.05) \\
120 & 20 & 200 & 0.32(0.02) & 0.32(0.02) & 0.32(0.02) & 0.32(0.02) & 0.24(0.04) & 0.68(0.05) & 0.67(0.04) & 0.67(0.04) & 0.67(0.03) & 0.29(0.04) \\
240 & 20 & 200 & 0.22(0.01) & 0.22(0.01) & 0.22(0.01) & 0.22(0.01) & 0.23(0.03) & 0.47(0.03) & 0.47(0.03) & 0.47(0.02) & 0.47(0.02) & 0.26(0.04) \\ \hdashline
60 & 40 & 200 & 0.29(0.03) & 0.29(0.02) & 0.29(0.02) & 0.29(0.02) & 0.24(0.03) & 0.73(0.06) & 0.73(0.05) & 0.73(0.05) & 0.73(0.05) & 0.33(0.04) \\
120 & 40 & 200 & 0.2(0.01) & 0.2(0.01) & 0.2(0.01) & 0.2(0.01) & 0.23(0.02) & 0.5(0.03) & 0.5(0.03) & 0.5(0.03) & 0.5(0.03) & 0.28(0.03) \\
240 & 40 & 200 & 0.14(0.01) & 0.14(0.01) & 0.14(0.01) & 0.14(0.01) & 0.22(0.02) & 0.35(0.02) & 0.35(0.02) & 0.35(0.01) & 0.35(0.01) & 0.25(0.03) \\ \hline
60 & 10 & 400 & 0.61(0.04) & 0.61(0.04) & 0.61(0.04) & 0.61(0.04) & 0.18(0.04) & 1.08(0.07) & 1.08(0.07) & 1.08(0.06) & 1.08(0.06) & 0.24(0.05) \\
120 & 10 & 400 & 0.42(0.02) & 0.42(0.02) & 0.42(0.02) & 0.42(0.02) & 0.17(0.04) & 0.75(0.04) & 0.75(0.04) & 0.75(0.03) & 0.75(0.03) & 0.22(0.05) \\
240 & 10 & 400 & 0.3(0.01) & 0.3(0.01) & 0.3(0.01) & 0.3(0.01) & 0.17(0.04) & 0.52(0.02) & 0.53(0.02) & 0.53(0.02) & 0.53(0.02) & 0.2(0.04) \\ \hdashline
60 & 20 & 400 & 0.46(0.03) & 0.46(0.03) & 0.46(0.03) & 0.46(0.03) & 0.18(0.03) & 0.95(0.05) & 0.95(0.06) & 0.95(0.05) & 0.95(0.05) & 0.24(0.04) \\
120 & 20 & 400 & 0.31(0.02) & 0.31(0.02) & 0.31(0.01) & 0.31(0.01) & 0.17(0.02) & 0.65(0.04) & 0.65(0.03) & 0.65(0.03) & 0.65(0.03) & 0.2(0.03) \\
240 & 20 & 400 & 0.22(0.01) & 0.22(0.01) & 0.22(0.01) & 0.22(0.01) & 0.16(0.02) & 0.46(0.02) & 0.46(0.02) & 0.46(0.01) & 0.46(0.01) & 0.18(0.03) \\ \hdashline
60 & 40 & 400 & 0.29(0.02) & 0.29(0.02) & 0.29(0.02) & 0.29(0.02) & 0.17(0.02) & 0.7(0.04) & 0.7(0.05) & 0.7(0.04) & 0.7(0.04) & 0.24(0.02) \\
120 & 40 & 400 & 0.19(0.01) & 0.19(0.01) & 0.19(0.01) & 0.19(0.01) & 0.16(0.02) & 0.49(0.02) & 0.48(0.02) & 0.48(0.02) & 0.48(0.02) & 0.2(0.02) \\
240 & 40 & 400 & 0.13(0.01) & 0.13(0.01) & 0.13(0) & 0.13(0) & 0.16(0.02) & 0.34(0.02) & 0.34(0.01) & 0.34(0.01) & 0.34(0.01) & 0.18(0.02) \\ \hline
\end{tabular}%
}
\end{table}

\begin{table}[htpb!]
\centering
\caption{Mean and standard deviations (in parentheses) of the mean squared prediction errors (MSPE).}
\label{table:STprediction}
\resizebox{\textwidth}{!}{%
\begin{tabular}{ccc|c|cc|cc}
\hline
 &  &  & Spatial & \multicolumn{2}{c|}{Temporal MAR(1)} & \multicolumn{2}{c}{Temporal VAR(1)} \\ \hline
T & p & n & $MSPE(\hat{\by}_t(\bs_0)))$ & $MSPE(\hat{\by}_{t+1}(\bs)))$ & $MSPE(\hat{\by}_{t+2}(\bs)))$ & $MSPE(\hat{\by}_{t+1}(\bs)))$ & $MSPE(\hat{\by}_{t+2}(\bs)))$ \\ \hline
60 & 10 & 50 & 0.486(0.089) & 1.716(1.064) & 1.823(1.201) & 1.825(1.075) & 2.019(1.257) \\
120 & 10 & 50 & 0.471(0.06) & 1.658(1.121) & 1.634(1.116) & 1.705(1.133) & 1.732(1.144) \\
240 & 10 & 50 & 0.47(0.041) & 1.78(1.079) & 1.588(1.244) & 1.802(1.076) & 1.624(1.229) \\ \hdashline
60 & 20 & 50 & 0.424(0.069) & 1.592(1.004) & 1.657(1.033) & 1.69(1.032) & 1.819(1.061) \\
120 & 20 & 50 & 0.424(0.048) & 1.535(0.972) & 1.547(1.111) & 1.575(0.983) & 1.634(1.128) \\
240 & 20 & 50 & 0.419(0.036) & 1.619(0.985) & 1.426(1.05) & 1.64(0.988) & 1.463(1.047) \\ \hdashline
60 & 40 & 50 & 0.537(0.085) & 2.001(1.237) & 2.101(1.353) & 2.13(1.276) & 2.308(1.39) \\
120 & 40 & 50 & 0.534(0.055) & 2.006(1.345) & 1.94(1.286) & 2.065(1.36) & 2.051(1.296) \\
240 & 40 & 50 & 0.53(0.037) & 2.141(1.434) & 1.834(1.237) & 2.162(1.432) & 1.877(1.23) \\ \hline
60 & 10 & 100 & 0.067(0.009) & 1.597(0.966) & 1.647(1.006) & 1.685(0.969) & 1.82(1.03) \\
120 & 10 & 100 & 0.066(0.006) & 1.564(0.984) & 1.502(0.95) & 1.608(0.997) & 1.593(0.973) \\
240 & 10 & 100 & 0.065(0.004) & 1.631(0.92) & 1.476(1.02) & 1.65(0.915) & 1.514(1.015) \\ \hdashline
60 & 20 & 100 & 0.058(0.008) & 1.466(0.876) & 1.508(0.901) & 1.557(0.891) & 1.663(0.926) \\
120 & 20 & 100 & 0.058(0.005) & 1.45(0.883) & 1.403(0.915) & 1.489(0.891) & 1.478(0.922) \\
240 & 20 & 100 & 0.058(0.004) & 1.491(0.856) & 1.317(0.864) & 1.51(0.854) & 1.353(0.859) \\ \hdashline
60 & 40 & 100 & 0.072(0.01) & 1.845(1.075) & 1.893(1.105) & 1.975(1.113) & 2.085(1.126) \\
120 & 40 & 100 & 0.072(0.006) & 1.889(1.229) & 1.765(1.076) & 1.939(1.247) & 1.859(1.077) \\
240 & 40 & 100 & 0.072(0.005) & 1.961(1.223) & 1.707(1.074) & 1.984(1.22) & 1.754(1.068) \\ \hline
60 & 10 & 200 & 0.015(0.002) & 1.542(0.922) & 1.597(0.972) & 1.629(0.921) & 1.766(1) \\
120 & 10 & 200 & 0.015(0.001) & 1.515(0.976) & 1.454(0.913) & 1.557(0.982) & 1.538(0.934) \\
240 & 10 & 200 & 0.015(0.001) & 1.599(0.915) & 1.42(0.988) & 1.619(0.912) & 1.458(0.988) \\ \hdashline
60 & 20 & 200 & 0.013(0.002) & 1.419(0.86) & 1.461(0.88) & 1.51(0.88) & 1.61(0.897) \\
120 & 20 & 200 & 0.013(0.001) & 1.401(0.853) & 1.358(0.88) & 1.44(0.861) & 1.429(0.883) \\
240 & 20 & 200 & 0.013(0.001) & 1.464(0.859) & 1.276(0.84) & 1.481(0.86) & 1.308(0.838) \\ \hdashline
60 & 40 & 200 & 0.015(0.002) & 1.786(1.04) & 1.836(1.099) & 1.906(1.066) & 2.02(1.122) \\
120 & 40 & 200 & 0.015(0.001) & 1.828(1.211) & 1.714(1.042) & 1.875(1.22) & 1.808(1.049) \\
240 & 40 & 200 & 0.015(0.001) & 1.92(1.214) & 1.652(1.031) & 1.941(1.213) & 1.698(1.027) \\ \hline
60 & 10 & 400 & 0.014(0.002) & 1.63(0.965) & 1.714(1.033) & 1.727(0.965) & 1.893(1.059) \\
120 & 10 & 400 & 0.014(0.001) & 1.63(1.058) & 1.556(0.975) & 1.676(1.069) & 1.647(1.009) \\
240 & 10 & 400 & 0.014(0.001) & 1.711(0.985) & 1.527(1.077) & 1.728(0.983) & 1.568(1.075) \\ \hdashline
60 & 20 & 400 & 0.012(0.002) & 1.511(0.914) & 1.561(0.926) & 1.611(0.936) & 1.719(0.949) \\
120 & 20 & 400 & 0.012(0.001) & 1.502(0.923) & 1.452(0.934) & 1.543(0.931) & 1.534(0.945) \\
240 & 20 & 400 & 0.012(0.001) & 1.569(0.929) & 1.373(0.915) & 1.589(0.931) & 1.407(0.912) \\ \hdashline
60 & 40 & 400 & 0.015(0.002) & 1.907(1.108) & 1.964(1.166) & 2.033(1.14) & 2.159(1.181) \\
120 & 40 & 400 & 0.015(0.001) & 1.967(1.319) & 1.831(1.107) & 2.021(1.334) & 1.937(1.117) \\
240 & 40 & 400 & 0.015(0.001) & 2.062(1.314) & 1.775(1.118) & 2.086(1.31) & 1.823(1.111) \\ \hline
\end{tabular}%
}
\end{table}
\end{appendices}

\end{document}